\titleformat{\part}[display]
{\normalfont\LARGE\bfseries}{}{0pt}{}
\declaretheorem{theorem}
\newtheorem{lemma}{Lemma}
\newtheorem{assumption}{Assumption}
\newcommand{\neutralize}[1]{\expandafter\let\csname c@#1\endcsname\count@}
\declaretheoremstyle[
  spaceabove = 8pt,
  spacebelow = 8pt,
  headfont=\color{black}\bfseries,
  bodyfont=\normalfont,
  qed=$\blacksquare$,
  ]{remark_style}
\declaretheorem[style=remark_style]{remark}
\newtheoremstyle{break}
  {\topsep}{\topsep}%
  {\itshape}{}%
  {\bfseries}{}%
  {\newline}{}%
\theoremstyle{break}
\newtheorem{algorithm}{Algorithm}
\DeclareMathOperator*{\E}{\mathbb{E}}
\newcommand{\Var}{{\rm Var}}
\DeclareMathOperator*{\argmin}{arg\,min}
\renewcommand{\Pr}{{\mathrm{P}}}
\DeclarePairedDelimiter\abs{\lvert}{\rvert}%
\DeclarePairedDelimiter\norm{\lVert}{\rVert}%
\let\oldabs\abs
\def\abs{\@ifstar{\oldabs}{\oldabs*}}
\let\oldnorm\norm
\def\norm{\@ifstar{\oldnorm}{\oldnorm*}}
\newcommand{\sss}{\scriptscriptstyle}
\newcommand{\ind}{\mathds{1}} 
\newabbrev\cdf{\scalebox{0.9}{CDF}}
\renewcommand\maybe@space@{%
  \maybe@ictrue 
  \expandafter   \@tfor
    \expandafter \reserved@a
    \expandafter :%
    \expandafter =%
                 \nospacelist
                 \do \t@st@ic
  \ifmaybe@ic 
    \space
  \fi
}
\def\boxit#1{\vbox{\hrule\hbox{\vrule\kern6pt
            \vbox{\kern6pt#1\kern6pt}\kern6pt\vrule}\hrule}}
\title{\bf \Large A Powerful Bootstrap Test of Independence in High Dimensions\thanks{We thank Fang Han and Thomas Nagler for helpful comments. The authors gratefully acknowledge financial support from the European Research Council (Starting Grant No. 852332).}}
\author{
  \normalsize Mauricio Olivares\\ 
  \normalsize Department of Statistics\\
  \normalsize LMU Munich\\
  \normalsize \url{m.olivares@lmu.de}
  \and
  \normalsize Tomasz Olma\\ 
  \normalsize Department of Statistics\\
  \normalsize LMU Munich\\
  \normalsize \url{t.olma@lmu.de}
  \and
  \normalsize Daniel Wilhelm\\ 
  \normalsize Departments of Statistics and\\
  \normalsize Economics\\
  \normalsize LMU Munich\\
  \normalsize \url{d.wilhelm@lmu.de}
}
\begin{document}

\date{\today} 
\maketitle
\thispagestyle{empty}

\begin{abstract} 
    This paper proposes a nonparametric test of pairwise independence of one random variable from a large pool of other random variables. The test statistic is the maximum of several Chatterjee's rank correlations and critical values are computed via a block multiplier bootstrap. We show in simulations that other popular tests based on distance covariances do not necessarily control size under this null. Our test, on the other hand, is shown to asymptotically control size uniformly over a large class of data-generating processes, even when the number of variables is much larger than sample size. The test is consistent against any fixed alternative. It can be combined with a stepwise procedure for selecting those variables from the pool that violate independence, while controlling the family-wise error rate. All formal results leave the dependence among variables in the pool completely unrestricted. In simulations, we find that our test is typically more powerful than competing methods (in settings where they are valid), particularly in high-dimensional scenarios or when there is dependence among variables in the pool.
\end{abstract}
\vspace{5mm}
\noindent {\bf Keywords:} block multiplier bootstrap, Chatterjee's rank correlation, family-wise error rate, high-dimensional data, independence test.


\newgeometry{tmargin=2cm,lmargin=2cm,rmargin=2cm}




\section{Introduction}
\label{sec:introduction}

This paper is concerned with nonparametric testing of pairwise independence between a random variable $X$ and many other random variables $Y_1,\ldots,Y_p$,
$$H_0\colon Y_j\perp X\:\text{ for all }\:j\in\{1,\dots,p\}, $$
against the alternative $H_1$, which is the negation of $H_0$. The goal is to propose a powerful test of $H_0$ allowing for $p$ to be much larger than the sample size while at the same time not restricting the dependence among $Y_1,\ldots,Y_p$ in any way. In a second step, we want to combine the new test with a stepwise procedure for screening out variables from $Y_1,\ldots,Y_p$ that violate independence so as to control the family-wise error rate.

There are many applied examples in which testing $H_0$ and, in particular, screening out variables that violate independence is of interest. For instance, in causal inference, one might want to test whether a treatment indicator has an effect on various outcomes and then select those outcomes on which there is an effect. Such a test could also be applied to ``placebo'' outcomes, i.e. pre-treatment outcomes that the researcher knows cannot have been affected by the treatment, to validate unconfoundedness assumptions. Another example concerns testing fairness of machine learning predictions, where one might want to test independence of a prediction from a set of protected characteristics. As a final example, one might want to test independence of a measure of environmental exposure (e.g., whether or not a person smokes) from a vector of genetic markers. In our empirical application, we use the proposed test for identifying genes whose transcript levels oscillate during the cell cycle.

As test statistic we consider the maximum of $p$ rank correlation coefficients by \cite{chatterjee2021new} for testing independence between $Y_j$ and $X$, $j=1,\ldots,p$. Critical values for the test are computed via a block multiplier bootstrap that perturbs an asymptotically linear representation of the rank correlations. We then show that, as the sample size grows, the proposed test controls size uniformly over a large class of data-generating processes. This result allows for the dimension $p$ to grow exponentially in sample size and the dependence among $Y_1,\ldots,Y_p$ is left completely unrestricted.

While it has been shown that the nonparametric bootstrap does not consistently estimate the distribution of a single Chatterjee's rank correlation \citep{Lin:2024ui}, the proposed block multiplier bootstrap achieves this by accounting for the dependence of the summands in the asymptotically linear representation. The existence of an asymptotic linear representation for Chatterjee's rank correlation together with suitably strong control of the remainder terms allows us to employ recent results for the approximation of maxima of sums of high-dimensional vectors \citep{chernozhukov2013gaussian,chernozhukov2015comparison,chernozhukov2019inference} to establish the validity of the block multiplier bootstrap also for the maximum of many Chatterjee's rank correlation coefficients.

The block multiplier bootstrap requires a tuning parameter choice, namely the block size that should tend to infinity with sample size. We are able to provide a simple choice that enjoys a certain optimality property: it minimises the distance of the bootstrap variance and the asymptotic variance for each individual test statistic. In simulations, however, we find that the size and power of our test are almost insensitive to the particular value of the block size. 

The test is consistent against all fixed alternatives under which $Y_1,\ldots,Y_p,X$ are continuously distributed. Finally, our test is computationally attractive even for very large sample sizes and high dimensions $p$.

The test can be combined with a stepwise multiple testing procedure \citep[as in][]{romano2005exact} for screening out variables from $Y_1,\ldots,Y_p$ that violate independence so as to asymptotically control the family-wise error rate uniformly over a large set of data-generating processes. This result allows for the dimension $p$ to grow exponentially in sample size and the dependence among $Y_1,\ldots,Y_p$ is left completely unrestricted. The companion R package \href{https://github.com/mauolivares/hdIndep}{\texttt{hdIndep}} facilitates the implementation of the methodology developed in this paper.

\paragraph{Related Literature}

Our paper is most closely related to the literature on testing independence of two random vectors, $H_0^{joint}\colon Y \perp X$, where $Y\in\mathbb{R}^p$ and $X\in\mathbb{R}^q$. This is a different hypothesis from the one we consider; for $q=1$, it implies, but is not implied by, $H_0$; when $H_0$ holds, but the copula of $Y_1,\ldots,Y_p$ given $X$ is not independent of $X$, then $H_0^{joint}$ is violated. Therefore, tests that control the rejection probability under $H_0^{joint}$ are not guaranteed to control it under our hypothesis $H_0$. \cite{Sinha:1977re}, \cite{Taskinen:2005ty}, \cite{Bakirov:2006ui}, \cite{Szekely:2007io}, \cite{Heller:2012oi}, \cite{Heller:2012op}, \cite{Shi:2022oi} propose nonparametric tests of $H_0^{joint}$, where $p$ and $q$ are of arbitrary, but fixed (with sample size) dimensions. \cite{Szekely:2013re} show that the test statistic of \cite{Szekely:2007io} is biased in high dimensions and an independence test based on it therefore does not control size in high dimensions. They also propose a bias-corrected test statistic and derive its asymptotic distribution under the null of independence when the dimensions of both vectors grow with the sample size. The asymptotic regime under which their test is valid requires the dimensions of both vectors to grow, so it is not clear (at least to us) whether it is also valid when one of the two dimensions remains constant as the sample size grows. In addition, their derivation of the test statistic's limiting distribution requires the elements of the two vectors to be exchangeable, a condition we do not require for $Y_1,\ldots,Y_p$. \cite{Ramdas:2015aa} show that both independence tests, \cite{Szekely:2007io} and \cite{Szekely:2013re} have low power against ``fair alternatives'' in high dimensions. Our simulations in Section~\ref{sec:invalidity_of_distance_covariance_tests} show that these tests do not necessarily control size under $H_0$.

More recently, \cite{zhou2024rank} and \cite{Wang:2024ui} propose other rank-based tests, e.g. based on Hoeffding's D, Blum-Kiefer-Rosenblatt's R and Bergsma-Dassios-Yanagimoto's $\tau$ among others, of $H_0^{joint}$ in high dimensions. The validity of these tests relies on at least one of the dimensions $p$ and/or $q$ diverging so that a central limit theorem across the elements of, say, $Y$ can be invoked. This approach necessarily restricts the dependence of $Y_1,\ldots,Y_p$, while our validity results leave the dependence completely unrestricted.

In simulations, we find that our test is typically more powerful than the alternatives by \cite{Szekely:2007io}, \cite{Szekely:2013re}, \cite{zhu2020distance}, and \citet{zhou2024rank} (in scenarios in which they are valid) in high dimensions (large $p$) or when there is dependence among the variables $Y_1,\ldots,Y_p$.

Our proposed test has two additional advantages over competitors (in scenarios in which they are valid): (i) unlike tests by \cite{zhu2020distance} and \citet{zhou2024rank} ours is computationally inexpensive and (ii) we develop a stepwise procedure for selecting hypotheses that are not rejected so as to control the family-wise error rate. 

There is a large literature on nonparametric tests of mutual independence among the elements of a random vector. Some examples are \cite{Leung:2018iu}, \cite{Shun:2018ui}, \cite{Drton:2020oi}, \cite{Wang:2024op}, \cite{Bastian:2024io}; see also references therein. \cite{Xia:2024aa} propose such a test based on Chatterjee's rank correlation. While the hypothesis considered in our paper also involves many nonparametric independence tests, it substantially differs from the null of mutual independence. This is because, in our testing problem, $X$ occurs in every independence hypothesis and the dependence of $Y_1,\ldots,Y_p$ is left unrestricted. 

Finally, since our proposed test is based on the rank correlation for two random variables proposed by \cite{chatterjee2021new}, our paper is also related to a recent and fast-growing literature that examines the rank correlation coefficient's properties. \cite{chatterjee2021new} shows asymptotic normality of the correlation coefficient under independence of the two random variables. \cite{Lin:2022aa} and \citet{kroll2024asymptotic} show that it is also asymptotically normal under dependence. \cite{Shi:2021oi} and \cite{Lin:2022io} examine and propose improvements of the power of tests of independence based on the rank correlation coefficient. Based on the earlier work by \cite{chatterjee2021new}, \cite{Azadkia:2021oi} introduce a graph-based correlation coefficient that can be seen as a multivariate extension of Chatterjee's correlation coefficient. \cite{Han:2024io} derive its asymptotic distribution under the null that a vector $Y$ (with fixed dimension) is independent of a random variable $X$, and they find, in simulations, that the test may be more powerful than competitors in higher dimensions. For a recent review of this literature, see \cite{Chatterjee:2024op}.

\section{The Test}\label{sec:test}

This section first introduces the new test, establishes asymptotic size control uniformly over a large class of data-generating processes, and then consistency against all fixed alternatives under which $Y_1,\ldots,Y_p$ and $X$ have continuous distributions. The section concludes with the development of an optimal tuning parameter choice.

Let $\mathbb{D}\coloneqq \{(X_i,Y_{1,i},\ldots,Y_{p,i})\}_{i=1}^n$ be an i.i.d. sample drawn from the distribution of $(X,Y_1,\ldots,Y_p)$. For each individual hypothesis $H_{0,j}\colon Y_j \perp X$ there are many available tests in the literature. In this paper, we focus on the test statistic by \cite{chatterjee2021new}. The motivation for this choice will become clear later in this section. To define the test statistic let $X_{(k)}$ be the $k$-th order statistic of $X_1,\ldots,X_n$, i.e. $X_{(1)} \leq \dots \leq X_{(n)}$, and $Y_{j,(k)}$ be the concomitant of $X_{(k)}$, i.e. if $X_{(k)}=X_l$, then $Y_{j,(k)}=Y_{j,l}$. Denote by $F_{Y_j}$ the cumulative distribution function (cdf) of $Y_j$ and by  $\hat{F}_{Y_j}$ the  empirical cdf. Then, Chatterjee's rank correlation for testing an individual hypothesis $H_{0,j}$ is
\begin{equation}\label{eq: C3}
    \hat\xi_j \coloneqq  1 - \frac{ 3n }{n^2-1}\sum_{i=1}^{n-1} \abs{ \hat{F}_{Y_j}(Y_{j,(i+1)})-\hat{F}_{Y_j}(Y_{j,(i)}) }~.
\end{equation}
\cite{chatterjee2021new} shows that $\hat\xi_j$ is a consistent estimator of 
$$\xi_j \coloneqq \frac{\int Var(\E[\ind\{Y_j\geq t\}|X])f_{Y_j}(t)dt}{\int Var(\ind\{Y_j\geq t\})f_{Y_j}(t)dt}, $$
a measure of dependence introduced by \cite{Dette:2013yy} in the case in which $Y_j$ has a continuous distribution with density $f_{Y_j}$. This measure has several attractive features \citep{chatterjee2021new}. Two features that are particularly important for the test proposed in this paper are that (i) $\xi_j$ is equal to zero if, and only if, $X$ and $Y_j$ are independent, and (ii) the estimator $\hat\xi_j$ admits an asymptotic linear representation (shown in \eqref{eq: angus_representation} below) with a remainder that we can show to be sufficiently small. Our arguments for validity of the proposed test in high dimensions crucially depend on property~(ii).

The proposed test statistic for $H_0$ is the maximum of the individual Chatterjee's rank correlations:
\begin{equation}\label{eq:test_statistic}
    \hat{T} \coloneqq \sqrt{n}\max_{1\le j\le p}\,\hat\xi_j.
\end{equation}
We propose to compute critical values for the test statistic via a block multiplier bootstrap. To describe the procedure consider first an individual hypothesis $H_{0,j}$. If the null $H_{0,j}$ holds and both random variables are continuously distributed, then the arguments in \cite{angus1995coupling} imply that Chatterjee's rank correlation has an asymptotically linear representation of the form
\begin{equation}\label{eq: angus_representation}
    \sqrt{n}\,\hat{\xi}_j =\frac{1}{\sqrt{n}}\sum_{i=1}^{n-1}W_{j,i}+r_{j,n}~,
\end{equation}
where $r_{j,n}$ is a negligible remainder term and $W_{j,i}$ is defined as
$$W_{j,i}\coloneqq 2 - 3\abs{U_{j,i+1}-U_{j,i}} - 6 U_{j,i}(1-U_{j,i}) $$
with $U_{j,i} \coloneqq F_{Y_j}(Y_{j,(i)})$. A naive application of the multiplier bootstrap idea would be to repeatedly draw bootstrap multipliers $\varepsilon_1,\ldots,\varepsilon_n$ as independent standard normal random variables that are independent of the data $\mathbb{D}$ and then compute a critical value from the distribution of $\frac{1}{\sqrt{n}}\sum_{i=1}^{n-1}\varepsilon_i W_{j,i}$ given the data. However, there are two problems with this approach. First, $\{W_{j,i}\}_{i=1}^n$ is not an i.i.d. sequence but a 1-dependent process. Second, $\{W_{j,i}\}_{i=1}^n$ is not directly observed because $F_{Y_j}$ is unknown. 

To address the first challenge, we decompose the sum $\sum_{i=1}^{n-1} W_{j,i}$ into the sums over ``big'' and ``small'' blocks formed of $\{W_{j,i}\}_{i=1}^n$ with the property that the big blocks are independent of each other. Formally, let $q\ge 1$ denote the size of big blocks. It is a tuning parameter to be chosen by the researcher; in Section~\ref{sec:choice of q} we develop an optimal choice of $q$. Since $\{W_{j,i}\}_{i=1}^n$ is a 1-dependent sequence, we let the small blocks be of length one. 
Further, let $m\coloneqq \lfloor(n-1)/(q+1)\rfloor$, where $\lfloor\nu\rfloor$ is the integer part of $\nu$, denote the number of big blocks. Lastly, $r \coloneqq n-1 - m(q+1)$,  $0\le r<q+1$, is the number of remaining summands that are not allocated to any of the small or big blocks. With this notation, we can thus write 
$$\sum_{i=1}^{n-1}W_{j,i} = \sum_{k=1}^m A_{j,k} + \sum_{k=1}^m B_{j,k} + R_j,$$
where
\[
	A_{j,k} \coloneqq \sum^{kq+(k-1)}_{l=(k-1)(q+1)+1} W_{j,l}, \quad B_{j,k} \coloneqq W_{j,k(q+1)}, \quad \text{ and } \quad R_j \coloneqq \sum_{k=1}^r W_{j,m(q+1)+k}.
\]
The big block sums $A_{j,1},\ldots,A_{j,m}$ are independent of each other, and we show that the terms $B_{j,1},\ldots,B_{j,m}$ and $R_j$ are asymptotically negligible. It then follows that $\sqrt{n}\,\hat{\xi}_j $ can be approximated by $\frac{1}{\sqrt{mq}}\sum_{k=1}^m A_{j,k}$, the sum of independent components.

We now need to address the second challenge, which is that $W_{j,i}$, and thus also $A_{j,k}$, are not observed. $W_{j,i}$ can be estimated by
\[
\hat W_{j,i}\coloneqq 2 - 3\abs{ \hat U_{j,i+1}- \hat U_{j,i}} - 6 \hat U_{j,i}(1-\hat U_{j,i}),
\]
where $\hat U_{j,i}\coloneqq \hat{F}_{Y_j}(Y_{j,(i)})$, and $A_{j,k}$ by
\[
\hat{A}_{j,k} \coloneqq  \sum^{kq+(k-1)}_{l=(k-1)(q+1)+1} \hat{W}_{j,l}.
\]
While $A_{j,1},\ldots,A_{j,m}$ are independent, $\hat{A}_{j,1},\ldots,\hat{A}_{j,m}$ are only asymptotically independent, i.e. in the limit as $n,m\to\infty$.

Finally, bootstrap multipliers $\varepsilon_1,\ldots,\varepsilon_m$ are drawn as independent standard normal random variables that are independent of the data $\mathbb{D}$. The bootstrap statistic is then defined as
\begin{equation}\label{eq:bootstrap_statistic}
    \hat{T}^B\coloneqq \max_{1\leq j \leq p} \frac{1}{\sqrt{mq}} \sum_{k=1}^m \varepsilon_k \hat{A}_{j,k}.
\end{equation}
For a nominal level $\alpha\in(0,1)$, the proposed critical value $\hat c(\alpha)$ for our test is the conditional $(1-\alpha)$-quantile of $\hat{T}^B$ given the data $\mathbb{D}$. The test rejects $H_0$ if, and only if, the test statistic $\hat T$ exceeds the critical value $\hat c(\alpha)$.

\subsection{Size Control}
\label{sec:size control}
In this subsection, we show that our proposed test asymptotically controls size uniformly over a large class of data-generating processes.

\begin{assumption}\label{ass:continuity}
    $X, Y_1,\ldots,Y_p$ are continuously distributed.
\end{assumption}

Assuming all random variables have a continuous distribution simplifies the presentation, but is not essential. Remark~\ref{rem: discreteness} below discusses extensions to the case with discrete distributions.

\begin{assumption}\label{ass:rates}
    Suppose $p\geq 2$. There exist constants $C_1>0$ and $0<\gamma<1/4$ such that
    $(1/q)\log^2 p \leq C_1 n^{-\gamma}$ and $\max\big\{q\log^{5/2}p,\,\sqrt{q}\log^{7/2}(pn) \big\} \leq C_1 n^{1/2-\gamma}$.
\end{assumption}

This assumption requires $q$ to diverge as the sample size grows, but restricts its rate to be neither too slow nor too fast. The assumption also restricts the rate at which the dimension $p$ is allowed to grow with sample size. However, the upper bound on the rate is very large: $p$ may be an exponential function of sample size and thus is allowed to be much larger than sample size. For instance, there are positive constants $\delta_1,\delta_2$ so that $p = e^{n^{\delta_1}}$ and $q = n^{\delta_2}$ satisfy Assumption~\ref{ass:rates}.

\begin{theorem}\label{thm: size control}
    Suppose that Assumptions~\ref{ass:continuity}--\ref{ass:rates} hold. Then, under the null hypothesis $H_0$, there exist positive constants $c$, $C$ depending only on $\gamma$ and $C_1$ such that  
    \begin{equation}\label{eq: size control}
        \left|\Pr\left( \hat{T} > \hat{c}(\alpha)\right) - \alpha\right| \leq C n^{-c}.
    \end{equation}
\end{theorem}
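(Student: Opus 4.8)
The plan is to show that, under $H_0$, both the unconditional law of $\hat T$ and the conditional law of the bootstrap statistic $\hat T^B$ given $\mathbb D$ lie within $Cn^{-c}$, in Kolmogorov distance, of the law of $\max_{1\le j\le p}Z_j$, where $Z=(Z_1,\dots,Z_p)'\sim\mathcal N(0,\Sigma)$ is centered Gaussian with $\Sigma_{jj'}=(mq)^{-1}\sum_{k=1}^{m}\E[A_{j,k}A_{j',k}]$; matching the $(1-\alpha)$-quantiles of these three laws then yields \eqref{eq: size control}. The crucial simplification afforded by $H_0$ is that $Y_j\perp X$ forces the concomitants $Y_{j,(1)},\dots,Y_{j,(n)}$ to be an i.i.d.\ sample from $F_{Y_j}$, so the $U_{j,i}=F_{Y_j}(Y_{j,(i)})$ are i.i.d.\ $\mathrm{Uniform}(0,1)$ and the joint law of $(W_{j,i})_{i=1}^{n}$ does not depend on the data-generating process at all. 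A direct computation then gives $\E[W_{j,i}]=0$, $\Var(W_{j,1})=1/2$, $\mathrm{Cov}(W_{j,1},W_{j,2})=-1/20$, hence $\Sigma_{jj}=\Var(W_{j,1})+\tfrac{2(q-1)}{q}\mathrm{Cov}(W_{j,1},W_{j,2})=\tfrac25+\tfrac1{10q}$ for every $j$. In particular $\min_j\Sigma_{jj}\ge 2/5$, a universal lower bound that is what makes the Gaussian anti-concentration arguments below deliver constants depending only on $\gamma$ and $C_1$.

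\emph{Linearization and block reduction.} Using the representation \eqref{eq: angus_representation} together with the big/small block decomposition, I would write $\hat T=\max_{1\le j\le p}(mq)^{-1/2}\sum_{k=1}^{m}A_{j,k}+\Delta_n$, where $\Delta_n$ absorbs $n^{-1/2}\max_j|r_{j,n}|$, the small-block contribution $n^{-1/2}\max_j|\sum_k B_{j,k}|$, the edge term $n^{-1/2}\max_j|R_j|$, and the $O(1/q)$ discrepancy between the normalizations $n^{-1/2}$ and $(mq)^{-1/2}$. Since $B_{j,1},\dots,B_{j,m}$ are bounded and i.i.d.\ across $k$ (they are separated by gaps of length $q+1\ge 2$ in the $1$-dependent sequence $W_{j,\cdot}$) with $m\asymp n/q$ of them, and $R_j$ contains at most $q$ bounded summands, a sub-Gaussian maximal inequality with a union bound over $j$ bounds the second and third pieces by $O_P(\sqrt{\log p/q})$ and $O_P(q/\sqrt n)$ respectively; combined with the uniform-in-$j$ bound on the Angus remainder $r_{j,n}$, all of $\Delta_n$ is $O_P(n^{-c})$ under Assumption~\ref{ass:rates}.

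\emph{Gaussian and bootstrap approximations.} The big-block sums $A_{j,1},\dots,A_{j,m}$ are independent across $k$ and, after normalization by $\sqrt q$, sub-Gaussian with parameters bounded uniformly in $n$ (being sums of $q$ bounded $1$-dependent terms). The high-dimensional central limit theorem for maxima of sums of independent vectors \citep{chernozhukov2013gaussian} applied to $(mq)^{-1/2}\sum_{k=1}^{m}(A_{1,k},\dots,A_{p,k})'$ therefore gives $\sup_t|\Pr(\max_j(mq)^{-1/2}\sum_k A_{j,k}\le t)-\Pr(\max_j Z_j\le t)|\le Cn^{-c}$; with block length $\asymp q$, number of blocks $m\asymp n/q$, and $p$ possibly exponential in $n$, the exponents in Assumption~\ref{ass:rates} are exactly those making this bound $O(n^{-c})$. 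Together with the bound on $\Delta_n$ and Nazarov's anti-concentration inequality for $\max_j Z_j$ (which uses $\min_j\Sigma_{jj}\ge 2/5$), this yields $\sup_t|\Pr(\hat T\le t)-\Pr(\max_j Z_j\le t)|\le Cn^{-c}$. On the bootstrap side, conditionally on $\mathbb D$, $\hat T^B$ is the maximum of the centered Gaussian vector with covariance $\hat\Sigma_{jj'}=(mq)^{-1}\sum_k\hat A_{j,k}\hat A_{j',k}$, and I would bound $\max_{j,j'}|\hat\Sigma_{jj'}-\Sigma_{jj'}|=O_P(n^{-c})$ by (i) controlling $\max_{j,k}|\hat A_{j,k}-A_{j,k}|$ through $\max_j\|\hat F_{Y_j}-F_{Y_j}\|_\infty=O_P(\sqrt{\log p/n})$ (DKW plus a union bound over $j$), the explicit form of $\hat W_{j,i}$, and summation by parts inside each block, and (ii) a Bernstein-type concentration with a union bound over the $p^2$ pairs for $\max_{j,j'}|(mq)^{-1}\sum_k A_{j,k}A_{j',k}-\Sigma_{jj'}|$. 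The Gaussian comparison inequality of \citet{chernozhukov2015comparison} then gives $\sup_t|\Pr(\hat T^B\le t\mid\mathbb D)-\Pr(\max_j Z_j\le t)|\le Cn^{-c}$ on an event of probability at least $1-Cn^{-c}$.

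\emph{Conclusion and main obstacle.} Writing $c_Z(\beta)$ for the $(1-\beta)$-quantile of $\max_j Z_j$, the bootstrap approximation shows that, with probability at least $1-Cn^{-c}$, $\hat c(\alpha)$ lies between $c_Z(\alpha-Cn^{-c})$ and $c_Z(\alpha+Cn^{-c})$; combining this with the distributional approximation for $\hat T$ and the continuity of the law of $\max_j Z_j$ (again via $\min_j\Sigma_{jj}\ge 2/5$) gives $|\Pr(\hat T>\hat c(\alpha))-\alpha|\le C'n^{-c}$, which is \eqref{eq: size control}. The technical heart of the argument — and the step I expect to be the main obstacle — is securing the two uniform-in-$j$ remainder estimates at a polynomial rate that survives the union bound over the (possibly exponentially many) coordinates: controlling $\max_j|r_{j,n}|$ in the Angus-type representation \eqref{eq: angus_representation}, and controlling the estimation error $\max_{j,k}|\hat A_{j,k}-A_{j,k}|$ created by plugging in the empirical cdf $\hat F_{Y_j}$. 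The non-smoothness of $\hat\xi_j$ in the absolute values, the rank-discreteness of $\hat U_{j,i}$, and the requirement that every error term decay at a rate of the form $n^{-c}$ uniformly over the high-dimensional problem are what dictate the precise exponents in Assumption~\ref{ass:rates}.
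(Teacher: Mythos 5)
Your overall architecture matches the paper's: linearize $\hat\xi_j$ via the Angus-type representation, decompose into big/small blocks, apply the high-dimensional CLT of \citet{chernozhukov2013gaussian} to the independent big-block sums, use Gaussian anti-concentration and the comparison inequality of \citet{chernozhukov2015comparison} for the (conditionally Gaussian) bootstrap statistic, and finish by sandwiching the random critical value between quantiles of $\max_j Z_j$. Your bootstrap step is a mild variant (you compare the full estimated covariance $\hat\Sigma$ to $\Sigma$ directly, while the paper splits into $|\hat T^B-T_0^B|$, controlled by Borell's inequality with variance $\Delta_2$, and $T_0^B$ versus $Z_0$, controlled by the comparison inequality with $D=\max_{j,j'}|\frac{1}{mq}\sum_k(A_{j,k}A_{j',k}-\E[A_{j,k}A_{j',k}])|$); both routes work, and your treatment of $\max_{j,k}|\hat A_{j,k}-A_{j,k}|$ via DKW plus a union bound is essentially the paper's Lemma~\ref{lem: verify_Assumption}(ii).

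The genuine gap is the one you yourself flag and then leave unresolved: the uniform-in-$j$, polynomial-rate control of the Angus remainders, $\Pr(\max_{1\le j\le p}|r_{j,n}|>Cn^{-c}/\sqrt{\log p})\le Cn^{-c}$. You write that ``combined with the uniform-in-$j$ bound on the Angus remainder $r_{j,n}$, all of $\Delta_n$ is $O_P(n^{-c})$,'' but no such bound exists off the shelf: \citet{angus1995coupling} gives only $r_{j,n}=o_P(n^{-1/2})$ for each fixed $j$, with no rate and no tail bound, and a union bound over $p$ coordinates with $\log p$ polynomial in $n$ requires an exponential-type tail for each $|r_{j,n}|$. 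Establishing this is the main technical content of the paper's Lemma~\ref{lem: verify_Assumption}(i): the remainder is expressed as $\sqrt n\,\tilde I_{j,n}$, an integral of the uniform empirical process against the signed measure $\hat H_j-H$ built from the 1-dependent pairs $(U_{j,i},U_{j,i+1})$, and is controlled by a KMT strong approximation of $\hat B_j$ by a Brownian bridge, a discretization of the bridge combined with L\'evy's modulus of continuity, and a bivariate DKW-type inequality for the 1-dependent empirical distribution $\hat H_j$ (Lemma~\ref{lemma:DKW2}). Without an argument of this kind, your $\Delta_n=O_P(n^{-c})$ claim is an assertion, not a proof, and the whole size-control bound \eqref{eq: size control} does not follow. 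A secondary, fixable imprecision: for the anti-concentration steps to absorb the perturbations you need them of size $O(n^{-c}/\sqrt{\log p})$ (as the paper's $\zeta_n$ is chosen), and the covariance error entering the comparison inequality must be $O(n^{-c}/\log^2 p)$, not merely $O_P(n^{-c})$, since $\log p$ may grow polynomially in $n$; your stated rates should be tightened accordingly.
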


The result in \eqref{eq: size control} implies that the proposed test asymptotically controls size. In fact, the asymptotic size is equal to the nominal level $\alpha$ and, in this sense, the test is not conservative. Furthermore, the probability of rejecting $H_0$ when $H_0$ is satisfied can deviate from the nominal level only by a term that is polynomially small in $n$. Importantly, the constants $c$ and $C$ depend on the data-generating process only through the constants $\gamma$ and $C_1$ from Assumption~\ref{ass:rates}. Therefore, under $H_0$, the inequality in \eqref{eq: size control} holds uniformly over all data-generating processes that satisfy the assumption with the same constants, denoted by $\mathbf{P}_{\gamma,C_1}$:
$$\limsup_{n\to\infty}\sup_{P\in\mathbf{P}_{\gamma,C_1}} \left|\Pr\left( \hat{T} > \hat{c}(\alpha)\right) - \alpha\right| =0, $$
i.e. our test has asymptotic size equal to $\alpha$ uniformly over those data-generating processes.

It is worth noting that the validity of our test is guaranteed in high dimensions without restricting the dependence of $Y_1,\ldots,Y_p$ in any way.

To establish this result we need to show that the distribution of the bootstrap statistic $\hat{T}^B$ given the data is close to the distribution of the test statistic $\hat{T}$. This is achieved in several steps: (i) show that $\hat{T}$ is close to 
$$T_0\coloneqq \max_{1\le j\le p}\,\frac{1}{\sqrt{n}}\sum_{i=1}^{n-1}W_{j,i}~,$$
(ii) show that $\hat{T}^B$ is close to 
$$T_0^{B}\coloneqq\max_{1\leq j \leq p} \frac{1}{\sqrt{mq}}\sum_{k=1}^m \varepsilon_k A_{j,k},$$
and then (iii) show that both $T_0$ and $T_0^B$ are close to the maximum of Gaussian random variables,
$$Z_0\coloneqq \max_{1\le j\le p} V_j,$$
where $V\coloneqq (V_1,\dots,V_p)'$ is a Gaussian vector such that $\E(V)=0$ and $\E(V\,V')=\frac{1}{mq}\sum_{k=1}^m\E(A_k A_k')$, $A_k\coloneqq (A_{1,k},\dots,A_{\,p,k})'$. Steps (i) and (ii) are developed in the proof of the theorem and step (iii) is delegated to Lemmas~\ref{lem: CCKB1} and \ref{lem: CCKB2}. These three steps establish that the distribution of the test statistic is close to that of the bootstrap statistic. However, this result does not yet imply the statement in \eqref{eq: size control} because $\hat{c}(\alpha)$ is random and may be correlated with $\hat{T}$. The final step of the proof therefore consists of passing from a deterministic to a random critical value.

Step (iii) of the derivation combines several results on high-dimensional Gaussian approximations from \cite{chernozhukov2013gaussian,chernozhukov2015comparison,chernozhukov2019inference}. The main challenge of the proof is contained in steps (i) and (ii), where we need to establish that all remainder terms $r_{1,n},\ldots,r_{p,n}$ from the representation \eqref{eq: angus_representation} vanish at a suitably fast rate. \cite{angus1995coupling} shows that, for each $j$, $r_{j,n}=o_P(n^{-1/2})$, but in our high-dimensional setting, we need stronger control of these remainder terms, and these results are developed in Lemma~\ref{lem: verify_Assumption}.

\begin{remark}[discrete distributions]\label{rem: discreteness}
    If any of $X,Y_1,\ldots,Y_p$ have a noncontinuous distribution, then they can be replaced by new random variables that are equal to the original ones plus a sufficiently small amount of noise (which is independent of the data). Our proposed test can then be applied to the new random variables. One can show that the test continues to be valid in this case because the null hypothesis $H_0$ (in terms of the original variables) implies that the probability limits of Chatterjee's rank correlations in terms of the new variables are all equal to zero.
\end{remark}

\begin{remark}[bootstrapping Chatterjee's rank correlation]
    \cite{Lin:2024ui} show that, under the null of independence of two random variables, the nonparametric bootstrap does not consistently estimate the limiting distribution of Chatterjee's rank correlation. In particular, they show that the bootstrap yields a variance estimate whose expectation is below $2/5$, the correct limiting variance of Chatterjee's rank correlation under the null of independence and Assumption~\ref{ass:continuity}.

    Assumption~\ref{ass:rates} requires $p\geq 2$, but an inspection of the proof of Theorem~\ref{thm: size control} reveals that the result can also be proven for $p=1$ under slightly simplified rate conditions. Therefore, the block multiplier bootstrap correctly approximates the limiting distribution of Chatterjee's rank correlation under the null of independence. The reason for this is that it correctly accounts for the 1-dependence in the asymptotic linear representation, and thus correctly estimates the variance of Chatterjee's rank correlation, while the nonparametric bootstrap ignores this dependence.\footnote{\cite{Dette:2024re} show that the m-out-of-n bootstrap is also valid.}
\end{remark}

\begin{remark}[studentisation]\label{rem: studentization}
    The test proposed in this section does not studentise the test statistics. In simulations in Section~\ref{sec:simulations}, we find that studentisation may improve size and power of the test. We consider studentising the individual test statistics by their standard deviation under the null, i.e.,
    $$\hat T^{stud} \coloneqq \sqrt{n} \max_{1\leq j\leq p}  \frac{\hat\xi_j}{\sqrt{v_n}}. $$
    where
    \begin{equation*}
	    v_n \coloneqq \frac{n(n-2)(4n-7)}{10(n-1)^2(n+1)} =  \Var\big( \sqrt{n}\hat\xi_j \big)
    \end{equation*}
    under the null \citep[][Lemma~2]{zhang2023asymptotic}. The sequence $v_n$ is monotonically increasing and, as $n \to \infty$, it converges to $2/5$, the asymptotic variance derived by \citet{chatterjee2021new}.
    
    For the bootstrap statistics there are at least two different possibilities for studentisation. First, one could studentise it by the square root of $\E[A_{j,k}^2]/q = 0.4 + 0.1/q$, i.e.
    \begin{equation*}
        \hat T^{B,stud\,1} \coloneqq \max_{1\leq j \leq p} \frac{1}{\sqrt{m}} \sum_{k=1}^m \varepsilon_k \frac{\hat{A}_{j,k}}{\sqrt{0.4q + 0.1}}.
    \end{equation*}
    This standardisation ensures that the diagonal elements of the bootstrap covariance matrix of individual tests are all approximately centred at one for any $q$, and hence approximately match the variance of $\sqrt{n}\hat\xi_j/\sqrt{v_n}$.
    The second possibility is to employ a bootstrap test statistic in which the big blocks sums are demeaned and standardised by their sample standard deviation, i.e.
    \begin{equation*}
        \hat T^{B,stud\,2} \coloneqq \max_{1\leq j \leq p} \frac{1}{\sqrt{m}} \sum_{k=1}^m \varepsilon_k \frac{\hat{A}_{j,k} - \frac{1}{m} \sum_{k=1}^m \hat{A}_{j,k} }{ \sqrt{\frac{1}{m}\sum_{k=1}^m \hat{A}_{j,k}^2} }.
    \end{equation*}
    This standardisation ensures that the diagonal elements of the bootstrap covariance matrix of individual test statistics are all equal to one for any $q$.
\end{remark}

\subsection{Consistency}
\label{sec:consistency}

The following theorem shows that the proposed test is consistent against any (fixed) violation of the null:
\begin{theorem}\label{thm: consistency_theorem}
    Suppose that Assumptions~\ref{ass:continuity}--\ref{ass:rates} hold. Then, under the alternative hypothesis $H_1$,
    \begin{equation*}
        \Pr\left( \hat{T}> \hat c(\alpha)\right) \to 1\qquad \text{as } n\to\infty.
    \end{equation*}
\end{theorem}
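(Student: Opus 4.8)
The plan is to exploit a separation of rates: under any fixed alternative the test statistic $\hat T$ diverges at the parametric rate $\sqrt n$, whereas the random critical value $\hat c(\alpha)$ can be bounded above --- crudely, and even with probability one --- by a quantity of order $\sqrt{q\log p}$, which Assumption~\ref{ass:rates} forces to be $o(\sqrt n)$.

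First I would show that $\hat T$ diverges. Under $H_1$ there exists an index $j^\ast$ with $Y_{j^\ast}\not\perp X$; by Assumption~\ref{ass:continuity} and property~(i) of Chatterjee's dependence measure (it equals zero if and only if the two variables are independent), together with $\xi_{j^\ast}\ge 0$, this gives $\xi_{j^\ast}>0$. The consistency of Chatterjee's rank correlation \citep{chatterjee2021new} then yields $\hat\xi_{j^\ast}\plim\xi_{j^\ast}$, so that $\hat T/\sqrt n=\max_{1\le j\le p}\hat\xi_j\ge\hat\xi_{j^\ast}\plim\xi_{j^\ast}>0$; in particular $\hat T\ge \tfrac12\sqrt n\,\xi_{j^\ast}$ with probability tending to one.

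The bulk of the argument is the bound on $\hat c(\alpha)$. The key observation is that the estimated summands are uniformly bounded: since $\hat U_{j,i}\in[0,1]$, the term $6\hat U_{j,i}(1-\hat U_{j,i})\in[0,3/2]$ and $3|\hat U_{j,i+1}-\hat U_{j,i}|\in[0,3]$, so $|\hat W_{j,i}|\le \tfrac52$ and hence $|\hat A_{j,k}|\le \tfrac52 q$ for all $j,k$. Therefore, conditional on the data $\mathbb D$, each statistic $\frac{1}{\sqrt{mq}}\sum_{k=1}^m\varepsilon_k\hat A_{j,k}$ is a mean-zero Gaussian with variance $\hat\sigma_j^2=\frac{1}{mq}\sum_{k=1}^m\hat A_{j,k}^2\le \tfrac{25}{4}q$. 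A union bound over $j=1,\dots,p$ combined with the standard Gaussian tail inequality then gives $\Pr(\hat T^B>t\mid\mathbb D)\le p\exp(-2t^2/(25 q))$ for every $t>0$, so that almost surely $\hat c(\alpha)\le C\sqrt{q\log(p/\alpha)}$ for an absolute constant $C$. Because $p\ge 2$ we have $\log p\ge\log 2$, so Assumption~\ref{ass:rates} yields $q\log(p/\alpha)\le C_\alpha\, q\log^{5/2}p\le C_\alpha C_1 n^{1/2-\gamma}$ with $C_\alpha$ depending only on $\alpha$; hence $\hat c(\alpha)\le C'_\alpha\,n^{1/4-\gamma/2}$ almost surely, where $C'_\alpha$ depends only on $\alpha$, $\gamma$, $C_1$, and this is $o(\sqrt n)$ since $\gamma<1/4$. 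Note that for this step neither the block structure, nor the asymptotic linear representation \eqref{eq: angus_representation}, nor the size of the remainders $r_{j,n}$ plays any role --- a crude bound on the conditional quantile is all that is needed.

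Finally I would combine the two steps: since $n^{-1/4-\gamma/2}\to 0$,
\[
\Pr\big(\hat T>\hat c(\alpha)\big)\ \ge\ \Pr\Big(\sqrt n\,\hat\xi_{j^\ast}>C'_\alpha\,n^{1/4-\gamma/2}\Big)=\Pr\Big(\hat\xi_{j^\ast}>C'_\alpha\,n^{-1/4-\gamma/2}\Big)\ \longrightarrow\ 1,
\]
because $\hat\xi_{j^\ast}\plim\xi_{j^\ast}>0$. The only point requiring any care is obtaining the clean, unconditional almost-sure bound on $\hat c(\alpha)$; once the uniform bound $|\hat W_{j,i}|\le \tfrac52$ is in hand this is routine, so I do not anticipate a serious obstacle.
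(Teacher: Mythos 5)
Your proposal is correct and follows essentially the same route as the paper: lower-bound $\hat T$ by $\sqrt{n}\,\xi_{j^\ast}$ at a violated coordinate and dominate the critical value by an (almost-sure, given the data) bound of order $\sqrt{q\log p}=o(\sqrt n)$, exploiting the boundedness of $\hat W_{j,i}$ and the conditional Gaussianity of the bootstrap statistic. The only cosmetic differences are that you use a union bound with the Gaussian tail where the paper invokes Talagrand's expected-maximum bound plus a Borell-type inequality, and that you rely only on consistency of $\hat\xi_{j^\ast}$ where the paper cites the $\sqrt{n}$-rate result of Kroll (2024); both variants deliver the same conclusion.
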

In light of recent discussions related to the local asymptotic power of Chatterjee's test of independence (e.g., \cite{Shi:2021oi} and \cite{Lin:2022io}), it would be interesting to study the power of our proposed test to local alternatives. We view this analysis as beyond the scope of the paper and relegate it to future research.

\subsection{Choice of \texorpdfstring{$q$}{q}}\label{sec:choice of q}
Theorem~\ref{thm: size control} shows that our proposed block multiplier bootstrap test asymptotically controls size under rate conditions on $q$, the size of blocks used to construct the critical value $\hat c(\alpha)$.
These rate conditions specify that $q\to \infty$ as $n \to \infty$ at a rate that is neither too fast nor too slow, but they do not provide any guidance on how to choose $q$ in finite samples. In this section, we develop a choice of $q$ that is optimal in a certain finite-sample sense.

To describe the optimality criterion, recall from the discussion following Theorem~\ref{thm: size control} that the distribution of the bootstrap test statistic $\hat T^B$ is, to first order, determined by its infeasible counterpart $T_0^B=\max_{1\leq j \leq p}T^B_{0,j}$, where
$$T^B_{0,j} \coloneqq \frac{1}{\sqrt{mq}} \sum_{k=1}^m \varepsilon_k A_{j,k} $$
with bootstrap multipliers $\varepsilon_1,\ldots,\varepsilon_m$ that are i.i.d. standard normal and independent of the data. By construction, $T^B_{0,j}$ follows a normal distribution with mean zero and variance that depends on the data and, implicitly, on $q$,
\[
    T^B_{0,j} | \mathbb{D} \sim \mathcal{N}\left(0,V^B_j\right), \quad V^B_j \coloneqq \Var\left( T^B_{0,j}| \mathbb{D} \right).
\]
In our bootstrap procedure, $T^B_{0,j}$ mimics the behaviour of the individual test statistic $\sqrt{n} \hat \xi_j$, which is asymptotically normally distributed and, under the null, has variance $v_n$ given in Remark~\ref{rem: studentization}. Since $q$ affects the conditional distribution of $T^B_{0,j}$ only through the bootstrap variance $V^B_j$ and it does not affect the test statistic itself, we aim to choose $q$ such that $V^B_j$ is close to $v_n$.

The following lemma characterizes the expectation and the variance of the bootstrap variance $V^B_j$ under the null $H_{0,j}$.
\begin{lemma}\label{lemma:VB}
    Suppose that Assumption~\ref{ass:continuity} and the null $H_{0,j}$ hold. For any $j=1,\ldots,p$, it holds that
    \begin{align*}
        \E[V_j^B] & = \frac{2}{5} + \frac{1}{10q},\qquad \text{ for any } q\ge 1, \\
        \Var(V_j^B) & = \frac{1}{m} \begin{cases}
            \frac{7}{20}     & \text{if } q  = 1, \\
            \frac{1353}{2800}    & \text{if } q  = 2, \\
            \frac{8}{25} + \frac{88}{175q} - \frac{229}{700q^2}    & \text{if } q  \ge 3, 
            \end{cases}
    \end{align*}
 	where $m \coloneqq \lfloor(n-1)/(q+1)\rfloor$.
\end{lemma}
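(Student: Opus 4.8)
### Proof Proposal

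The plan is to compute the two moments directly from the definition $V_j^B = \Var(T^B_{0,j}\mid\mathbb D)$, exploiting that, conditionally on the data, $T^B_{0,j}$ is a Gaussian linear combination of the i.i.d.\ standard-normal multipliers $\varepsilon_1,\dots,\varepsilon_m$. Since $\Var(\varepsilon_k)=1$ and the $\varepsilon_k$ are independent of $\mathbb D$ and of each other, we get the clean identity $V_j^B = \frac{1}{mq}\sum_{k=1}^m A_{j,k}^2$. From here the problem reduces to computing $\E$ and $\Var$ of $\frac{1}{mq}\sum_{k=1}^m A_{j,k}^2$ under $H_{0,j}$ and Assumption~\ref{ass:continuity}. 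The first key observation is that under the null and continuity, $U_{j,i}=F_{Y_j}(Y_{j,(i)})$ are just the order statistics of $n$ i.i.d.\ uniforms on $[0,1]$, so the distribution of $W_{j,i}=2-3|U_{j,i+1}-U_{j,i}|-6U_{j,i}(1-U_{j,i})$ and of the block sums $A_{j,k}=\sum_{l} W_{j,l}$ (each block a contiguous run of $q$ consecutive terms $W_{j,l}$, built from $q+1$ consecutive uniform order statistics) does not depend on $j$, reducing everything to a single combinatorial computation. Moreover, because big blocks are separated by at least one index (the small blocks have length one and absorb the 1-dependence), the $A_{j,k}$ are i.i.d.\ across $k$, and in particular $\Var\!\big(\sum_k A_{j,k}^2\big)=m\,\Var(A_{j,1}^2)$.

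Next I would handle the mean. We have $\E[V_j^B]=\frac{1}{q}\E[A_{j,1}^2]$, and $A_{j,1}=\sum_{l=1}^{q}W_{j,l}$ (relabeling the first block), so $\E[A_{j,1}^2]=\sum_{l}\E[W_{j,l}^2]+2\sum_{l<l'}\E[W_{j,l}W_{j,l'}]$. By 1-dependence, $\E[W_{j,l}W_{j,l'}]=\E[W_{j,l}]\E[W_{j,l'}]$ whenever $|l-l'|\ge 2$, and $\E[W_{j,l}]=0$ under the null (this follows from the Angus representation \eqref{eq: angus_representation} having mean zero; it can also be checked from exchangeability of the spacings). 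Hence only the diagonal and adjacent ($|l-l'|=1$) covariances survive, giving a formula linear in $q$ of the form $\E[A_{j,1}^2]=q\,a + 2(q-1)\,b$ for constants $a=\E[W_{j,1}^2]$ and $b=\E[W_{j,1}W_{j,2}]$. Matching this against the claimed $\E[V_j^B]=\frac25+\frac1{10q}$ forces $a+2b=\frac25$ and $-2b=\frac1{10}$, i.e.\ $b=-\frac1{20}$, $a=\frac12$; these identities are the moment computations to verify (using standard formulas for moments of uniform spacings / order statistics — the joint density of two consecutive order statistics out of $n$ uniforms, integrated against the polynomial weights). The reason the answer is independent of $n$ is precisely this cancellation: all "bulk" covariance terms vanish and the surviving ones involve only two or three consecutive order statistics, whose joint moments, after the appropriate averaging, turn out $n$-free.

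For the variance, $\Var(V_j^B)=\frac{1}{m q^2}\Var(A_{j,1}^2)$, so the task is to compute the fourth-order object $\E[A_{j,1}^4]-\big(\E[A_{j,1}^2]\big)^2$. Expanding $A_{j,1}^4=\big(\sum_{l=1}^q W_{j,l}\big)^4$ produces a sum over quadruples $(l_1,l_2,l_3,l_4)$ of $\E[W_{j,l_1}W_{j,l_2}W_{j,l_3}W_{j,l_4}]$, and by 1-dependence such an expectation factorizes across any "gap" of size $\ge 2$ in the sorted index set; combined with $\E[W]=0$ this kills all quadruples that contain an isolated index. The nonzero contributions thus come from a bounded number of "shapes" — all four indices within a window of consecutive integers of length $\le 4$ (giving true fourth moments of three or four consecutive order statistics), or the indices splitting into two adjacent pairs separated by a gap (giving $\big(\E[W_{j,1}W_{j,2}]\big)^2$-type products, whose count grows like $q^2$, matching the constant term $8/25$ in the $q\ge3$ case). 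Counting how many quadruples of each shape occur as a function of $q$ yields a rational function of $q$; the small-$q$ exceptions ($q=1,2$) arise because the window-length-$4$ shapes cannot fit, changing the count. I expect this fourth-moment bookkeeping — enumerating the surviving index-shapes, counting their multiplicities as functions of $q$, and evaluating the corresponding three- and four-point moments of uniform order statistics against the degree-$\le 4$ polynomial weights — to be the main obstacle; everything else is the structural reduction (Gaussian conditional variance, $j$-invariance, block independence, mean-zero-plus-1-dependence killing off terms), which is routine. The casework on $q\in\{1,2\}$ versus $q\ge 3$ then just reflects when all shapes are realizable, and plugging in the computed moments gives the stated piecewise expression.
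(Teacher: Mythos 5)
Your structural reductions are the right ones and match the paper's: conditionally on the data, $T^B_{0,j}$ is Gaussian with $V_j^B=\frac{1}{mq}\sum_{k=1}^m A_{j,k}^2$, the distribution of the blocks does not depend on $j$, and the variance reduces to $\frac{1}{mq^2}\bigl(\E[A_{j,1}^4]-\E[A_{j,1}^2]^2\bigr)$ by independence of the big blocks. However, your central distributional claim is wrong: under $H_{0,j}$ and Assumption~\ref{ass:continuity}, the $U_{j,i}=F_{Y_j}(Y_{j,(i)})$ are \emph{not} the order statistics of $n$ i.i.d.\ uniforms. The $Y_{j,(i)}$ are concomitants, i.e.\ the $Y_j$-sample reindexed by the ranks of $X$, and since $Y_j\perp X$ this reindexing carries no information about the $Y_j$'s, so the $U_{j,1},\ldots,U_{j,n}$ are i.i.d.\ Uniform$[0,1]$ (this is exactly what the paper uses). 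Every property you subsequently invoke---$\E[W_{j,i}]=0$, $1$-dependence of the $W_{j,i}$, factorization of mixed moments across index gaps of size $\ge 2$, independence (hence i.i.d.-ness) of the $A_{j,k}$, and $n$-free moment constants---holds because of this i.i.d.\ structure and fails under your order-statistics reading: consecutive uniform order statistics have spacings with mean $1/(n+1)$ and are globally dependent, so $W_{j,i}$ would be neither mean zero nor $1$-dependent, $\Var\bigl(\sum_k A_{j,k}^2\bigr)=m\,\Var(A_{j,1}^2)$ would fail, and the moments you propose to compute from ``the joint density of two consecutive order statistics out of $n$ uniforms'' would depend on $i$ and $n$ rather than miraculously cancel. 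As written, the proposal is internally inconsistent, and the deferred verification of $a=\E[W_{j,1}^2]=1/2$ and $b=\E[W_{j,1}W_{j,2}]=-1/20$ would not come out of the computation you describe.

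Once the premise is corrected to i.i.d.\ uniforms, your outline coincides with the paper's proof: $\E[V_j^B]=q^{-1}\E[A_{j,1}^2]=q^{-1}\bigl(q\,\E[W_{j,1}^2]+2(q-1)\E[W_{j,1}W_{j,2}]\bigr)=\tfrac25+\tfrac1{10q}$, and the variance follows from $\E[A_{j,1}^4]$ with the casework $q=1,2$ versus $q\ge 3$ reflecting which index shapes fit inside one block. Be aware, though, that the substance of the lemma is precisely the part you defer as ``bookkeeping'': the paper evaluates the needed second- and fourth-order moments of the $W$'s explicitly (e.g.\ $\E[W_{j,1}^4]=3/5$, $\E[W_{j,1}^2W_{j,2}^2]=23/70$, $\E[W_{j,1}W_{j,2}W_{j,3}W_{j,4}]=1/700$, obtained symbolically) and then the closed form $\E[A_{j,1}^4]=\tfrac{48}{100}q^2+\tfrac{102}{175}q-\tfrac{111}{350}$ for $q\ge3$; reverse-engineering the constants $a$ and $b$ from the stated answer, as you do for the mean, identifies what must be checked but does not prove it.
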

Lemma~\ref{lemma:VB} shows that the expectation of $V_j^B$ is above 2/5 for any fixed $q$ and it monotonically converges to 2/5 as $q$ grows large. Since the target variance $v_n$ approaches 2/5 from below (see Remark~\ref{rem: studentization}), this means that $V_j^B$ is biased upwards with a bias that vanishes asymptotically when $q,n \to \infty$. The variance of $V_j^B$, in turn, is generally increasing in $q$. Figure~\ref{fig: illustration} illustrates these relationships for samples of size $n=500$ and $n=1000$. We resolve this bias-variance trade-off by considering the mean squared error of the bootstrap variance $V_j^B$:  
$$MSE_{j,n}(q)\coloneqq \E\left[\left(V_j^B - v_n\right)^2\right] = \left\lfloor\frac{n-1}{q+1}\right\rfloor^{-1}\left.\begin{cases}
            \frac{7}{20}     & \text{if } q  = 1 \\
            \frac{1353}{2800}    & \text{if } q  = 2 \\
            \frac{8}{25} + \frac{88}{175q} - \frac{229}{700q^2}    & \text{if } q  \ge 3 \\
            \end{cases}\right\} + \left(\frac{2}{5} + \frac{1}{10q} - v_n\right)^2. $$
Minimising $MSE_{j,n}(q)$ over $q\in\mathbb{N}^+$ yields our proposed optimal choice of $q$:
\[
	q^*(n) \coloneqq \argmin_{q\in\mathbb{N}^+} MSE_{j,n}(q).
\]
Since $MSE_{j,n}(q)$ is a known function of the sample size, the optimal choice $q^*(n)$ does not depend on the data (beyond $n$). The reason for that is that the individual bootstrap statistic $T^B_{0,j}$ depends on the data only through the ranks of the concomitants, $U_{j,i}\coloneqq F_{Y_j}(Y_{j,(i)})$, and these are independent and uniformly distributed under the null. In consequence, the optimal choice $q^*(n)$ is independent of $j$ and thus the same for each independence hypothesis $H_{0,j}$. 

\begin{figure}[t]
	\centering
	\includegraphics[width=0.95\textwidth]{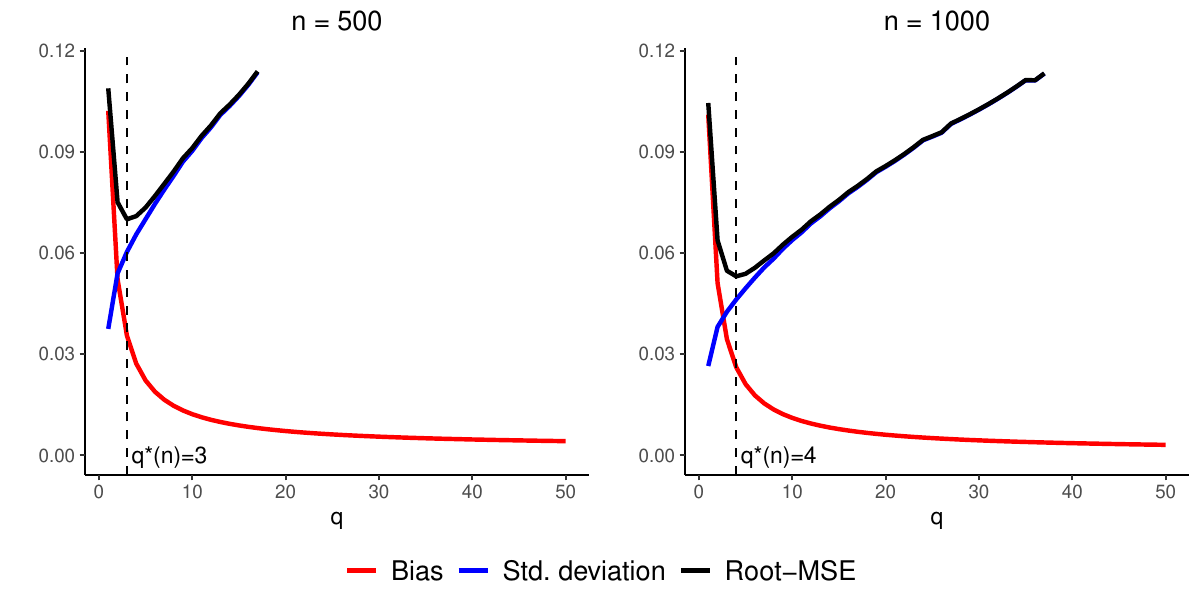}
	\caption{Bias, standard deviation, and the root mean squared error of $V_j^B$ based on the formulas in Lemma~\ref{lemma:VB} together with the optimal choice $q^*(n)$.}\label{fig: illustration}
\end{figure}

The minimiser $q^*(n)$ can be computed for each $n$ by evaluating the function $MSE_{j,n}(q)$ over a grid of values $q \in \{1,2,\ldots, \lfloor (n-1)/2 \rfloor\}$.
On most of its domain, $q^*(n)$ is a step function with large flat regions, but it can oscillate in small transition regions. For example,
%
%
$$q^*(n) = \begin{cases}
	1 & \text{if } 3\leq n \leq 87, \\
	2 & \text{if } 88\leq n \leq 224, \\
	2 \text{ or } 3 &\text{if } 225 \leq n \leq 244, \\
	3 & \text{if } 245 \leq n \leq 615 , \\
	3 \text{ or } 4 &\text{if } 616 \leq n \leq 645, \\
	4 & \text{if } 646 \leq n \leq 1344, \\
	\ldots
\end{cases} $$
where $q^*(225)=3$, $q^*(n)=2$ for $n \in \{226,\ldots,232\}$, $q^*(233)=3$, etc.
The non-monotone ranges make it impractical to tabularise $q^*(n)$. Instead, we approximate $q^*(n)$ using a smooth function.
To motivate this approximation, note that in a setting where $n$ and $q$ are large but $q$ is much smaller than $n$, which is consistent with the asymptotic regime of Assumption~\ref{ass:rates}, $MSE_{j,n}(q)$ is close to 
$
0.32 q/n + 0.01/q^2$, which is minimised at
$$
	\tilde q(n) \coloneqq (n/16)^{1/3}.
$$
This expression turns out to provide a good approximation to $q^*(n)$ even for small $n$ in the sense that $|\tilde{q}(n) - q^*(n)| < 1$ for any $n \in \mathbb{N}^+$. This property implies that
\[
	q^*(n) = \begin{cases}
		\lceil \tilde q(n)\rceil \text{ if } MSE_{j,n}\big(\lceil \tilde q(n)\rceil \big) \leq MSE_{j,n}\big(\lfloor \tilde q(n)\rfloor\big),  \\
		\lfloor \tilde q(n)\rfloor  \text{ otherwise.}
	\end{cases}
\]
The above approximating property of $\tilde q(n)$ is illustrated in Figure~\ref{fig:optq}.

\begin{remark}[compatibility with rate conditions]
	The optimal choice of $q$ diverges at the rate $n^{1/3}$. This rate is compatible with Assumption~\ref{ass:rates} as long as $p = O(e^{n^a})$ for some $a<1/15$.
\end{remark}

\begin{figure}
	\centering
	\includegraphics[width=0.8\textwidth]{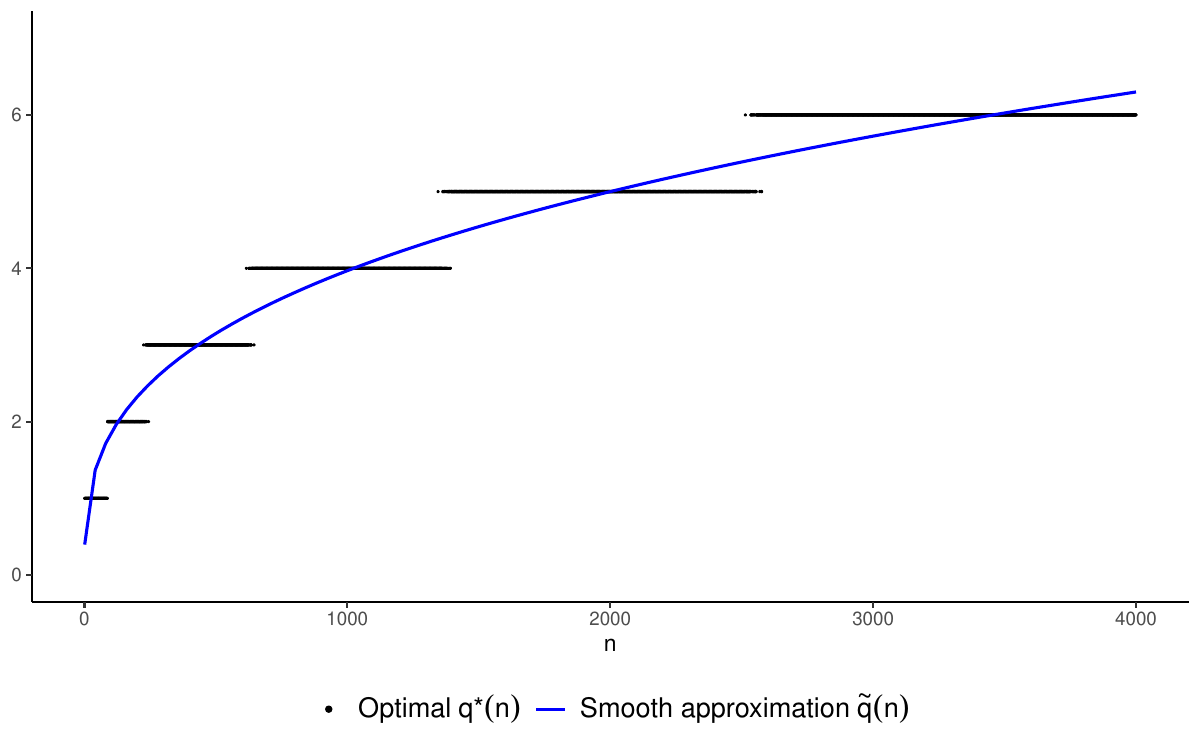}
	\caption{Optimal choice $q^*(n)$ and the smooth approximation $\tilde q(n)  \coloneqq (n/16)^{1/3} $.}
	\label{fig:optq}
\end{figure}

We note that the goal of this paper is to test the hypothesis $H_0$ that all individual hypotheses $H_{0,j}$, $j=1,\ldots,p$, hold simultaneously, while we derived the optimal $q^*(n)$ for an individual test statistic. Therefore, this choice does not necessarily minimise the distance between the distributions of the max-test statistic $\hat{T}\coloneqq \max_{1\leq j\leq p} \sqrt{n}\hat\xi_j$ and the bootstrap statistic $\hat{T}^B \coloneqq \max_{1\leq j \leq p} \hat{T}^B_j$ in any sense. However, minimising the distance between these two distributions is considerably more difficult because the individual statistics may be arbitrarily dependent and the optimal $q$ would then depend on their (unknown) dependence structure. Developing a feasible version of this approach would require estimation of the copula of a high-dimensional random vector, which our proposal above avoids.

In simulations in Section~\ref{sec:simulations}, we find that our proposed choice of $q$ not only optimises the bootstrap approximation of the individual statistics, but also yields a good bootstrap approximation for the max-statistic.

\section{Stepwise Procedure}
\label{sec:stepwise procedure}

In the previous section, we considered testing the null $H_0$ that all variables $Y_1,\ldots,Y_p$ are independent of $X$. Now, consider the problem of selecting individual hypotheses $H_{0,j}\colon Y_j\perp X$ that are violated. The previous section already yields a (``single-step'') method of selection: simply select all hypotheses $H_{0,j}$ for which $\sqrt{n}\hat \xi_j > \hat c(\alpha)$. This section introduces a stepdown procedure that improves upon the single-step procedure by possibly rejecting more hypotheses in finite samples. In addition, we show that the stepdown procedure (and, thus, by extension also the single-step procedure) guarantees asymptotic control of the family-wise error rate.

Let $J(P)\subseteq \{1,\ldots,p\}$ denote the set of hypotheses $H_{0,j}$ that are true under $P$. The family-wise error rate is defined as the probability of rejecting at least one true hypothesis,
$$FWER_P \coloneqq \Pr(\text{reject at least one } H_{0,j}\colon j\in J(P)). $$
For any $I\subseteq \{1,\ldots,p\}$, let
$$\hat T(I) \coloneqq \sqrt{n} \max_{j\in I} \hat \xi_j\qquad\text{and}\qquad \hat T^B(I) \coloneqq \max_{j\in I} \frac{1}{\sqrt{mq}} \sum_{k=1}^m \varepsilon_k \hat A_{j,k}, $$
where $\varepsilon_k$ and $\hat A_{j,k}$ are the multipliers and estimated blocks as introduced in the previous section. Finally, define $\hat c(\alpha;I)$ as the ($1-\alpha$)-quantile of $\hat T(I)$ given the data $\mathbb{D}$.

The following algorithm introduces the stepdown procedure.

\begin{algorithm}[Stepdown Procedure]\label{alg: step-down} Initialise $I_0=\{1,\ldots,p\}$ and $s=0$.
    \begin{enumerate} 
        \item Compute $\hat T(I_s)$ and $\hat c(\alpha;I_s)$.
        \item Is $\hat T(I_s)>\hat c(\alpha;I_s) $ satisfied?
            \begin{enumerate}
                \item {\bf yes:} Reject any hypothesis $H_{0,j}$ with $j\in I_s$ for which $\sqrt{n}\hat \xi_j > c(\alpha;I_s)$, then let $I_{s+1}\subset I_s$ denote the set of hypotheses that have not previously been rejected, set $s \to s+1$, and return to Step 1.
                \item {\bf no:} Stop.
            \end{enumerate}
    \end{enumerate}
\end{algorithm}

\begin{theorem}\label{thm: FWER control}
    Suppose that Assumptions~\ref{ass:continuity}--\ref{ass:rates} hold. Then, the procedure for rejecting individual hypotheses defined in Algorithm~\ref{alg: step-down} satisfies
    $$\limsup_{n\to\infty}\sup_{P\in\mathbf{P}_{\gamma,C_1}} FWER_P \leq \alpha$$
\end{theorem}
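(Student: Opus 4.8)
The plan is to follow the standard argument of \citet{romano2005exact} for stepdown multiple testing, reducing FWER control to a single-step error bound on the "true" index set, and then to invoke Theorem~\ref{thm: size control} applied to the subvector indexed by $J(P)$. First I would recall the key monotonicity property that drives Romano--Wolf stepdown procedures: for any $I \subseteq I'$ one has $\hat c(\alpha; I) \le \hat c(\alpha; I')$, since $\hat T^B(I) = \max_{j\in I}(mq)^{-1/2}\sum_k \varepsilon_k \hat A_{j,k} \le \hat T^B(I')$ pointwise in the multipliers and the data, so the conditional $(1-\alpha)$-quantiles are ordered. Using this, I would show the familiar implication: on the event that the procedure commits a false rejection, there must be a stage $s$ at which $J(P)\subseteq I_s$ (because up to the first false rejection only false hypotheses have been removed, hence all true ones remain) and simultaneously $\sqrt n\,\hat\xi_j > \hat c(\alpha; I_s)$ for some $j\in J(P)$. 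By the monotonicity of critical values, $\hat c(\alpha; I_s) \ge \hat c(\alpha; J(P))$, so this event is contained in $\{\hat T(J(P)) > \hat c(\alpha; J(P))\}$. Therefore
$$
FWER_P \;\le\; \Pr\!\left( \hat T(J(P)) > \hat c(\alpha; J(P)) \right).
$$

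The second step is to bound the right-hand side uniformly over $P\in\mathbf{P}_{\gamma,C_1}$. The point is that under $P$, every $j\in J(P)$ satisfies the individual null $H_{0,j}$, i.e. $Y_j\perp X$, so the subvector $(Y_j)_{j\in J(P)}$ together with $X$ satisfies exactly the hypotheses of Theorem~\ref{thm: size control}: Assumption~\ref{ass:continuity} holds for the subvector, Assumption~\ref{ass:rates} holds with the same $\gamma, C_1$ (the rate conditions only get easier when $p$ is replaced by $|J(P)|\le p$, since $\log p$ is decreasing in the dimension and $q$ is unchanged), and $H_0$ restricted to the index set $J(P)$ holds by definition of $J(P)$. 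Hence Theorem~\ref{thm: size control}, applied verbatim to this sub-collection, gives
$$
\left| \Pr\!\left( \hat T(J(P)) > \hat c(\alpha; J(P)) \right) - \alpha \right| \;\le\; C n^{-c}
$$
with $c, C$ depending only on $\gamma, C_1$ — in particular not on which subset $J(P)$ is. Combining with the first step yields $FWER_P \le \alpha + C n^{-c}$ for all $P\in\mathbf{P}_{\gamma,C_1}$, and taking $\sup_P$ then $\limsup_{n\to\infty}$ gives the claim. One edge case to dispatch: if $J(P)=\emptyset$ then there are no true hypotheses and $FWER_P = 0$ trivially, so we may assume $J(P)\neq\emptyset$.

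The main obstacle, and the only genuinely nontrivial point, is making the reduction in the first step airtight — specifically verifying that at the first stage where a true hypothesis is (wrongly) rejected, the current index set $I_s$ still contains all of $J(P)$, and that the critical value comparison goes the right way. This is the classical Romano--Wolf argument, but it must be checked that our $\hat c(\alpha;\cdot)$ is a genuine conditional quantile of a max over the index set (so that the pointwise domination $\hat T^B(I)\le \hat T^B(I')$ for $I\subseteq I'$ indeed transfers to the quantiles), and that the remark after Theorem~\ref{thm: size control} about the constants depending on $P$ only through $(\gamma,C_1)$ can be used to get a bound uniform over the $2^p$ possible values of $J(P)$. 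Both are straightforward given what has already been established: the quantile monotonicity is immediate from monotonicity of quantiles under pointwise domination of the underlying random variable (conditionally on $\mathbb D$), and the uniformity over $J(P)$ is exactly the content of the uniformity statement already proved for Theorem~\ref{thm: size control}. Everything else is bookkeeping.
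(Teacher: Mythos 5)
Your proposal is correct and takes essentially the same route as the paper: the paper's proof consists precisely of verifying your two conditions — monotonicity of $\hat c(\alpha;\cdot)$ in the index set (which holds by construction, since the bootstrap max over a subset is pointwise dominated) and the uniform single-step bound $\sup_{P\in\mathbf{P}_{\gamma,C_1}}\Pr\bigl(\hat T(J(P)) > \hat c(\alpha;J(P))\bigr) \le \alpha + o(1)$ (obtained by inspecting the proof of Theorem~\ref{thm: size control} applied to the true index set) — and then invoking \citet{romano2005exact}, whose stepdown reduction you reproduce explicitly. The only micro-caveat is that when $|J(P)|=1$ one needs the remark that Theorem~\ref{thm: size control} extends to $p=1$ under slightly simplified rate conditions, a detail the paper's own proof likewise leaves implicit.
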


The theorem shows that the stepdown procedure in Algorithm~\ref{alg: step-down} asymptotically controls the family-wise error rate uniformly over data-generating processes in $\mathbf{P}_{\gamma,C_1}$.

\section{Simulations}\label{sec:simulations}

In this section, we report results from a series of extensive simulation experiments in which we studied the finite-sample performance of our new test and compared it to existing tests.

\subsection{Overview and Setup}

First, we examine the influence of the choice of block size $q$ on the finite-sample performance of our test. Having established that the test's rejection frequency is fairly insensitive to $q$, in subsequent experiments, we only consider our test with the optimal choice derived in Section~\ref{sec:choice of q}. In the second set of experiments, we extensively study size control and power of our test. We consider a variety of data-generating processes. Key parameters that we vary in the simulations are the degree of dependence among $Y_1,\ldots,Y_p$, the dimension $p$, and whether alternatives are sparse or dense. Having confirmed the theoretical findings that our test controls size and has power against all alternatives, we then show that existing tests based on distance covariance do not control size under our null hypothesis. They are valid only under the stronger null $H_0^{joint}$ mentioned in the introduction, i.e. when the copula of $Y_1,\ldots,Y_p$ given $X$ does not depend on $X$. Finally, we compare our test to existing tests in the special case in which the copula of $Y_1,\ldots,Y_p$ given $X$ does not depend on $X$, because these other tests are valid in that case.

\subsubsection{Data-Generating Processes} 

Let $X$ be distributed uniformly on $[-1,1]$, and let $(\epsilon_1,\ldots,\epsilon_p) \sim \mathcal N(0, \Sigma_\tau)$ be a random vector independent of $X$, where $\Sigma_\tau  \in \mathbb{R}^{p \times p} $ has diagonal elements equal to one and off-diagonal elements equal to $\tau$.
We consider a range of deviations from the null of pairwise independence that differ in the functional form of the association between $X$ and elements of $Y_1, \ldots, Y_p$, as well as in the number and strength of violations of individual hypotheses. In all considered models, the magnitude of the violations is parameterised by $\rho \in \mathbb{R}$, with the null hypothesis corresponding to $\rho=0$, while $\tau$ is the correlation between $Y_1, \ldots, Y_p$ under the null.

\begin{description}
    \item[Model~1~(Sparse linear alternatives).] 
    $Y_1 = \rho X + \varepsilon_1$ and $Y_j = \epsilon_j$ for $j \in \{2,\ldots, p\}$.
    \item[Model~2~(Sparse cosine alternatives).] 
    $Y_1 = \rho \cos(8 \pi X)  + \varepsilon_1$ and $Y_j = \epsilon_j$ for $j \in \{2,\ldots, p\}$.
    \item[Model~3~(Decaying linear alternatives).] 
    $Y_j =  0.9^{\,j-1} \rho X  + \epsilon_j$ for $j \in \{1,\ldots,p\}$.
    \item[Model~4~(Decaying cosine alternatives).] 
    $Y_j = 0.9^{\,j-1} \rho \cos(8 \pi X)   + \epsilon_j$ for $j \in \{1,\ldots,p\}$.
    \item[Model~5~(Dense linear alternatives).] 
    $Y_j = \rho X  + \epsilon_j$ for $j \in \{1,\ldots,p\}$.
    \item[Model~6~(Dense cosine alternatives).] 
    $Y_j = \rho  \cos(8 \pi X)  + \epsilon_j$ for $j \in \{1,\ldots,p\}$.
\end{description}

In Models 1 and 2, $\rho$ determines the magnitude of the violation of the first hypothesis $H_{0,1}\colon Y_1\perp X$, while all other hypotheses, $H_{0,j}\colon Y_j\perp X$ for $j \in \{2,\ldots, p\}$, hold. In Models 5 and 6, all individual hypotheses are violated to the same degree, while Models 3 and 4 offer an intermediate scenario where the magnitude of the violations is strongest for the first hypothesis and decays exponentially with the index of the hypothesis. The individual violations are reparameterised versions of data-generating processes considered in the simulation study by \citet{chatterjee2021new}. Note that under the null, when $\rho = 0$, all six models are identical.

We include cosine alternatives in these experiments partly because, in our empirical application, we conjecture such alternatives to be reasonable departures from independence.

\subsubsection{Tests}

We consider three variants of our proposed test that employ different types of studentisation. Implementations can be found in our accompanying R package \href{https://github.com/mauolivares/hdIndep}{\texttt{hdIndep}}.

\begin{description}
    \item[BMB0.] This test constructs the test statistic $\hat T$ based on the individual Chatterjee's rank correlations $\hat\xi_j$ as in \eqref{eq:test_statistic} and employs the critical value $\hat c(\alpha)$ from the distribution of the bootstrap statistic $\hat T^B$ as in \eqref{eq:bootstrap_statistic}. 
    \item[BMB1.] This test is a variant of BMB0 that uses the studentised test statistic $\hat{T}^{stud}$ and the critical value from the bootstrap statistic $\hat{T}^{B,stud\,1}$ as in Remark~\ref{rem: studentization}.
    \item[BMB2.] This test is a variant of BMB0 that uses the studentised test statistic $\hat{T}^{stud}$ and the critical value from the bootstrap statistic $\hat{T}^{B,stud\,2}$ as in Remark~\ref{rem: studentization}.
\end{description}

For the special case in which the copula of $Y_1,\ldots,Y_p$ given $X$ does not depend on $X$, we also compare our tests to the following alternatives:
\begin{description}
    \item[dcov.test] The distance covariance test of \citet{Szekely:2007io}.
    \item[dcorT.test] The distance correlation t-test for high dimension proposed by \citet{Szekely:2013re}.
    \item[ZXZL] Various rank-based tests proposed in \citet{zhou2024rank}: Hoeffding's D (ZXZL-D), Blum-Kiefer-Rosenblatt's R (ZXZL-R) and Bergsma-Dassios-Yanagimoto's $\tau^*$ (ZXZL-$\tau^*$).
    \item[ZZYS-agg.dcov] The test based on aggregation of marginal distance covariances proposed by \citet{zhu2020distance}.
\end{description}

\subsubsection{Implementation Details}

All simulations are based on $B=499$ bootstrap replications, and $S=5000$ Monte Carlo draws, except in Experiment~4.2, where $S$ is reduced to 1000 due to mitigate computational burden. The significance level is set to $\alpha=0.05$ throughout. Additional results are reported in Appendix~\ref{sec:additional_simulation_results} and are omitted from the main text for brevity.

\subsection{Experiment 1: BMB's Insensitivity to Block Size}

The first set of simulations concerns the rejection rates of our proposed test for different choices of the block size $q$. Figure~\ref{fig:SimI_tau0} presents the results under the null in Panel~A, and under the alternatives specified by Models~1--6 in Panels~B--G. We focus on the case in which there is no correlation between $Y_1,\ldots,Y_p$ under the null ($\tau=0$). The results are very similar when $\tau=0.5$; see Figure~\ref{fig:SimI_tau05} in Appendix~\ref{sec:additional_simulation_results}.

First, we note that our baseline procedure BMB0 controls the size across different scenarios for all considered values of $q$. It is, however, conservative in higher dimensions. The fact that the rejection rate is particularly low for $q=1$ is consistent with the upward bias in the bootstrap variance characterised in Lemma~\ref{lemma:VB}. The optimal rule from Section~\ref{sec:choice of q} yields $q^*(n)=3$ for $n=500$, marked by the vertical dashed lines in the graphs. This choice proves reasonable in all the considered scenarios. 
The issue of conservativeness is alleviated by the studentisation. Both BMB1 and BMB2 effectively shrink the bootstrap distribution and yield rejection probabilities very close to the nominal level of 5\% for small values of $q$. Since the correction in BMB1 is negligible for large $q$, the blue and black lines approach each other as $q$ increases. BMB2 maintains rejection rates closer to 5\% as $q$ grows, but it generally slightly overrejects. 

Panels~B--G of Figure~\ref{fig:SimI_tau0} indicate that all tests have high power in low dimensions. Under sparse and decaying alternatives, the power decreases as $p$ grows, while the power increases for dense alternatives. 

\begin{figure}[tp]
	\centering
	\includegraphics[height=0.95\textheight]{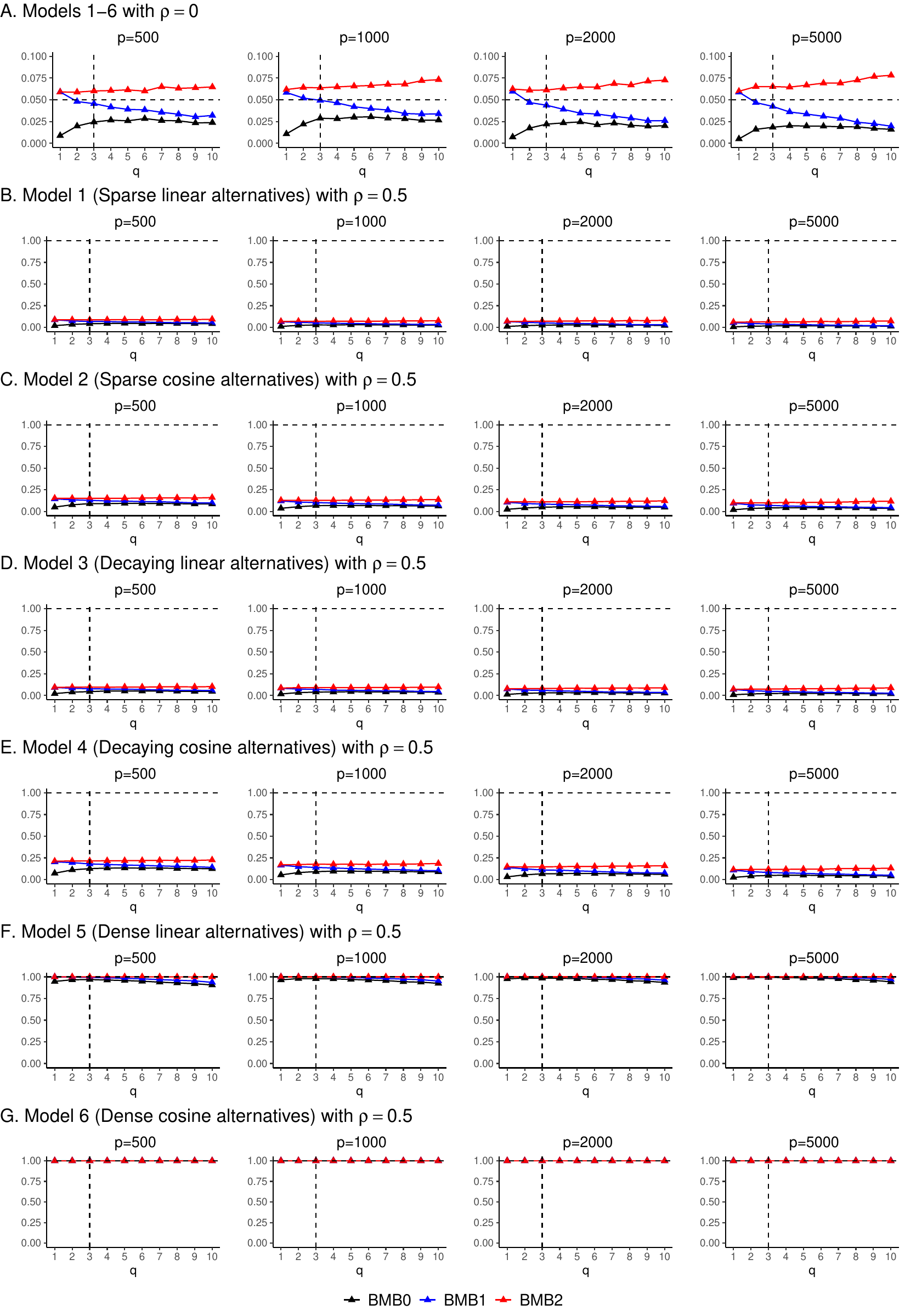}
	\caption{Rejection rates in Models 1--6 with $\tau=0$ under different choices of the big block size $q$.}
	\label{fig:SimI_tau0}
	\footnotesize{Notes: The tests have nominal level of 5\%. The null of independence holds if $\rho=0$, while $\tau$ denotes the correlation between variables $Y_1,\ldots, Y_p$ under the null. Results for sample size $n=500$, $B=499$ bootstrap replications, and $S=5,000$ Monte Carlo draws.}
\end{figure}

\subsection{Experiment 2: BMB's Size Control and Power}

Since, in Experiment 1, our test was seen to be fairly insensitive to the particular choice of block size $q$, we now analyse size control and power only for the optimal choice, which for $n=500$ is $q^*(n)=3$.

Figures~\ref{fig:power1} and~\ref{fig:power2} present the power curves of our test under linear and cosine alternatives, respectively. For each model, we consider the cases in which $Y_1,\ldots,Y_p$ are mutually independent ($\tau=0$) or dependent ($\tau=0.5$). In each graph, we show rejection frequencies of our test as we vary $\rho$. The null hypothesis corresponds to $\rho=0$, and as $\rho$ increases the violation of the null becomes larger.

\begin{figure}[tp]
	\centering
	\includegraphics[height=0.95\textheight]{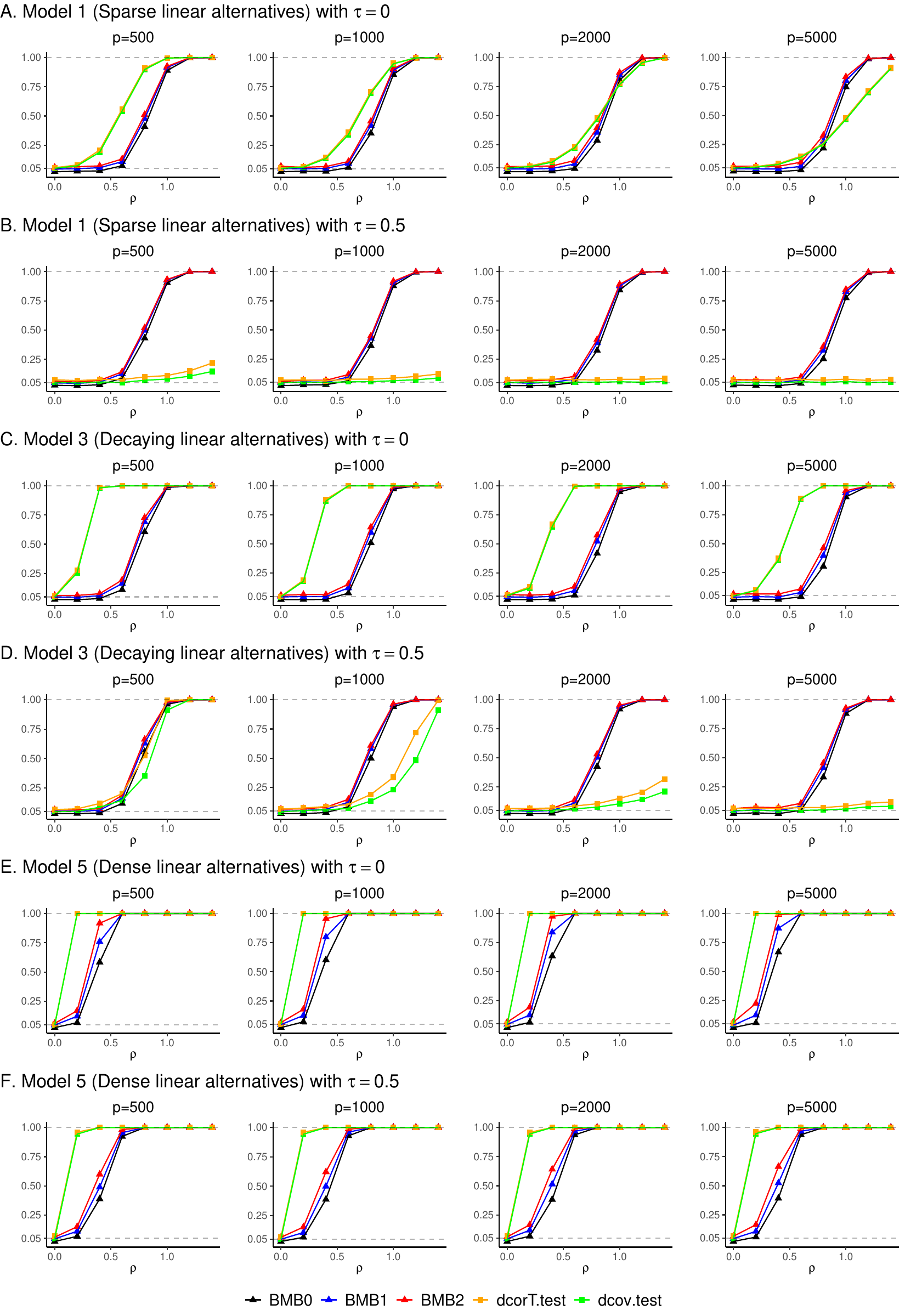}
	\caption{Power curves under {\bf linear alternatives} in high dimensions.}
	\footnotesize{Notes: The tests have nominal level of 5\%. Results for sample size $n=500$, $B=499$ bootstrap replications, and $S=5,000$ Monte Carlo draws.}
	\label{fig:power1}
\end{figure}

\begin{figure}[tp]
	\centering
	\includegraphics[height=0.95\textheight]{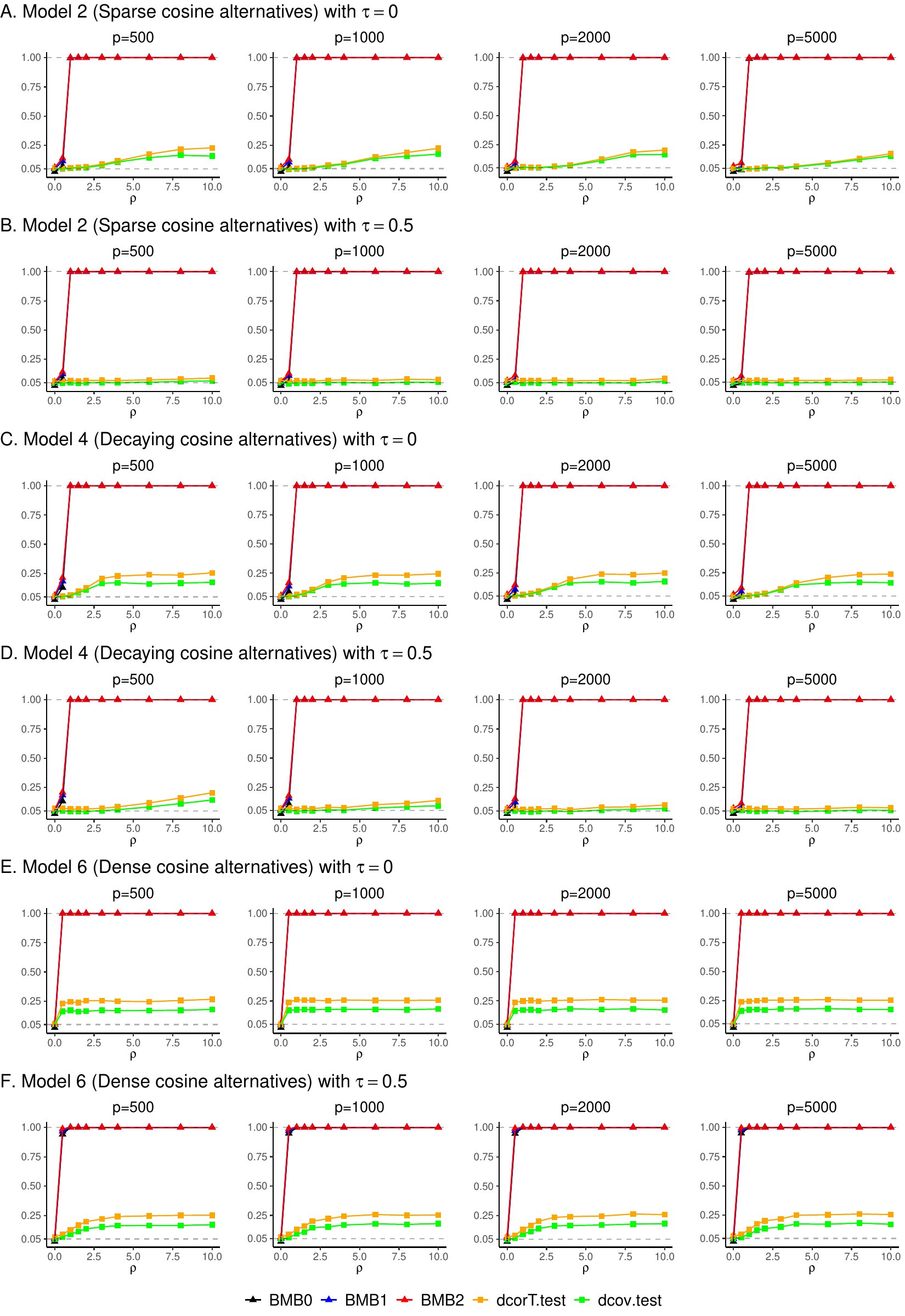}
	\caption{Power curves under {\bf cosine alternatives} in high dimensions.}
	\footnotesize{Notes: The tests have nominal level of 5\%. Results for sample size $n=500$, $B=499$ bootstrap replications, and $S=5,000$ Monte Carlo draws.}
	\label{fig:power2}
\end{figure}

First, the three variants of our test, BMB0, BMB1, and BMB2, control size across scenarios, including high-dimensional settings (large $p$) and those with dependence among $Y_1,\ldots,Y_p$ ($\tau=0.5$). 

Second, our test has power against all considered alternatives. Interestingly, comparing the power curves for $\tau=0$ and $\tau=0.5$, we see that the power of our test is not or only slightly affected by dependence among $Y_1,\ldots,Y_p$.

In Appendix~\ref{sec:additional_simulation_results}, we present analogous results for sample sizes $n=200$ and $n=1000$. While the power of the test increases with sample sizes, the qualitative conclusions remain the same.

\subsection{Experiment 3: Invalidity of Distance Covariance Tests}
\label{sec:invalidity_of_distance_covariance_tests}

As indicated above, the validity of existing tests dcorT.test, dcov.test, ZXZL, and ZZYS-agg.dcov have been established under the stronger null $H_0^{joint}$ that $(Y_1,\ldots,Y_p)\perp X$. We now investigate the size control properties of these tests when our null hypothesis $H_0\colon Y_j\perp X$ for $j=1,\ldots,p$ holds, but $H_0^{joint}$ is violated. 

To this end we consider a modified version of the data-generating processes Models~1--6, which are identical under the null, in which the dependence parameter $\tau$ depends on $X$. Specifically, we consider Model~1 with $\rho=0$ and instead of $(\epsilon_1,\ldots,\epsilon_p) \sim \mathcal N(0, \Sigma_\tau)$, we now consider $(\epsilon_1,\ldots,\epsilon_p) | X \sim \mathcal N(0, \Sigma_{\tau}(X))$, where $\Sigma_{\tau}(X)$ has diagonal elements equal to one and off-diagonal elements equal to $\tau(X) := \gamma (1+X)/2$. Since $\rho=0$, the marginal distributions of $Y_j$ are all independent of $X$, but the correlation parameter among $Y_1,\ldots,Y_p$ is a function of $X$. The parameter $\gamma$ governs the strength of that dependence. 

Figure~\ref{fig:rejection_vs_gamma} presents the rejection rates of all tests under the null hypothesis $H_0$ when $p=50$. Our test controls size across all values of $\gamma$, while the distance covariance tests dcorT.test and dcov.test do not and their rejection rates increase with $\gamma$. Interestingly, the ZXZL and ZZYS-agg.dcov tests also control size. Since they are based on marginal distance covariance and rank-based statistics, we conjecture that these tests are not only valid under $H_0^{joint}$, but also under $H_0$.

\begin{figure}[tp]
	\centering
	\includegraphics[width=0.8\textwidth]{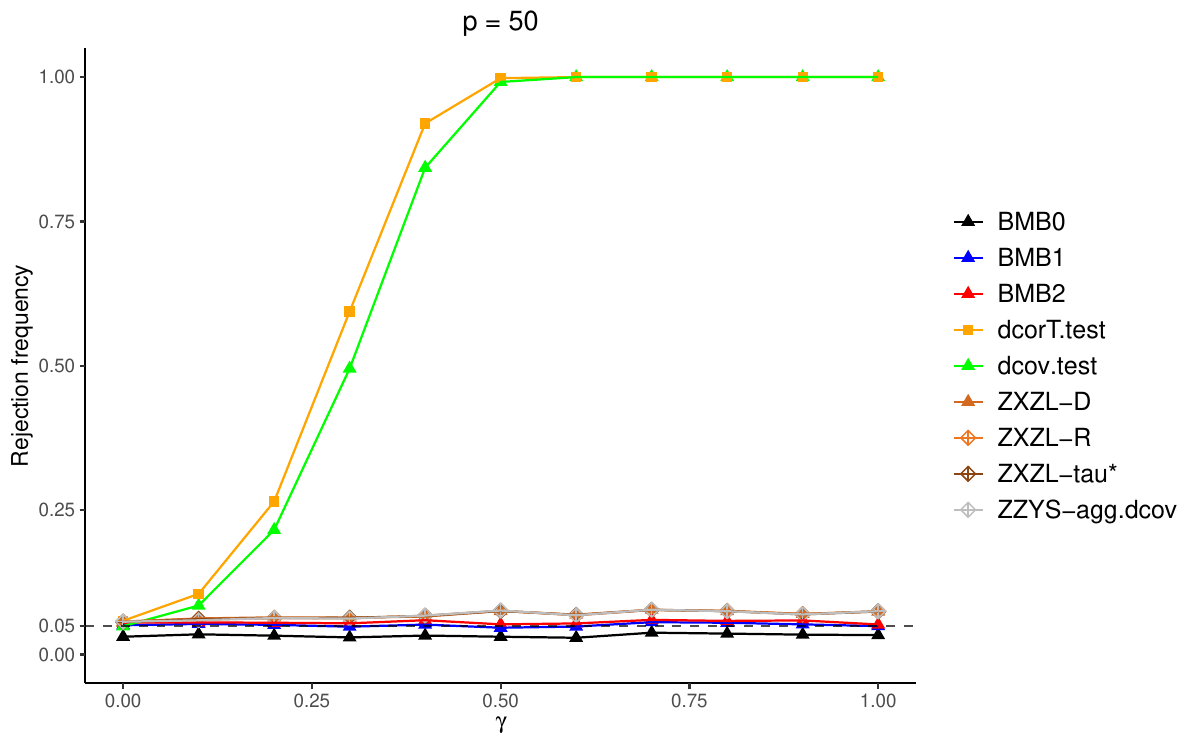}
	\caption{Rejection rates under the null when the copula of $Y_1,\ldots,Y_p$ given $X$ is not independent of $X$.}
	\footnotesize{Notes: The tests have nominal level of 5\%. Results for sample size $n=500$, $B=499$ bootstrap replications, and $S=5,000$ Monte Carlo draws.}
	\label{fig:rejection_vs_gamma}
\end{figure}

\subsection{Experiment 4: Comparison to Other Tests}

In this section, we compare our test to all other tests in the special case in which the copula of $Y_1,\ldots,Y_p$ given $X$ does not depend on $X$, both under the null and the alternative hypotheses. We divide this experiment into two sub-experiments because the ZXZL and ZZYS-agg.dcov tests are computationally so demanding that we were not able to run experiments in high dimensional settings ($p$ large). Therefore, we first consider an experiment in which we compare our test to the distance covariance tests dcorT.test and dcov.test for a range of values of $p$ up to very high dimensions. In a second experiment, we then compare our test to the ZXZL and ZZYS-agg.dcov tests in more moderate dimensions up to $p=500$.

\subsubsection{Experiment 4.1: Comparisons Including Very High-Dimensional Settings}

Figures~\ref{fig:power1} and~\ref{fig:power2} present the power curves of our test under linear and cosine alternatives, respectively. For each model, we consider the cases in which $Y_1,\ldots,Y_p$ are mutually independent ($\tau=0$) or dependent ($\tau=0.5$). In each graph, we show rejection frequencies of our test as we vary $\rho$. The null hypothesis corresponds to $\rho=0$, and as $\rho$ increases the violation of the null becomes larger.

Several interesting findings emerge from these figures. First, while our test is almost insensitive to dependence among $Y_1,\ldots,Y_p$, the distance covariance tests dcorT.test and dcov.test are not. For instance, comparing Panel A and B in Figure~\ref{fig:power1} for sparse alternatives, we see that their power may suffer substantially when $Y_1,\ldots,Y_p$ are dependent. A similar but less extreme pattern is observed under decaying linear alternatives (Panels C and D in Figure~\ref{fig:power1}). Under dense alternatives, under which these tests are known to perform particularly well, their power is less affected by dependence among $Y_1,\ldots,Y_p$.

Second, as expected, the distance covariance tests perform particularly well under dense alternatives, dominating the power curves of our tests. Our test performs particularly well under sparse and decaying alternatives, where it tends to be more powerful than the distance covariance tests, particularly so in high dimensions.

Third, considering the cosine alternatives (Figure~\ref{fig:power2}), we see that our test is more powerful than the distance covariance tests in all scenarios, including even the dense alternatives in which one would expect the distance covariance tests to perform well.

In Appendix~\ref{sec:additional_simulation_results}, we present analogous results for sample sizes $n=200$ and $n=1000$. While the power of all tests increases with sample sizes, the qualitative conclusions remain the same.

\subsubsection{Experiment 4.2: Comparisons Limited to Moderately High Dimensions}

In this section, we compare our test to the ZXZL and ZZYS-agg.dcov tests.\footnote{The tests ZXZL-D, ZXZL-R, ZXZL-$\tau^*$, and ZZYS-agg.dcov were run using the implementation of \citet{zhou2024rank} provided at \url{https://github.com/Yeqing-TJ/Rank-based-test-in-high-dimension} [Accessed on March 10, 2025].} Due to the high computational complexity of these additional tests, we limit the simulation setup to samples of size $n=200$ with the maximum of $p=200$ individual hypotheses.\footnote{Estimation of the variance of the rank-based indices of \citet{zhou2024rank} in one sample with $p=500$ and $n=500$ takes over 40 minutes using an Intel i7-1185G7 @ 3.00 GHz processor, which renders simulations for such settings infeasible. For comparison, our bootstrap test with $B=499$ takes less than a second to compute in this setting.}

Figures~\ref{fig:Sim_III_linear} and \ref{fig:Sim_III_cosine} show power comparisons analogous to Figures~\ref{fig:power1} and~\ref{fig:power2}, respectively. The broad patterns are similar to those in Figures~\ref{fig:power1} and~\ref{fig:power2} in that the ZXZL and ZZYS-agg.dcov tests exhibit similar performance to the distance covariance tests studied in Experiment 4.1. 

In the low-dimensional settings with linear alternatives (Figure~\ref{fig:Sim_III_linear}), the ZXZL and ZZYS-agg.dcov tests dominate our test when $p$ is small, there is no dependence among $Y_1,\ldots,Y_p$, or the alternatives are dense. Our test performs relatively better the larger $p$ and dominates the other tests when $p$ is sufficiently large (except for the dense alternatives). As in Experiment 4.1, our test's power curves are barely affected by dependence among $Y_1,\ldots,Y_p$, but the alternative tests' power curves decrease substantially when $Y_1,\ldots,Y_p$ are dependent; compare for instance Panels~A and B in Figure~\ref{fig:Sim_III_linear}. 

Under cosine alternatives (Figure~\ref{fig:Sim_III_cosine}), the ZXZL and ZZYS-agg.dcov tests have virtually no power, and our test is more powerful than the competitors in all scenarios.

\begin{figure}[tp]
	\centering
	\includegraphics[height=0.9\textheight]{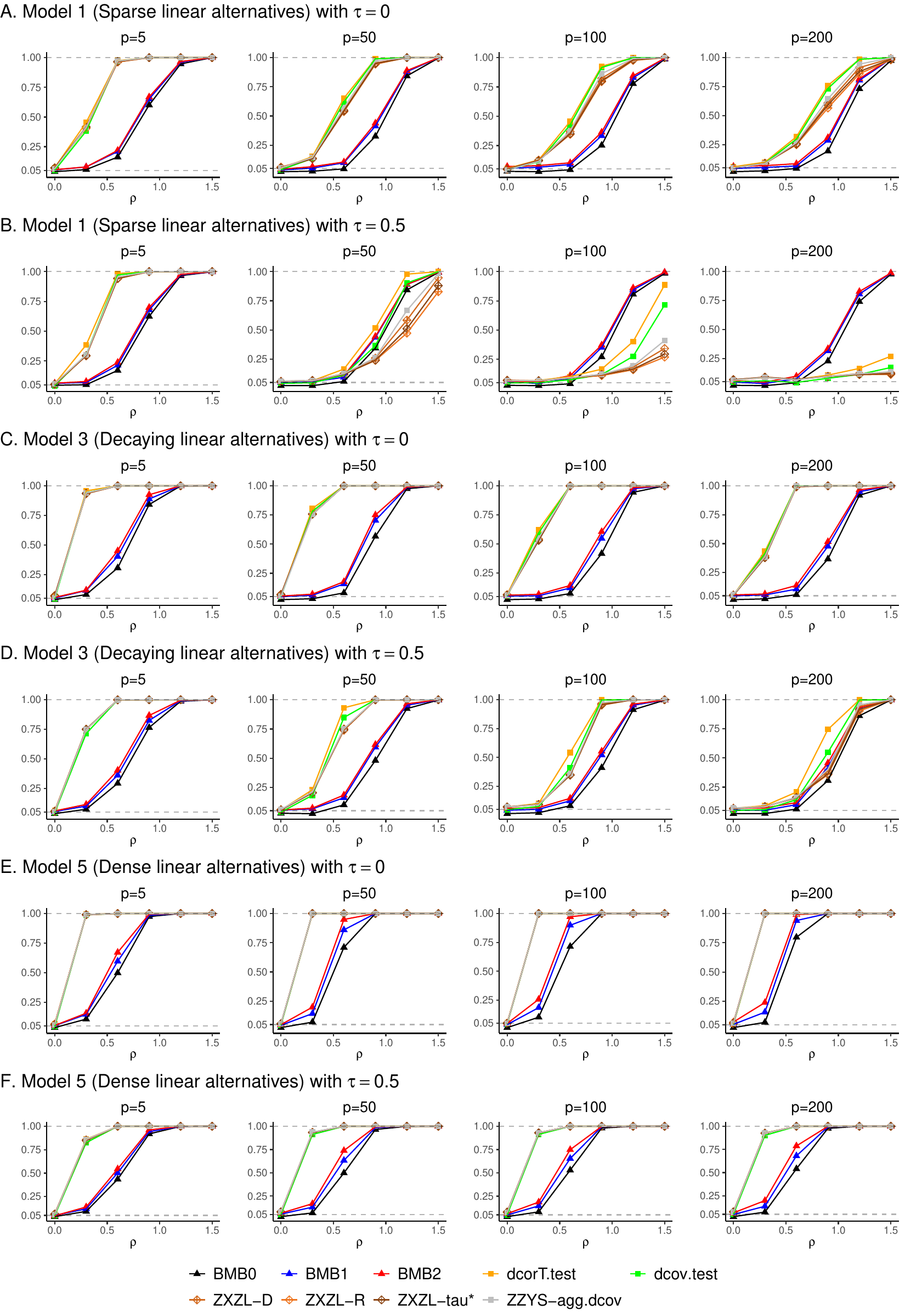}
	\caption{Power curves under {\bf linear alternatives} in moderately high dimensions.}
	\footnotesize{Notes: The tests have nominal level of 5\%. Results for $n=200$, $B=499$ bootstrap samples, and $S=1,000$ Monte Carlo draws.}
	\label{fig:Sim_III_linear}
\end{figure}

\begin{figure}[tp]
	\centering
	\includegraphics[height=0.9\textheight]{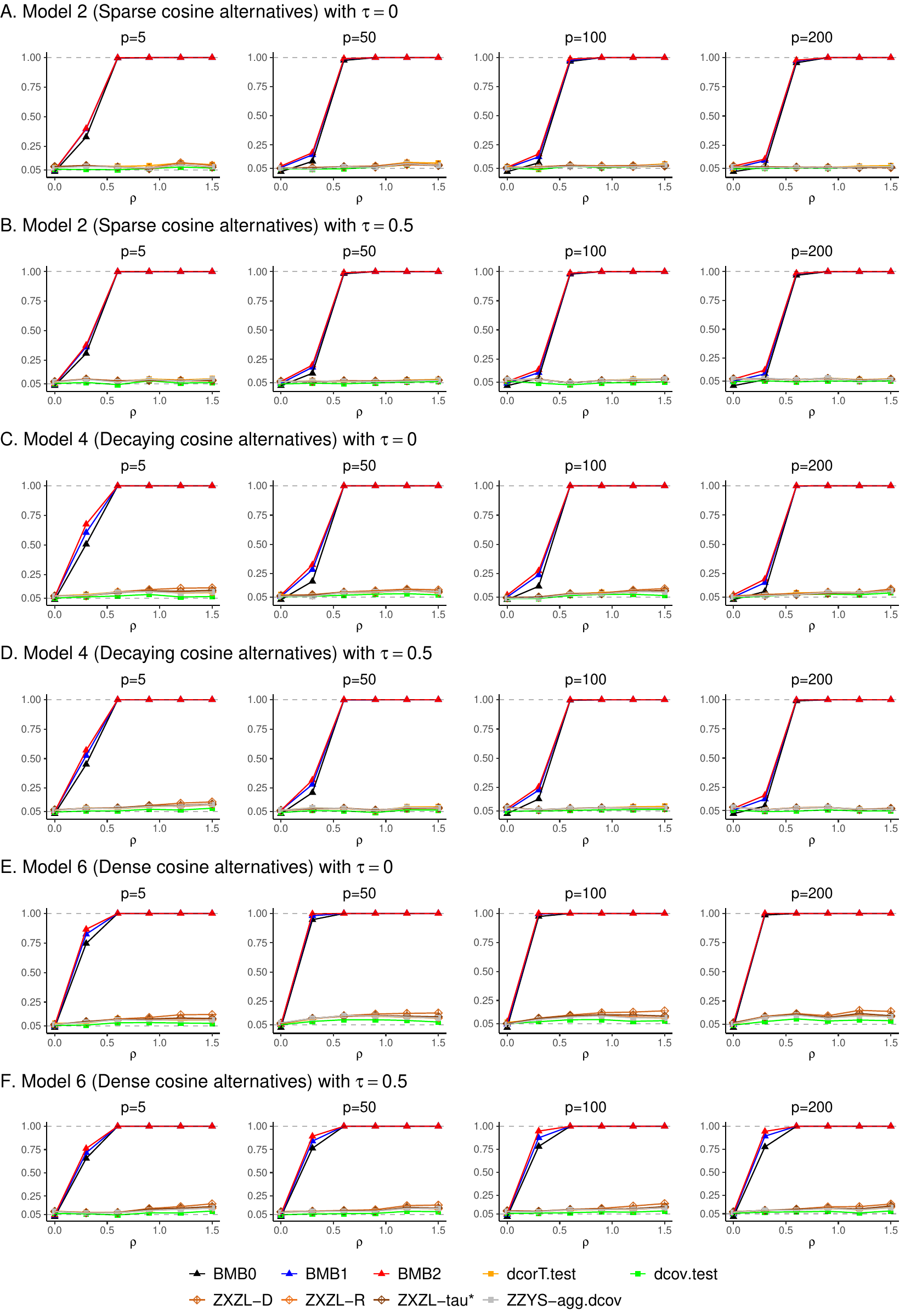}
	\caption{Power curves under {\bf cosine alternatives} in moderately high dimensions.}
	\footnotesize{Notes: The tests have nominal level of 5\%. Results for $n=200$, $B=499$ bootstrap samples, and $S=1,000$ Monte Carlo draws.}
	\label{fig:Sim_III_cosine}
\end{figure}

\subsection{Summary of Findings}

The simulation experiments reveal several interesting findings:

\begin{enumerate}
    \item Our test is fairly insensitive to the choice of block size $q$.
    \item Our test's performance is barely affected by the strength of dependence among $Y_1,\ldots,Y_p$, while the alternative tests' power deteriorates under dependence.
    \item Our test is relatively more powerful in higher dimensions and dominates the other tests when $p$ is sufficiently large, except when alternatives are dense.
    \item Our test is powerful against cosine alternatives, while the alternative tests are not.
    \item The distance covariance tests dcorT.test and dcov.test may not control size under the null hypothesis $H_0$.
    \item We did not find any evidence against validity of the ZXZL and ZZYS-agg.dcov tests under the null hypothesis $H_0$, but these tests are computationally much more demanding than ours and infeasible to simulate in high dimensions.
\end{enumerate}

\section{Empirical Application}\label{sec:empirical_application}

We illustrate the practical usefulness of our proposed test by revisiting the study conducted by \cite{hughes2009harmonics}, which investigates transcriptional oscillations from mouse liver, NIH3T3, and U2OS cells. The goal is to identify genes whose transcript levels oscillate during the cell cycle. These cycles are crucial as they play a key role in regulating metabolism and liver function, e.g., many liver genes operate on daily cycles, influencing vital processes such as detoxification, energy metabolism, and hormone regulation.

\cite{hughes2009harmonics} collected liver tissue samples from mice at hourly intervals over a 48-hour period, pooling samples from 3--5 mice at each point. Their gene-level expression data set is available from Gene Expression Omnibus (GEO). In this section, we focus on the liver data set (accession GSE11923). We extracted the data using the \texttt{GEOquery} and \texttt{BiocManager} R packages. The final data set contains $p=45,101$ genes. For each gene $j$, we observe $n=48$ transcript level measurements $Y_{j,1},\ldots,Y_{j,n}$ at different points in time, recorded in $X_1,\ldots,X_n$. 

The goal is to test whether gene transcriptions $Y_j$ are independent of the time of measurement $X$ and, in particular, to identify genes that violate independence. 

We apply our proposed bootstrap test based on the studentised test statistic BMB1 described in Remark~\ref{rem: studentization}, combined with the stepwise procedure developed in Section~\ref{sec:stepwise procedure}, both implemented via the R package \href{https://github.com/mauolivares/hdIndep}{\texttt{hdIndep}}. We set $\alpha=0.05$, the nominal level at which the family-wise error rate is to be controlled, and the number of bootstrap samples to $B=1,000$. We choose the optimal block size developed in Section~\ref{sec:choice of q}, which in this application is $q^*(n)=1$.

The stepdown procedure identifies $4,554$ genes that violate the independence hypothesis. These correspond to approximately $10.1\%$ of all gene transcripts. The stepdown procedure rejects $4,052$ hypotheses in the first and $502$ in the second step, demonstrating the power gains from employing multiple steps.

The original study by \cite{hughes2009harmonics} identified $3,667$ gene transcripts showing oscillatory behaviour, but based on a different methodology for testing a different hypothesis than ours.\footnote{The authors identify oscillatory patterns by testing for hidden periodicities in gene transcriptions using Fisher's G and \cite{straume2004dna} tests. These tests are designed to test the null that the data are generated by a Gaussian white noise process against the alternative that the data is generated by a Gaussian white noise with a deterministic sinusoidal component. The authors then obtain so-called q-values following \cite{storey2003statistical} so as to select gene transcripts whose q-values are less than $\alpha=0.05$.} Their procedure is guaranteed to control the false discovery rate at $\alpha=0.05$. Interestingly, our procedure finds more genes than the original study while at the same time providing stronger guarantees in the form of family-wise error rate control.

Among the genes identified by our method, $1,919$ were also identified in the original study, thus demonstrating substantial overlap. On the other hand, it singled out $2,635$ genes not identified in the original study (about $5.8\%$ of all gene transcripts), thus highlighting its ability to detect novel rhythmic gene activity. Figure~\ref{fig:six_plots} shows transcript level measurements of a random sample of six genes identified by our method, but not in the original study. A complete list of identified gene transcripts is available in the replication code (see replication vignette in accompanying R package \href{https://github.com/mauolivares/hdIndep}{\texttt{hdIndep}}).

\begin{figure}[t]
    \centering
    \includegraphics[width=\linewidth]{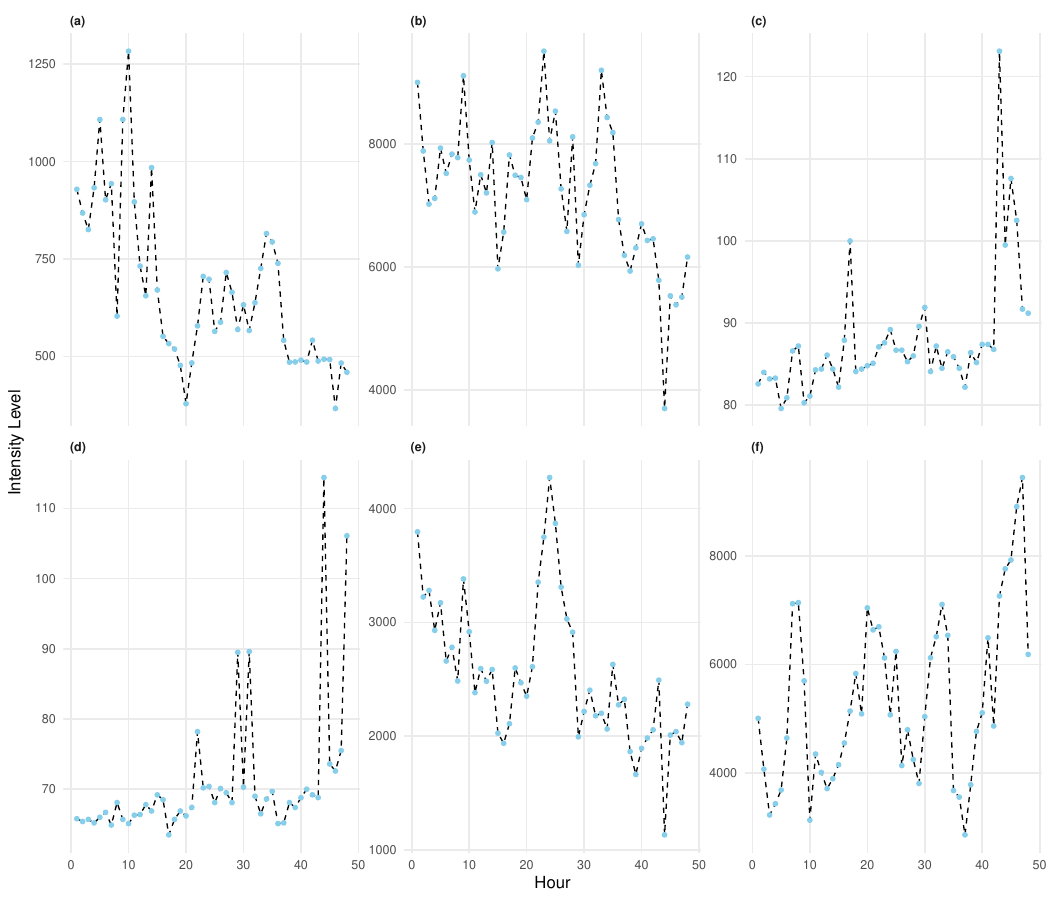}
    \caption{Visualisation of transcript levels of six genes uniquely identified by our proposed stepdown method in Section~\ref{sec:stepwise procedure}, with $B=1000$ Gaussian multiplier bootstrap samples, $\alpha=0.05$, and optimal $q^*(n)=1$. Gene symbols: (a) Hdac5, (b) Lrrc3, (c) Mrps2, (d) Uap1, (e) Mir5136, (f) Impg1.}
    \label{fig:six_plots}
\end{figure}


\ifthenelse{\boolean{arxiv}}{
\clearpage
\newpage
\appendix

\section{Proofs}

\subsection{Proofs for Section~\ref{sec:size control}}

The proof of Theorem~\ref{thm: size control} relies on a few auxiliary lemmas. Define
\begin{equation}
	\Delta_1 \coloneqq \max_{1 \leq j \leq p}\abs{r_{j,n}}\quad \text{ and }\quad \Delta_2 \coloneqq \max_{1 \leq j \leq p} \frac{1}{mq}\sum_{k=1}^m ( \hat A_{j,k} - A_{j,k})^2.
\end{equation}
\begin{lemma}\label{lem: verify_Assumption}
	Suppose that Assumption~\ref{ass:continuity} and the null hypothesis $H_0$ hold, and
	\[
	\max\big\{ \log^{7/2}p, \sqrt{q}\log^{3/2}p\big\} \leq C_3 n^{1/2-c_3} 
	\]
	for some positive constants $c_3$ and $C_3$.
	Then, there exist positive constants $c$, $c_1$, $c_2$, $C$, $C_1$, and $C_2$ depending only on $c_3$ and $C_3$ such that
	\begin{itemize}
		\item[(i)]  $\Pr\left(\Delta_1 > C_1 n^{-c_1}/\sqrt{\log p}\,\right) \leq C n^{-c}$,
		\item[(ii)] $\Pr\left(\Delta_2 > C_2n^{-c_2}/\log^2 p\,\right) \leq C n^{-c}$.
	\end{itemize}
\end{lemma}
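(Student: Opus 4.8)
The plan is to first reduce everything to a concentration bound for a Kolmogorov--Smirnov statistic. Under Assumption~\ref{ass:continuity} and $H_0$, $Y_j$ is independent of $X$, so reordering the i.i.d.\ sequence $(Y_{j,1},\dots,Y_{j,n})$ by the (independent) permutation that sorts $X$ again yields an i.i.d.\ sequence; hence $U_{j,1},\dots,U_{j,n}$ with $U_{j,i}\coloneqq F_{Y_j}(Y_{j,(i)})$ are i.i.d.\ $\mathrm{Unif}(0,1)$. Writing $\hat G_j$ for the empirical cdf of $U_{j,1},\dots,U_{j,n}$, we have $\hat U_{j,i}=\hat F_{Y_j}(Y_{j,(i)})=\hat G_j(U_{j,i})$, so $\max_{1\le i\le n}\abs{\hat U_{j,i}-U_{j,i}}\le D_j\coloneqq\sup_{u\in[0,1]}\abs{\hat G_j(u)-u}$. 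Since $\{U_{j,l}\}_{l=1}^n$ is an i.i.d.\ uniform sample, the Dvoretzky--Kiefer--Wolfowitz inequality gives $\Pr(D_j>t)\le 2e^{-2nt^2}$ for all $t>0$ and $j$ (this does not even use $H_0$), and a union bound yields $\Pr\big(\max_{1\le j\le p}D_j>t\big)\le 2p\,e^{-2nt^2}$. It then remains to bound $\Delta_1$ and $\Delta_2$ deterministically by functionals of $\max_j D_j$ and to choose $t$ appropriately.

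\noindent\textbf{Part (ii).} I would first note that $\hat A_{j,k}-A_{j,k}=\sum_{l}(\hat W_{j,l}-W_{j,l})$ is a sum of $q$ terms over the $k$-th block, so Cauchy--Schwarz gives $(\hat A_{j,k}-A_{j,k})^2\le q\sum_{l}(\hat W_{j,l}-W_{j,l})^2$; summing over the disjoint big blocks, $\Delta_2\le \tfrac1m\sum_{l=1}^{n-1}\max_{1\le j\le p}(\hat W_{j,l}-W_{j,l})^2$. Because $x\mapsto x(1-x)$ is Lipschitz on $[0,1]$ and $\bigl|\abs{a}-\abs{b}\bigr|\le\abs{a-b}$, one has $\abs{\hat W_{j,l}-W_{j,l}}\le C\bigl(\abs{\hat U_{j,l}-U_{j,l}}+\abs{\hat U_{j,l+1}-U_{j,l+1}}\bigr)\le 2CD_j$ for an absolute constant $C$. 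Using $m=\lfloor(n-1)/(q+1)\rfloor\asymp n/q$, this gives $\Delta_2\le C'q\max_{j}D_j^2$. Then I would take $t^2\asymp n^{-c_2}/(q\log^2 p)$ and invoke the tail bound: the hypothesis implies $q\log^3 p\le C_3^2 n^{1-2c_3}$, hence $2nt^2\gtrsim n^{1-c_2}/(q\log^2 p)\gg\log p\vee\log n$ for $c_2$ small enough, so $2p\,e^{-2nt^2}\le Cn^{-c}$, which is (ii).

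\noindent\textbf{Part (i).} This is where the real work lies. Writing $E_j\coloneqq\hat G_j-\mathrm{id}$ and inserting $R_i=n(U_{j,i}+E_j(U_{j,i}))$ into the rank form $\sqrt n\,\hat\xi_j=\sqrt n-\tfrac{3\sqrt n}{n^2-1}\sum_{i=1}^{n-1}\abs{R_{i+1}-R_i}$, I would decompose $r_{j,n}=\sqrt n\,\hat\xi_j-\tfrac1{\sqrt n}\sum_{i=1}^{n-1}W_{j,i}$ into a deterministic $O(n^{-3/2})$ piece (from $\tfrac{3n}{n^2-1}$ versus $\tfrac3n$) and the error of linearizing $u\mapsto\hat G_j(u)$ around the identity inside the functional $\sum_i\abs{U_{j,i+1}-U_{j,i}+E_j(U_{j,i+1})-E_j(U_{j,i})}$. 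Following the coupling argument of \cite{angus1995coupling}, the first-order term of this linearization is exactly the $-\tfrac6{\sqrt n}\sum_i U_{j,i}(1-U_{j,i})$ contribution (and the constant) already built into $W_{j,i}$, so $r_{j,n}$ is the second-order remainder. The quantitative claim to establish is that this remainder is bounded, up to the negligible deterministic terms, by $\sqrt n$ times a product of empirical-process fluctuations of $\sqrt n(\hat G_j-\mathrm{id})$, each of which is $O(\sqrt{\log(pn)/n})$ uniformly in $j$ by the tail bound above. Combined with the hypothesis $\log^{7/2}p\le C_3 n^{1/2-c_3}$ -- needed both for the oscillation/chaining estimate for $\sqrt n(\hat G_j-\mathrm{id})$ and to keep the union bound over $j$ manageable -- this would give $\max_j\abs{r_{j,n}}\le C_1 n^{-c_1}/\sqrt{\log p}$ on an event of probability at least $1-Cn^{-c}$. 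The main obstacle is precisely this step: upgrading the pointwise $r_{j,n}=o_P(n^{-1/2})$ of \cite{angus1995coupling} to a uniform bound with exponentially small exceptional probability, so that $\Delta_1$ survives the union bound over the possibly exponentially many indices and can feed into the high-dimensional Gaussian approximation behind Theorem~\ref{thm: size control}.
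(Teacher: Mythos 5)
Your Part (ii) is correct and follows essentially the paper's own route: bound $|\hat W_{j,l}-W_{j,l}|$ by a constant times the Kolmogorov--Smirnov statistic $D_j$, deduce $\Delta_2\le Cq\max_j D_j^2$, and finish with DKW plus a union bound over $j$, using $\sqrt q\,\log^{3/2}p\le C_3 n^{1/2-c_3}$ exactly as you do. Your setup observations (permutation invariance of $\hat F_{Y_j}$, $\hat U_{j,i}=\hat G_j(U_{j,i})$, i.i.d.\ uniformity of the $U_{j,i}$ under $H_0$) also match the paper.

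Part (i), however, has a genuine gap, and you name it yourself: you never actually bound the second-order remainder; you only assert that it should be ``$\sqrt n$ times a product of empirical-process fluctuations'' and flag the uniform, polynomial-tail version of Angus's $o_P(n^{-1/2})$ as ``the main obstacle.'' That obstacle is the entire content of the paper's proof of (i). Concretely, following Angus the remainder is (up to an $O(1/n)$ term) $\sqrt n\,\tilde I_{j,n}$ with $\tilde I_{j,n}=\int\int\bigl(|\hat F_{U_j}(y)-\hat F_{U_j}(x)|-|y-x|\bigr)\,d(\hat H_j-H)$, where $\hat H_j$ is the empirical law of the consecutive pairs $(U_{j,i},U_{j,i+1})$. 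The product heuristic does not go through directly: the signed measure $\hat H_j-H$ has total variation of order one, so you cannot bound the integral by $\lVert\text{integrand}\rVert_\infty$ times a small quantity, and the integrand is itself random and correlated with $\hat H_j$. The paper overcomes this by (a) a KMT strong approximation of $\sqrt n(\hat F_{U_j}-\mathrm{id})$ by a Brownian bridge $B_j$ with exponential error tails, (b) replacing $B_j$ by a piecewise-constant version on a mesh of size $\delta$ via L\'evy's modulus of continuity, so that the integral against $d(\hat H_j-H)$ can be bounded by $(2/\delta)\lVert B_j\rVert_\infty\lVert\hat H_j-H\rVert_\infty$, and (c) a DKW-type inequality for the \emph{bivariate, 1-dependent} process $\hat H_j$ (proved by splitting into odd and even indices), combined with Gaussian tail bounds for $\lVert B_j\rVert_\infty$, before taking the union bound over the exponentially many $j$. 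None of these ingredients appears in your sketch --- in particular your univariate DKW bound for $D_j$ does not control $\lVert\hat H_j-H\rVert_\infty$, which requires the 1-dependent bivariate version --- so as written the proposal establishes (ii) but not (i).
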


\begin{proof}
    The proof is divided into two parts, one for showing (i) and one for showing (ii). In the subsequent derivations, $c,\,c_1,\,c_2,\,C,\,C_1,\,C_2$ denote generic positive constants depending only on $c_3$ and $C_3$. Their values may change from place to place.

	\noindent\textit{Part (i).} 
    Note that $\hat{F}_{Y_j}(Y_{j,(i)})=\hat{F}_{U_j}(U_{j,i})$, where $U_{j,i} \coloneqq F_{Y_j}(Y_{j,(i)})$ and $\hat{F}_{U_j}(\cdot)$ is the empirical \cdf of $\{U_{j,i}\}_{i=1}^n$. Recall that under the null hypothesis of independence and Assumption~\ref{ass:continuity}, for each $j$, $\{U_{j,i}\}_{i=1}^n$ are independent random variables distributed uniformly on $[0,1]$. Following \citet{angus1995coupling}, we note that
	\begin{align*}
		\hat\xi_j  & =  1 - \frac{ 3n }{n^2-1}\sum_{i=1}^{n-1} \abs{ \hat{F}_{U_j}(U_{j,i+1})-\hat{F}_{U_j}(U_{j,i})} \\
		& = \frac{n}{n^2-1}\sum_{i=1}^{n-1} \left(1 - 3|U_{j,i+1} - U_{j,i}| -  3\abs{ \hat{F}_{U_j}(U_{j,i+1})-\hat{F}_{U_j}(U_{j,i})} + 3|U_{j,i+1} - U_{j,i}| \right)  + \frac{1}{n+1} \\
		& = \frac{n}{n^2-1} \sum_{i=1}^{n-1} \left(1 - 3|U_{j,i+1} - U_{j,i}|\right) + \frac{1}{n+1} \\
		& \quad -  \frac{n}{n+1} \left( 3 \int_0^1\int_0^1 \left( \abs{ \hat{F}_{U_j}(y)-\hat{F}_{U_j}(x) } - |y-x| \right) H(dx,dy) + 3 \tilde I_{j,n} \right)  ,
	\end{align*}
	where $H(x,y) \coloneqq xy$ and 
	\begin{align*}
		& \tilde I_{j,n}  \coloneqq \int_0^1 \int_0^1 \left( |\hat{F}_{U_j}(y)-\hat{F}_{U_j}(x) | - |y-x| \right)\big(\hat H_{j}(dx,dy)- H(dx,dy)\big),\\
		& \hat H_{j}(x,y)  \coloneqq \frac{1}{n-1} \sum_{i=1}^{n-1} \ind \{U_{j,i}\leq x, U_{j,i+1} \leq y\}.
	\end{align*}
	Furthermore, \citet{angus1995coupling} showed that
	\begin{align*}
		\int_0^1\int_0^1 \left( \abs{ \hat{F}_{U_j}(y)-\hat{F}_{U_j}(x) } - |y-x| \right)dydx 
		= \frac{1}{n} \sum_{i=1}^{n}2 U_{j,i}(U_{j,i}-1) - \frac{1}{3}.
	\end{align*}
	It follows that
	\begin{align*}
		\hat\xi_j = \frac{1}{n} \sum_{i=1}^{n-1} W_{j,i} - 3 \tilde I_{j,n} + u_{j,n},
	\end{align*}
	where $W_{j,i} \coloneqq 2 - 3 |U_{j,i+1} - U_{j,i}|  - 6 U_{j,i}(U_{j,i}-1)$ and $|u_{j,n}| < C/n$.

	To prove part~(i), it therefore suffices to show that
	\begin{equation}\label{eq:toshow}
		\Pr\left(\max_{1\leq j \leq p}|I_{j,n}| > C_1 n^{-c_1}/\sqrt{\log p}\,\right) \leq C n^{-c},\quad I_{j,n} \coloneqq \sqrt{n}\tilde I_{j,n}.
	\end{equation}
    To show that \eqref{eq:toshow} holds, first note that by the union bound, for any $\varepsilon>0$,
    \begin{equation}\label{eq:startingpoint}
        \Pr\left(\max_{1\leq j \leq p}|I_{j,n}| > \varepsilon \right) \leq p \max_{1\leq j \leq p} \Pr\left(|I_{j,n}| > \varepsilon \right).
    \end{equation}
    Next, we derive a bound on $\Pr\left(|I_{j,n}| > \varepsilon \right)$ that holds uniformly in $j$. Let $\hat B_j(x) = \sqrt{n}(\hat{F}_{U_j}(x)- x)$ and observe that
    \begin{equation}\label{eq:Isplit}
    \begin{split}   
    I_{j,n} & = \int_{0\leq y \leq x \leq 1} \big(  \hat B_j(x) - \hat B_j(y) \big)(\hat H_{j}(dx,dy)- H(dx,dy)) \\
        &\quad - \int_{0\leq x < y \leq 1} \big(  \hat B_j(x) - \hat B_j(y) \big) (\hat H_{j}(dx,dy)- H(dx,dy)).  
    \end{split}
    \end{equation}

    By the Koml\'{o}s-Major-Tusn\'{a}dy approximation, there exists a sequence of Brownian bridges $B_j$ \citep[][Theorem 3, Ch.~12]{shorack2009empirical} such that
    \begin{equation}\label{eq:Isplit5}
    \Pr\left(\norm{\hat B_j(x) - B_j(x)}_\infty > \varepsilon\right) \leq b^{\sss KMT} \exp\left(-c^{\sss KMT}(\sqrt{n} \varepsilon - a^{\sss KMT} \log n)\right)
    \end{equation}
    for some positive constants $a^{\sss KMT}$, $b^{\sss KMT}$, and $c^{\sss KMT}$ and all $\varepsilon>a^{\sss KMT}\log n/\sqrt{n}$. 

    By suitably adding and subtracting $B_j(x)$ and  $B_j(y)$ in \eqref{eq:Isplit}, we obtain that
    \begin{equation}\label{eq:Isplit2}
        |I_{j,n}| \leq 8 \norm{\hat B_j(x) - B_j(x)}_\infty +  \abs{\int_0^1\int_0^1 B_j(x) d\tilde h(x,y)}  + \abs{\int_0^1\int_0^1 B_j(y) d\tilde h(x,y)},
    \end{equation}
    where
    \[
        \tilde h(x,y) = \begin{cases}
            \hat H_{j}(x,y)- H(x,y) \text{ if } y\leq x,\\
            H(x,y)- \hat H_{j}(x,y) \text{ if } x< y.\\
        \end{cases}
    \]

    We now derive a bound on the second and third term on the right-hand side in \eqref{eq:Isplit2}.
    Fix a $\delta>0$ and consider a partition $\mathcal S$ of the interval $[0,1]$ into $\lceil 1/\delta \rceil$ intervals of length at most $\delta$. 
    Let $B^\delta_j$ be a process with piecewise-constant paths on the subintervals in $\mathcal S$ such that for any $S \in \mathcal S$ and $x \in S$, $B_j^\delta(x) = \inf_{y\in S} B_j(y)$. Recall that all $B_j$ are standard Brownian bridges on $[0,1]$, and therefore their distribution is the same for all $j$. By Levy's modulus of continuity theorem for the Brownian bridge \citep[][Theorem~1, Ch.~14]{shorack2009empirical}, with probability one, it holds that
    \[
    \norm{B^\delta_j(x) - B_j(x)}_\infty \leq 2 \sqrt{\delta \log(1/\delta)}.  
    \]
    It follows that
    \begin{align*}
        \left| \int_0^1\int_0^1 B_j(x) d\tilde h(x,y) \right| & = \Big| \int_0^1\int_0^1 \big(B_j(x)- B^\delta_j(x)\big) d\tilde h(x,y) +\int_0^1\int_0^1 B^\delta_j(x) d\tilde h(x,y)\Big|\\
        & \leq  4 \sqrt{\delta \log(1/\delta)} + \frac{2}{\delta} \norm{B_j}_\infty \norm{\hat H_{j}- H}_\infty.
    \end{align*}
    Using the union bound, we obtain that
    \begin{align*}
        \Pr\left( \frac{1}{\delta}\norm{B_j}_\infty \norm{\hat H_{j}- H}_\infty \geq \varepsilon \right) 
        & \leq  \Pr\left( \left\{\norm{B_j}_\infty\ge \varepsilon^{1/2}\delta^{1/2}n^{1/4}\right\} \cup \left\{ \norm{\hat H_{j}- H}_\infty \geq\varepsilon^{1/2}\delta^{1/2}n^{-1/4} \right\} \right) \\
        & \leq  \Pr\left(\norm{B_j}_\infty\ge \varepsilon^{1/2}\delta^{1/2}n^{1/4} \right) +  \Pr\left( \norm{\hat H_{j}- H}_\infty \geq\varepsilon^{1/2}\delta^{1/2}n^{-1/4} \right).
    \end{align*}
    By classical results for the Brownian bridge \citep[e.g.][Theorem~4.1]{adler1986tail}, 
    \[
        \Pr\left(\norm{B_j}_\infty\ge\lambda\right) \leq C \exp(-c\lambda^2)
    \]
    for all $\lambda > 0$. By Lemma~\ref{lemma:DKW2}, which is a simple implication of the two-dimensional version Dvoretsky-Kiefer-Wolfowitz inequality \citep{kiefer1958deviations},
    \[
        \Pr\left( \sqrt{n}\norm{\hat H_{j}- H}_\infty \geq \lambda \right) \leq C\exp(-c\lambda^2).
    \]
    for all $\lambda > 0$. In total, we obtain that
    \begin{equation}\label{eq:Isplit3}
        \Pr\left(\frac{1}{\delta} \norm{B_j}_\infty \norm{\hat H_{j}- H}_\infty \geq \varepsilon \right) \leq C^{\sss BH}\exp(-c^{\sss BH} \varepsilon\delta \sqrt{n}).
    \end{equation}
	for some positive constants $c^{\sss BH}$ and $C^{\sss BH}$.

    Using \eqref{eq:Isplit2} and the subsequent steps, we obtain that
    \[
        |I_{j,n}| \leq 8 \norm{\hat B_j(x) - B_j(x)}_\infty + 8 \sqrt{\delta \log(1/\delta)} + \frac{4}{\delta}\norm{B_j}_\infty \norm{\hat H_{j}- H}_\infty .
    \]
    Using the union bound, we have that
    \[
        \Pr\left(|I_{j,n}| > \varepsilon \right) \leq   \Pr\left( 8 \norm{\hat B_j(x) - B_j(x)}_\infty > \frac{\varepsilon}{3} \right) +    \Pr\left( 8 \sqrt{\delta \log(1/\delta)}  > \frac{\varepsilon}{3} \right)   + \Pr\left( \frac{4}{\delta}\norm{B_j}_\infty \norm{\hat H_{j}- H}_\infty  > \frac{\varepsilon}{3} \right) .
    \]

    Now, let $\varepsilon = C_1 n^{-c_1}/\sqrt{\log p}$ and take $\delta = C_\delta n^{-r}$ for some $r>1/7$. 
    Recall that, by assumption, $\log p \leq C_3 n^{a}$ for some $a < 1/7$. 
    With this choice of $r$, we can choose $C_\delta$ sufficiently large and $c_1$ sufficiently small such that $8 \sqrt{\delta \log(1/\delta)}  < \varepsilon/3$.
    Using \eqref{eq:Isplit5}, \eqref{eq:Isplit3}, we have that
    \begin{align*}
        \Pr\left(|I_{j,n}| >  C_1 n^{-c_1}/\sqrt{\log p} \right) & \leq \Pr\left(|I_{j,n}| >  C_1C_3^{-1/2} n^{-c_1-a/2} \right)\\
        & \leq  b^{\sss KMT} \exp(-c^{\sss KMT}(C_1C_3^{-1/2} n^{1/2-c_1-a/2}/24- a^{\sss KMT} \log n))  \\
        & \quad +  C^{\sss BH}\exp(-c^{\sss BH} C_1C_3^{-1/2}C_\delta n^{1/2-c_1-r-a/2}/12) \\
        & \leq  C\exp(-c n^{1/2-c_1-r-a/2}).
    \end{align*}
    Combining this result with the union bound in \eqref{eq:startingpoint}, we obtain that
    \begin{align*}
        \Pr\left(\max_{1\leq j \leq p}|I_{j,n}| > C_1 n^{-c_1}/\sqrt{\log p} \right) & \leq p\, C\exp(-c n^{1/2-c_1-r-a/2})\\
        & \leq C\exp(n^a -c n^{1/2-c_1-r-a/2}).
    \end{align*}
    Since $a < 1/7$, we can choose $c_1$ sufficiently small and $r$ sufficiently close to $1/7$ such that $a < 1/2 - c_1 -r -a/2$. Hence, we can choose positive constants $c$, $c_1$, $C$, and $C_1$ such that part~(i) holds.

    \noindent\textit{Part (ii).} 
    First, note that
    \[
        |\hat W_{j,l} - W_{j,l}| \leq C D_j, \quad D_j \coloneqq \sup_{t \in \mathbb{R}}\abs{\hat F_{j}(t) - F_{j}(t)}.
    \]
    Further,
    \begin{align*}
        (A_{j,k}-\hat A_{j,k})^2 = \left(\sum^{kq+(k-1)}_{l=(k-1)(q+1)+1} \left(\hat{W}_{j,l} - W_{j,l}\right) \right)^2 & \leq C q^2 D_j^2.
    \end{align*}
    It follows that
    \begin{align*}
        \Delta_2 & =  \max_{1 \leq j \leq p} \frac{1}{mq}\sum_{k=1}^m  (A_{j,k}-\hat A_{j,k})^2  \leq C q \max_{1\leq j \leq p} D_j^2.
    \end{align*}
    Thus, using the union bound and the DKW inequality, we have that for any $c_2,C_2>0$,
    \begin{align*}
    \Pr\big( \Delta_2 > C_2 n^{-c_2}/\log^2p\big) & \leq p  \max_{1\leq j \leq p} \Pr\left( C q D_j^2 > C_2 n^{-c_2}/\log^2p\right)\\
    &   \leq 2 p \exp\left( -c \frac{n^{1-c_2}}{q \log^2p} \right)\\
    &   \leq 2 \exp\left(\log p -c \frac{n^{1-c_2}}{q \log^2p} \right).
    \end{align*}
Note that 
\[
	\log p - c \frac{n^{1-c_2}}{q \log^2p} = \left( 1 - c \frac{n^{1-c_2}}{q \log^3p} \right) \log p\leq - C n^c, 
\]
provided that $c_2$ is chosen small enough. It then follows that $\Pr\big( \Delta_2 > C_2 n^{-c_2}/\log^2p\big) \leq Cn^{-c}$.
\end{proof}

\begin{lemma}\label{lem: CCKB1}
    Suppose that Assumption~\ref{ass:continuity} holds and there exist constants $C_1 > 0$ and $0<\gamma<1/4$ such that $(1/q)\log^2 p \leq C_1 n^{-\gamma}$ and
	$\max\{q\log^{1/2}p,\log^{3/2}p, \sqrt{q}\log^{7/2}(pn) \} \leq C_1 n^{1/2-\gamma}$.    
    Then there exist constants $c,C>0$ depending only on $\gamma$ and $C_1$ such that, under the null hypothesis $H_0$,
    \begin{equation}\label{eq: kolmogorov_distance}
        \sup_{t\in\mathbb{R}}\abs{\Pr\left(T_0\le t\right)-\Pr\left(Z_0\le t\right)}\le C n^{-c}~.
    \end{equation}
\end{lemma}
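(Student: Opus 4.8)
The plan is to insert between $T_0$ and $Z_0$ the statistic built only from the independent ``big blocks'',
\[
\bar T \coloneqq \max_{1\le j\le p}\frac{1}{\sqrt{mq}}\sum_{k=1}^m A_{j,k},
\]
and to argue in three steps: (a) $T_0$ is within a polynomially small (in $n$) distance of $\bar T$ on an event of probability $1-Cn^{-c}$; (b) $\bar T$ is within Kolmogorov distance $Cn^{-c}$ of $Z_0$, via the high-dimensional CLT of \cite{chernozhukov2013gaussian,chernozhukov2019inference}; and (c) these two facts combine into \eqref{eq: kolmogorov_distance} through Gaussian anti-concentration \citep{chernozhukov2015comparison}. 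The structural facts I would use repeatedly are that, under $H_0$ and Assumption~\ref{ass:continuity}, the vectors $(U_{1,i},\dots,U_{p,i})'$, $i=1,\dots,n$, are i.i.d.\ with uniform marginals (but arbitrary cross-sectional dependence), so the block vectors $A_k\coloneqq(A_{1,k},\dots,A_{p,k})'$ are i.i.d.\ mean-zero (disjoint blocks of the $U$'s), the small-block terms $B_{j,1},\dots,B_{j,m}$ are mutually independent, and each $W_{j,i}$ is bounded by an absolute constant.

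For step (a), I would start from the decomposition $\sum_{i=1}^{n-1}W_{j,i}=\sum_k A_{j,k}+\sum_k B_{j,k}+R_j$ of Section~\ref{sec: test} and, using $|\max_j a_j-\max_j b_j|\le\max_j|a_j-b_j|$, bound
\[
|T_0-\bar T|\le\bigl|\sqrt{mq/n}-1\bigr|\,\bigl|\bar T\bigr|+\frac{1}{\sqrt n}\max_{1\le j\le p}\Bigl|\sum_{k=1}^m B_{j,k}+R_j\Bigr|.
\]
The normalization mismatch satisfies $|\sqrt{mq/n}-1|\le|mq/n-1|=O(1/q)$ from the definitions of $m$ and $r$. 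For $\bar T$ (and analogously for $\sum_k B_{j,k}$) I would split each block sum into its odd- and even-indexed parts, which by $1$-dependence are sums of independent bounded mean-zero terms, and apply Hoeffding's inequality to conclude that $\frac{1}{\sqrt{mq}}\sum_k A_{j,k}$ is sub-Gaussian with an absolute proxy variance; a union bound then gives $\Pr(|\bar T|>K\sqrt{\log(pn)})\le Cn^{-c}$ for $K$ large. Combined with $(1/q)\log^2 p\le C_1 n^{-\gamma}$, the first term is $O(n^{-c})$ off a $Cn^{-c}$-event. Similarly $\frac{1}{\sqrt n}\max_j|\sum_k B_{j,k}|=O_P\bigl(\sqrt{\log(pn)/q}\,\bigr)$ and $\frac{1}{\sqrt n}\max_j|R_j|\le Cq/\sqrt n$, both polynomially small by Assumption~\ref{ass:rates}. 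Hence $\Pr(|T_0-\bar T|>\delta_n)\le Cn^{-c}$ for some deterministic $\delta_n=O(n^{-c})$.

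For step (b), I would write $\bar T=\max_j\frac{1}{\sqrt m}\sum_{k=1}^m(A_{j,k}/\sqrt q)$, a maximum of a normalized sum of $m$ i.i.d.\ mean-zero vectors. Their coordinates have variance $\E[(A_{j,k}/\sqrt q)^2]=2/5+1/(10q)\in[2/5,1/2]$ (an elementary moment computation, cf.\ Remark~\ref{rem: studentization}), and the sub-Gaussianity established in step (a) makes their third and fourth moments and their sub-exponential Orlicz norms bounded by absolute constants, uniformly in $j$ and $q$. The CCK high-dimensional CLT then gives $\sup_t|\Pr(\bar T\le t)-\Pr(\max_j W_j\le t)|\le C(\log^7(pn)/m)^{1/6}$, where $W\sim\mathcal N\bigl(0,\tfrac1q\E[A_1A_1']\bigr)$; since $\tfrac1q\E[A_1A_1']=\tfrac1{mq}\sum_k\E[A_kA_k']$, the law of $\max_j W_j$ is exactly that of $Z_0$. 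Because $q\le Cn^{1/2-\gamma}$ forces $m\asymp n/q$, and the hypothesis $\sqrt q\,\log^{7/2}(pn)\le C_1 n^{1/2-\gamma}$ is equivalent to $q\log^7(pn)\le C_1^2 n^{1-2\gamma}$, this bound is $\le Cn^{-c}$ (one may take $c=\gamma/3$).

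Finally, for step (c), from step (a) one has $\Pr(T_0\le t)\le\Pr(\bar T\le t+\delta_n)+Cn^{-c}$ and symmetrically, so $\sup_t|\Pr(T_0\le t)-\Pr(\bar T\le t)|\le\sup_t\Pr(|\bar T-t|\le\delta_n)+Cn^{-c}$; transferring the last supremum to $Z_0$ through step (b) and applying Nazarov's anti-concentration inequality to the Gaussian maximum $Z_0$ (legitimate since $\Var(V_j)=2/5+1/(10q)\ge 2/5>0$) bounds it by $C\delta_n\sqrt{\log p}=O(n^{-c})$, and a triangle inequality with step (b) yields \eqref{eq: kolmogorov_distance}. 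I expect the real obstacle to be making the CLT rate in step (b) genuinely polynomial in $n$: bounding the moments of $A_{j,k}$ crudely by its block envelope (of order $\sqrt q$) would produce the rate $(q^2\log^7(pn)/n)^{1/6}$, which the hypotheses do \emph{not} control; the fix is to exploit that $A_{j,k}/\sqrt q$ is a normalized sum of $q$ bounded, $1$-dependent, mean-zero summands and is therefore sub-Gaussian with an absolute constant (the $1$-dependence being defused by the odd/even split), so that its relevant moments are $O(1)$ and the CCK bound collapses to $(\log^7(pn)/m)^{1/6}$. A secondary, purely bookkeeping nuisance is matching each negligible remainder in step (a) — in particular the normalization-mismatch term, of order $\sqrt{\log(pn)}/q$ — against the precise combination of rate conditions; there the factor $\log^{-2}p$ in $(1/q)\log^2 p\le C_1 n^{-\gamma}$ is exactly what makes it work.
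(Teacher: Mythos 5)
Your proposal is correct and follows the same architecture as the paper's proof: big-block/small-block decomposition with the small blocks and the leftover $R_j$ shown to be negligible, a high-dimensional CLT from \cite{chernozhukov2013gaussian} applied to the independent big-block sums, and Gaussian anti-concentration from \cite{chernozhukov2015comparison} to absorb the polynomially small perturbations and pass to $Z_0$. Two deviations are worth recording. First, you renormalize by $\sqrt{mq}$ at the level of the statistic and control the mismatch term $\abs{\sqrt{mq/n}-1}\max_j\abs{\tfrac{1}{\sqrt{mq}}\sum_k A_{j,k}}$ by a sub-Gaussian tail bound (note it should be the max of absolute values rather than $|\bar T|$, a trivial fix), whereas the paper keeps the $1/\sqrt n$ scaling through its Steps 1--3 and removes the factor $\sqrt{mq/n}$ on the Gaussian side in its Step 4 via $\E\abs{\max_j V_j}\le C\sqrt{\log p}$ plus anti-concentration; these are equivalent bookkeeping, and your small-block bound via Hoeffding plays the role of the paper's maximal-inequality-plus-Markov argument. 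Second, and more substantively, in the CLT step the paper invokes Corollary~2.1 of \cite{chernozhukov2013gaussian} with the crude envelope $B_n=2\sqrt q$ for the normalized blocks $A_{j,k}/\sqrt{n/m}$, while you verify the sub-exponential condition with $B_n=O(1)$ by exploiting that $A_{j,k}/\sqrt q$ is a normalized sum of bounded, $1$-dependent, mean-zero terms (odd/even split plus Hoeffding). Your verification is the one that matches the stated rate condition: it only needs $\log^7(pm)/m\lesssim q\log^7(pn)/n\le C_1^2\,n^{-2\gamma}$, which is exactly $\sqrt q\,\log^{7/2}(pn)\le C_1 n^{1/2-\gamma}$, whereas the envelope-based condition would require $q^2\log^7(pn)/n$ to be polynomially small, and this is not implied by the lemma's hypotheses (e.g.\ $\gamma$ small, $q\asymp n^{0.45}$, $\log p\asymp n^{0.02}$). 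So your closing remark about the ``real obstacle'' is on target, and your sub-Gaussianization of the block sums is the right (and arguably cleaner) way to justify that step.
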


\begin{remark}
    Lemma~\ref{lem: CCKB1} is similar to Theorem~E.1 in \cite{chernozhukov2019inference}, but it differs from the latter in two ways: the mixing rate does not appear in our bound and we do not need to assume that the small blocks grow with the sample size. This difference stems from our derivations directly imposing the 1-dependence of $\{W_{j,i}\}_{i=1}^n$ rather than a general mixing condition.
\end{remark}

\begin{proof}
    In this proof, $c$, $C$ denote generic positive constants depending only on $\gamma$ and $C_1$; their values may change from place to place. We divide the proof into several steps. In the first three steps, we show that
    \begin{equation}\label{eq:Th1decomposition}
    - Cn^{-c} \leq \Pr(T_0\leq t) - \Pr\left(\max_{1\leq j\leq p} \sqrt{\frac{mq}{n}}\,V_j\leq t\right) \leq Cn^{-c} 
    \end{equation}
    for all $t \in \mathbb{R}$. For $\star \in \{+,-\}$, consider the following decomposition
    \begin{align*}
    \Pr(T_0\leq t) &  - \Pr\left(\max_{1\leq j\leq p} \sqrt{\frac{mq}{n}}\,V_j\leq t\right)\\
        & = \Pr(T_0 \leq t) - \Pr\left(\max_{1\leq j \leq p} \frac{1}{\sqrt{n}} \sum_{k=1}^m A_{j,k}\leq t \star  \frac{C}{n^c \sqrt{\log p} } \right) \tag{i} \\
        & \quad + \Pr\left(\max_{1\leq j \leq p} \frac{1}{\sqrt{n}} \sum_{k=1}^m A_{j,k}\leq t \star \frac{C}{n^c \sqrt{\log p} } \right) - \Pr\left( \max_{1\leq j \leq p} \sqrt{\frac{mq}{n}} V_j \leq t \star \frac{C}{n^c \sqrt{\log p} } \right) \tag{ii}\\
        & \quad + \Pr\left( \max_{1 \leq j \leq p}\sqrt{\frac{mq}{n}} V_j \leq t \star\frac{C}{n^c \sqrt{\log p} } \right) - \Pr\left( \max_{1 \leq j \leq p} \sqrt{\frac{mq}{n}} V_j \leq t  \right). \tag{iii}
    \end{align*}
    In steps 1--3, we show that each of the terms in (i)--(iii) is bounded in the supremum norm (w.r.t. $t$) by $Cn^{-c}$. To show the first and second inequality in \eqref{eq:Th1decomposition}, we use $\star = -$ and $\star = +$, respectively.

    \noindent\textbf{Step 1.} (Reduction to independence). We wish to show that
    \begin{align*}
        \Pr\left(\max_{1\leq j \leq p} \frac{1}{\sqrt{n}} \sum_{k=1}^m A_{j,k}\leq t - Cn^{-c}/\sqrt{\log p} \right) - n^{-c} 
        &\leq \Pr(T_0 \leq t) \\
        &\leq \Pr\left(\max_{1\leq j \leq p} \frac{1}{\sqrt{n}} \sum_{k=1}^m A_{j,k}\leq t + Cn^{-c}/\sqrt{\log p} \right) + n^{-c}.
    \end{align*}
    We only prove the second inequality; the first inequality follows from an analogous argument.
    Recall that $\sum_{i=1}^{n-1} W_{j,i} = \sum_{k=1}^m A_{j,k} + \sum_{k=1}^m B_{j,k} + R_j$, so that
    \[
    \abs{\max_{1\leq j \leq p} \sum_{i=1}^{n-1} W_{j,i} - \max_{1\leq j \leq p} \sum_{k=1}^m A_{j,k}} \leq 
    \max_{1\leq j \leq p} \abs{\sum_{k=1}^m B_{j,k}} + \max_{1\leq j \leq p}\abs{R_j}.
    \]
    Hence for every $\delta_1,\delta_2>0$,
    \begin{align*}
        \Pr(T_0\leq t) & \leq \Pr\left( \max_{1\leq j \leq p} \frac{1}{\sqrt{n}} \sum_{k=1}^m A_{j,k} \leq t + \delta_1 + \delta_2\right)  + \Pr\left(\frac{1}{\sqrt{n}}\max_{1\leq j \leq p} \abs{\sum_{k=1}^m B_{j,k}}> \delta_1 \right)\\
        & + \Pr\left(\max_{1\leq j \leq p} \abs{R_j}> \sqrt{n} \delta_2 \right)\\
        & \eqqcolon I + II + III.
    \end{align*}
    This inequality holds because the event on the {\small LHS} implies that at least one of the events on the {\small RHS} holds.

    Since $|W_{j,i}|\leq 2$ for all $j$ and $i$, we have $\abs{R_j}\leq 2(q+1)\,$ a.s., uniformly in $j$. Thus, take $\delta_2=4qn^{-1/2}$ ($\le Cn^{-c}/\sqrt{\log p}$ ) and conclude $III=0$. Moreover, for every $\varepsilon>0$, Markov's inequality and $\delta_1=\varepsilon^{-1}\E(n^{-1/2}\max_{1\leq j \leq p}|\sum_{k=1}^m B_{j,k}|)$ imply $II\leq \varepsilon$. This way, it remains to bound the magnitude of $\E(n^{-1/2}\max_{1\leq j \leq p}|\sum_{k=1}^m B_{j,k}|).$ To this end, we note that $|B_{j,k}|\leq 2$ a.s. uniformly in $j$ and $k$, and that $\max_{j}\sum_{k=1}^m\Var(B_{j,k})/m$ is bounded above since $\Var\left(B_{j,k}\right)=1/2$, independent of $j$. Thus, independence of $\{B_{j,k}\}_{k=1}^m$ and Lemma~\ref{lemma:CCKA3} imply
    \[
    \E\left(\max_{1\leq j \leq p} \abs{\frac{1}{\sqrt{n}} \sum_{l=1}^mB_{j,k}} \right) \leq K\left(\sqrt{\frac{\log p}{q}} + \frac{\log p}{\sqrt{n}}\right),
    \] 
    where $K$ is universal (here we have used the simple fact that $m/n\leq 1/q)$, so that the left side is bounded by $Cn^{-2c}/\sqrt{\log p}$ (by taking $c$ sufficiently small). The conclusion of this step follows from taking $\varepsilon = n^{-c}$ so that $\delta_1 \leq C n^{-c}/\sqrt{\log p}$.

    \noindent\textbf{Step 2.} (Normal approximation to the sum of independent blocks). We wish to show that 
    \[
    \sup_{t \in \mathbb R} \abs{\Pr\left(\max_{1\leq j \leq p} \frac{1}{\sqrt{n}} \sum_{k=1}^m A_{j,k}\leq t \right) - \Pr\left( \max_{1\leq j \leq p} \sqrt{\frac{mq}{n}} V_j \leq t \right)} \leq C n^{-c}.
    \]
    Since $\{A_{j,k}\}_{k=1}^m$ are independent, we may apply Corollary~2.1 in \citet{chernozhukov2013gaussian} (note that the covariance matrix of $\sqrt{mq/n}\,V$ is the same as that of $n^{-1/2} \sum_{k=1}^m A_{k,j}$). To this end, we seek to verify the conditions of the corollary applied to our case. Observe that 
    \[
    \frac{1}{\sqrt{n}}\sum_{k=1}^m A_{j,k} = \frac{1}{\sqrt{m}} \sum_{k=1}^m \frac{A_{j,k}}{\sqrt{n/m}},
    \]
    and $\sqrt{q}\leq \sqrt{n/m} \leq 2 \sqrt{q}$ (recall that $q+1\le (n-1)/2$). For any $1\le q\le n$, let
    \[
    \bar{\sigma}^2(q)=\max_{1\le j\le p}\max_I \text{Var}\left(\frac{1}{\sqrt{q}}\sum_{i\in I}W_{ij}\right)\quad\text{ and }\quad \underline{\sigma}^2(q)=\min_{1\le j\le p}\min_I \text{Var} \left(\frac{1}{\sqrt{q}}\sum_{i\in I}W_{ij}\right)~,
    \]
    where the $\max_I$ and $\min_I$ are taken over all $I\subset\{1,\dots,n\}$ of the form $I=\{i+1,\dots, i+q\}$. \citet{angus1995coupling} shows that $\E(W_{j,i})=0$, $\Var(W_{j,i})=1/2$, and $Cov(W_{j,i},W_{j,i+k})=-1/20$ for $k=1$ ($0$ otherwise). Therefore,
    \begin{align*}
    \text{Var}\left(\frac{1}{\sqrt{q}}\sum_{i\in I}W_{i,j}\right)&=\frac{1}{q}\left(\sum_{i\in I}\text{Var}(W_{i,j})+2\sum_{i=1}^{q-1}\,\text{Cov}(W_{j,i},W_{j,i+1})\right)=\frac{2}{5}+\frac{1}{10q}~,
    \end{align*}
    so that the variance in the previous display is a monotone decreasing function in $q$, independent of $j$ with maximal value $1/2$ at $q=1$ and bounded from below by $2/5$. Therefore $\bar{\sigma}^2(q)= \underline{\sigma}^2(q) = 2/5+1/(10q)$. Collect these facts to conclude that
    \[
    \frac{2}{5}\frac{1}{4} \leq \frac{\underline{\sigma}^2(q)}{4} \leq \Var\left(\frac{A_{j,k}}{\sqrt{n/m}}\right) \leq \bar{\sigma}^2(q)\leq \frac{1}{2},\quad1\le j\le p~,
    \]
    and $\abs{A_{j,k}/\sqrt{n/m}}\leq 2\sqrt{q}\,$ a.s. so the conditions of Corollary~2.1 (i) in \citet{chernozhukov2013gaussian} are verified with $B_n=2\sqrt{q}$, which leads to the assertion of this step (note that  $C^{-1}n^c\le n/(4q)\le m$).

    \noindent\textbf{Step 3.} (Anti-concentration). We wish to verify that, for every $\varepsilon>0$,
    \begin{equation}\label{eq}
    \sup_{t\in\mathbb{R}} \Pr\left(\abs{\max_{1\leq j \leq p} V_j-t}\leq \varepsilon \right) \leq C \varepsilon \sqrt{\max\{1 , \log(p/\varepsilon)\}}.
    \end{equation}
    Indeed, since $V$ is a normal random vector with 
    \[
    \frac{2}{5} \leq \underline{\sigma}^2(q)\leq \Var(V_j) \leq \bar\sigma^2(q)\leq \frac{1}{2},\quad 1\leq j \leq p~,
    \]
    the desired result follows by Corollary~1 in \citet{chernozhukov2015comparison} restated in Lemma~\ref{lemma:CCK2015cor1}. We apply this result with $\varepsilon=C/(2n^c \sqrt{\log p}) (mq/n)^{-1/2}$ and conclude that
    \[
    \sup_{t \in \mathbb{R}}\abs{    \Pr\left( \max_{1 \leq j \leq p}\sqrt{\frac{mq}{n}} V_j \leq t \star\frac{C}{n^c \sqrt{\log p} } \right) - \Pr\left( \max_{1 \leq j \leq p} \sqrt{\frac{mq}{n}} V_j \leq t  \right) }\leq Cn^{-c} \sqrt{\log n} \leq C n^{-c'}
    \]
    for some $c'$.

    \noindent\textbf{Step 4.} (Conclusion). By Steps~1--3, we have
    \[
    \sup_{t\in\mathbb{R}} \abs{\Pr(T_0\leq t) - \Pr\left(\max_{1\leq j\leq p} \sqrt{\frac{mq}{n}}\,V_j\leq t\right)} \leq Cn^{-c}.
    \]
    It remains to replace $\sqrt{(mq)/n}$ by 1 on the left side. Observe that
    \[
    1-\sqrt{\frac{mq}{n}} \leq 1 - \frac{mq}{n} \leq 1 - \left(\frac{n}{q+1}-1\right)\left(\frac{1}{n}\right) = \frac{1}{q+1} + \frac{1}{n},
    \]
    and the right side is bounded by $Cn^{-c}/\sqrt{\log p}$. With this $c$, by Markov's inequality,
    \[
    \Pr\left( \abs{\max_{1\leq j \leq p} V_j} > n^{c/2}\sqrt{\log p} \right) \leq C n^{-c/2},
    \]
    as $\E( \abs{\max_{1\leq j \leq p} V_j}\,)\leq C \sqrt{\log p}$, so that with probability larger than $1-Cn^{-c/2}$,
    \[
    \left(1-\sqrt{\frac{mq}{n}}\right) \abs{\max_{1\leq j \leq p} V_j} \leq C'n^{-c/2} \log^{-1/2}p.
    \]
    Apply the anti-concentration property of $\max_{1\leq j \leq p} V_j$ (see Step~3) to conclude 
    \[
    \sup_{t \in \mathbb{R}} \abs{\Pr\left( \max_{1\leq j \leq p} \sqrt{\frac{mq}{n}}\,V_j \leq t  \right) - \Pr\left(\max_{1\leq j \leq p} V_j \leq t \right)} \leq C n^{-c}.
    \]
    This finishes the proof of the lemma.
\end{proof}

\begin{lemma}\label{lem: CCKB2} Let $\varepsilon_1,\dots,\varepsilon_m$ be independent standard normal random variables, independent of the data $\mathbb{D}$. Suppose that there exist constants $C_1>0$ and $0 < \gamma < 1/2$ such that $q\log^{5/2}p \leq C_1 \,n^{1/2-\gamma}$. Then, under Assumption~\ref{ass:continuity} and the null hypothesis $H_0$, there exist constants $c,c',C,C'>0$ depending only on $\gamma$ and $C_1$ such that, with probability larger than $1-Cn^{-c}$,
\[
\sup_{t\in\mathbb{R}} \abs{\Pr\left(T_0^B\leq t\,\vert \mathbb{D} \right) - \Pr\left( Z_0 \leq t \right)} \leq C'n^{-c'}.
\]
\end{lemma}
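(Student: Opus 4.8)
The plan is to establish the claim by comparing the conditional law of $T_0^B$ given $\mathbb{D}$ to the law of $Z_0$ through the Gaussian multiplier (wild) bootstrap results of \citet{chernozhukov2013gaussian,chernozhukov2019inference}. Conditional on $\mathbb{D}$, the vector $\big(\tfrac{1}{\sqrt{mq}}\sum_{k=1}^m \varepsilon_k A_{j,k}\big)_{j=1}^p$ is centered Gaussian with covariance matrix $\hat\Sigma \coloneqq \tfrac{1}{mq}\sum_{k=1}^m A_k A_k'$, whereas $Z_0=\max_{1\le j\le p}V_j$ with $V\sim\mathcal N(0,\Sigma)$ and $\Sigma\coloneqq\tfrac{1}{mq}\sum_{k=1}^m\E(A_kA_k')$. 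So this is purely a Gaussian-to-Gaussian comparison, and the Gaussian comparison inequality (Lemma~\ref{lemma:CCK2015cor1}, i.e. Corollary~1 in \citet{chernozhukov2015comparison}) reduces the problem to bounding $\Delta \coloneqq \max_{1\le j,k\le p}\abs{\hat\Sigma_{jk}-\Sigma_{jk}}$. The target bound on the Kolmogorov distance will be of the order $\Delta^{1/3}\log^{2/3}p$ (up to constants), so it suffices to show that, with probability at least $1-Cn^{-c}$, one has $\Delta \le C' n^{-c''}/\log^2 p$ for suitable constants.

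The first step is to reduce to the entrywise control of $\hat\Sigma-\Sigma$ as just described, and then to verify that the variances $\Sigma_{jj}$ are bounded away from $0$ and $\infty$ (this was already computed in the proof of Lemma~\ref{lem: CCKB1}: $\Sigma_{jj}=\tfrac1q\sum_i\Var(W_{j,i})\cdot(\dots)=2/5+1/(10q)\in[2/5,1/2]$) so that the anti-concentration/comparison machinery applies. The second step is the concentration argument for $\Delta$. Write $\Delta \le \Delta^{(1)} + \Delta^{(2)}$ where $\Delta^{(1)} \coloneqq \max_{j,k}\abs{\tfrac1{mq}\sum_l(A_{j,l}A_{k,l}-\E[A_{j,l}A_{k,l}])}$ is the fluctuation of the \emph{infeasible} (true-$W$) block products around their mean, and $\Delta^{(2)}\coloneqq \max_{j,k}\abs{\tfrac1{mq}\sum_l(\hat A_{j,l}\hat A_{k,l}-A_{j,l}A_{k,l})}$ is the estimation error from replacing $W_{j,i}$ by $\hat W_{j,i}$. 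For $\Delta^{(1)}$: the summands $A_{j,l}A_{k,l}$ are independent across $l$ (big blocks are independent), bounded a.s. since $\abs{A_{j,l}}\le 2q$ — giving $\abs{A_{j,l}A_{k,l}}\le 4q^2$ and hence $\abs{A_{j,l}A_{k,l}}/q \le 4q$ — so a Bernstein/Hoeffding bound over $m$ independent terms, combined with a union bound over the $p^2$ pairs $(j,k)$, yields $\Pr(\Delta^{(1)} > t) \le 2p^2\exp(-cmt^2/q^2)$ roughly, i.e. $\Delta^{(1)} \lesssim q\sqrt{\log p /m}$ with high probability; the rate condition $q\log^{5/2}p \le C_1 n^{1/2-\gamma}$ (together with $m\asymp n/q$) is exactly what makes this $\le C'n^{-c}/\log^2p$. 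For $\Delta^{(2)}$: exactly as in Part~(ii) of the proof of Lemma~\ref{lem: verify_Assumption}, $\abs{\hat W_{j,l}-W_{j,l}}\le C D_j$ with $D_j=\sup_t\abs{\hat F_{Y_j}(t)-F_{Y_j}(t)}$, so $\abs{\hat A_{j,l}-A_{j,l}}\le Cq D_j$, $\abs{\hat A_{j,l}}\le Cq$, and telescoping $\hat A_{j}\hat A_k - A_jA_k$ gives $\abs{\hat A_{j,l}\hat A_{k,l}-A_{j,l}A_{k,l}}\le Cq^2(D_j+D_k)$; hence $\Delta^{(2)}\le Cq\max_j D_j$, and the DKW inequality plus a union bound over $j$ controls $\max_j D_j$, again using the rate condition. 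The third step is to assemble: on the intersection of the two good events (probability $\ge 1-Cn^{-c}$), $\Delta\le C'n^{-c''}/\log^2p$; feed this into Lemma~\ref{lemma:CCK2015cor1} to get $\sup_t\abs{\Pr(T_0^B\le t\mid\mathbb D)-\Pr(Z_0\le t)}\le C''n^{-c'''}$.

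The main obstacle is the control of $\Delta^{(1)}$, the fluctuation of the empirical block covariance. The difficulty is twofold: the effective boundedness scale of the (normalized) products $A_{j,l}A_{k,l}/q$ grows like $q$, so the concentration is slow; and one needs it uniformly over all $p^2$ pairs with $p$ possibly exponential in $n$. It is precisely the interplay of these two effects that forces the rate restriction $q\log^{5/2}p \le C_1 n^{1/2-\gamma}$ in the hypothesis of the lemma — a cruder bound on $\abs{A_{j,l}A_{k,l}}$ or a naive $\log p$ accounting would not close. A secondary but routine concern is that the good event for $\Delta$ must be the \emph{same} event on which the variances stay pinned near $2/5+1/(10q)$; since $\Sigma_{jj}$ is deterministic here and $\hat\Sigma_{jj}$ is close to it on the good event, this causes no real trouble, but it should be stated carefully so that the comparison lemma is applied with a deterministic lower bound on the variances.
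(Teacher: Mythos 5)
Your overall strategy is the paper's: since $T_0^B$ given $\mathbb{D}$ is a centered Gaussian max with covariance $\tfrac{1}{mq}\sum_k A_kA_k'$ and $Z_0$ is the max of a Gaussian with covariance $\tfrac{1}{mq}\sum_k\E[A_kA_k']$, a Gaussian-to-Gaussian comparison reduces everything to the entrywise deviation $D=\max_{j,j'}\abs{\tfrac{1}{mq}\sum_k(A_{j,k}A_{j',k}-\E[A_{j,k}A_{j',k}])}$, and one needs $D\lesssim n^{-c}/\log^2 p$ with high probability. (Two side remarks: the bound of order $D^{1/3}\max\{1,\log(p/D)\}^{2/3}$ comes from Theorem~2 in \citet{chernozhukov2015comparison}, the Gaussian comparison theorem, not from Corollary~1/Lemma~\ref{lemma:CCK2015cor1}, which is an anti-concentration statement for a single Gaussian max; and your $\Delta^{(2)}$ term is superfluous here, because $T_0^B$ is built from the \emph{true} blocks $A_{j,k}$ -- the $\hat A$ versus $A$ discrepancy is dealt with separately in the proof of Theorem~\ref{thm: size control} via Lemma~\ref{lem: verify_Assumption}(ii).)

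The genuine gap is in your concentration step for $\Delta^{(1)}=D$. You use only the boundedness $\abs{A_{j,l}A_{k,l}}\le 4q^2$, which gives a Hoeffding-type tail $\exp(-cmt^2/q^2)$ and hence $D\lesssim q\sqrt{\log p/m}\asymp q^{3/2}\sqrt{\log p/n}$. Multiplying by $\log^2 p$, this is $q^{3/2}\log^{5/2}p/\sqrt{n}$, which under the hypothesis $q\log^{5/2}p\le C_1 n^{1/2-\gamma}$ is only bounded by $\sqrt{q}\,n^{-\gamma}\lesssim n^{1/4-3\gamma/2}$; this does \emph{not} tend to zero when $\gamma\le 1/6$, so your claim that the rate condition ``exactly'' closes the bound is false over the full range $0<\gamma<1/2$ allowed by the lemma. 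The missing ingredient is the second moment: since $\E[A_{j,l}^2]=q(2/5+1/(10q))\asymp q$, one has $\E[(A_{j,l}A_{k,l})^2]\le 4q^2\,\E[A_{k,l}^2]\lesssim q^3$, which is much smaller than the squared range $q^4$. Feeding this into Bernstein's inequality (plus the union bound over $p^2$ pairs), or -- as the paper does -- into the maximal inequality of Lemma~\ref{lemma:CCKA3} followed by Markov's inequality, yields $D\lesssim q\sqrt{\log p/n}+q^2\log p/n$, and then $D\log^2 p\lesssim q\log^{5/2}p/\sqrt{n}+q^2\log^3 p/n\le C n^{-\gamma}$ follows exactly from the stated rate condition. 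With that correction (and the comparison step attributed to the right theorem), your argument goes through and coincides with the paper's proof.
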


\begin{proof}
    Here $c,c',C,C'$ denote generic positive constants depending only on $\gamma$; their values may change from place to place. By Theorem~2 in \citet{chernozhukov2015comparison}, the left side in the conclusion of the lemma is bounded by $C D^{1/3} \max\{1 , \log(p/D) \}^{2/3}$, where
    \[
    D = \max_{1\leq j, j'\leq p}\,\abs{\frac{1}{mq} \sum_{k=1}^m\bigl(A_{j,k} A_{j',k} - \E[A_{j,k} A_{j',k}]\bigr)}~. 
    \]  
    Hence, it suffices to prove that $\Pr(D > C'n^{-c'}\log^{-2}p)\leq Cn^{-c}$ with suitable $c,c',C,C'$. Observe that $\abs{A_{j,k}A_{j',k}}\leq 4q^2$ a.s. and $\E[(A_{j,k}A_{j',k})^2]\leq 4q^3 \,\bar\sigma^2(q)$, where $\bar{\sigma}^2(q)$ is given as in Step 2 in the proof of Lemma~\ref{lem: CCKB1}. Hence, by Lemma~\ref{lemma:CCKA3}, we have
    \begin{align*}
    \E[D] & \leq C\left( \sqrt{4m q^3 \bar\sigma^2(q)  } \frac{ \sqrt{\log p} }{mq} + 4q^2\frac{\log p}{mq}\right),\\
          & \leq C\left(  q \sqrt{ \frac{\log p}{mq} }  + q^2 \frac{\log p}{mq}\right)
    \end{align*}

    Since $mq > n/2$, it follows that
    \[
    \E[D] \leq C\left(q\sqrt{\frac{\log p}{n}} + q^2 \frac{\log p}{n}\right).
    \]
    Since $q\log^{5/2}p \leq C_1 \,n^{1/2-\gamma}$, the right side is bounded by $C'n^{-\gamma}\log^{-2}p$. The conclusion of the lemma follows from application of Markov's inequality.
\end{proof}

\begin{proof}[Proof of Theorem~\ref{thm: size control}]
    Let $c$ and $C$ denote generic positive constants depending only on $\gamma$ and $C_1$. Their values may change from place to place. The proof consists of several steps that rely on Lemmas~\ref{lem: verify_Assumption}, \ref{lem: CCKB1}, and \ref{lem: CCKB2}.

    \noindent{\bf Step 1:} 
    We will show that
    \begin{align}
        & \Pr\left(\abs{\hat{T} - T_0 } > \zeta_{n} \right) \leq C n^{-c}\label{eq: main_theorem_cond1} \\
        & \Pr\left( \Pr\left( \abs{ \hat T^B- T^B_0 }  > \zeta_{n} | \mathbb{D}  \right) > C n^{-c} \right) \leq C n^{-c}~.\label{eq: main_theorem_cond2} 
    \end{align}
    for some sequence $\zeta_n \leq Cn^{-c}/\sqrt{\log p}$ for sufficiently small $c>0$ and large $C>0$ depending only on the constants $c_1$, $c_2$, $C_1$, and $C_2$ from Lemma~\ref{lem: verify_Assumption}. Note the assumptions of Lemma~\ref{lem: verify_Assumption} are satisfied under the conditions of Theorem~\ref{thm: size control} because 
  	$\max\big\{ \log^{7/2}p, \sqrt{q}\log^{3/2}p\big\} \leq C_3 n^{1/2-c_3}$ follows from $\sqrt{q}\log^{7/2}(pn) \leq C_1 n^{1/2-\gamma}$ with $C_3=C_1$ and $c_3=\gamma$.
  
    To show~\eqref{eq: main_theorem_cond1}, observe that 
    \begin{align*}
    \abs{\hat{T} - T_0}&\le \max_{1\le j\le p}\abs{\hat{\xi}_j -\frac{1}{\sqrt{n}}\sum_{i=1}^{n-1}W_{j,i}}\le\max_{1\le j\le p}\abs{r_{j,n}}=\Delta_1~.
    \end{align*}
	The claim follows from Lemma~\ref{lem: verify_Assumption} by taking $\zeta_n \ge  C_1 n^{-c_1}/\sqrt{\log p}$. 
    
    To show~\eqref{eq: main_theorem_cond2}, note first that 
    \begin{align*}
	    \abs{ \hat T^B- T^B_0}\le \max_{1\le j\le p}\abs{\frac{1}{\sqrt{mq}}\sum_{k=1}^m\varepsilon_k\left(\hat{A}_{j,k}-A_{j,k}\right)}.
    \end{align*}
    Conditional on the data $\mathbb{D}$, the vector $\left(\frac{1}{\sqrt{mq}}\sum_{k=1}^m\varepsilon_k\big(\hat{A}_{j,k}-A_{j,k}\big)\right)_{1\leq j \leq p}$
    is normal with mean zero and all diagonal elements of the covariance matrix bounded by
    \begin{equation}
        \label{eq: main_theorem_cond3}
 	\max_{1\le j\le p}\frac{1}{mq}\sum_{k=1}^m \left(\hat{A}_{j,k}-A_{j,k}\right)^2.
    \end{equation}
	The last expression is bounded by $C_2n^{-c_2}/\log^2 p$ with probability larger than $1-Cn^{-c}$ by Lemma~\ref{lem: verify_Assumption}. Conditional on this event,
	\[
	\E\left[ \abs{ \hat T^B- T^B_0} | \mathbb{D} \right] \leq \sqrt{\frac{C_2n^{-c_2}}{\log^2 p}} \cdot 2\sqrt{2 \log p } \leq \frac{\zeta_n}{2}.
	\]
	Further, by Borell's inequality,
	\[
	\Pr\left( \abs{ \hat T^B- T^B_0 }  > \zeta_{n} \vert \mathbb{D} \right) \leq \exp\left( - \frac{\zeta^2_n \log^2 p}{8C_2n^{-c_2}}  \right) \leq \exp(-Cn^c) \leq C n^{-c}.
	\]
	This yields~\eqref{eq: main_theorem_cond2}. In the following derivations, $\zeta_n$ is as selected in this step.

    \noindent{\bf Step 2:} Let $c(\alpha)$ denote the ($1-\alpha$)-quantile of $Z_0$. In this step, we show that
    \begin{align}
        \Pr\left\{ \hat c(\alpha) \geq c(\alpha + C n^{-c} + 20\zeta_n \sqrt{\log p})\right\} &\geq 1-Cn^{-c},\label{eq: chat vs c ineq 1}\\
        \Pr\left\{ \hat c(\alpha) \leq c(\alpha - C n^{-c} - 20\zeta_n \sqrt{\log p})\right\} &\geq 1-Cn^{-c},\label{eq: chat vs c ineq 2}
    \end{align}
    and that, for any $\gamma\in(0,1-20\zeta_n\sqrt{\log p})$,
    \begin{equation}\label{eq: c ineq}
        c(\gamma+20\zeta_n \sqrt{\log p}) + \zeta_n \leq c(\gamma).
    \end{equation}
    We first show \eqref{eq: c ineq}. By Theorem~3 in \cite{chernozhukov2015comparison}, for any $t\in\mathbb{R}$ and any $\epsilon>0$,
    $$\Pr(|Z_0-t|\leq \epsilon ) \leq 4\epsilon (\sqrt{2\log p} + 1)/\sigma_V, $$
    where $\sigma_V^2 \coloneq \E[V_j^2]$. Since $\{A_{j,k}\colon j=1,\ldots,p,k=1,\ldots,m\}$ are i.i.d., $V_1,\ldots,V_p$ are also i.i.d., and
    $$\sigma_V^2 = \frac{1}{mq}\sum_{k=1}^m \E[A_{j,k}^2] = \frac{1}{q}\E[A_{j,k}^2] $$
    with $2/5 \leq E[A_{j,k}^2]/q \leq 1/2$, as shown in the proof of Lemma~\ref{lem: CCKB1}. Therefore,
    \begin{align*}
        &\Pr(Z_0 \leq c(\gamma+20\zeta_n \sqrt{\log p}) + \zeta_n) - \Pr(Z_0 \leq c(\gamma+20\zeta_n \sqrt{\log p}))\\
        &\qquad = \Pr(c(\gamma+20\zeta_n \sqrt{\log p}) \leq Z_0 \leq c(\gamma+20\zeta_n \sqrt{\log p}) + \zeta_n)\\
        &\qquad = \Pr( - \zeta_n/2\leq Z_0 -[c(\gamma+20\zeta_n \sqrt{\log p}) + \zeta_n/2]\leq \zeta_n/2)\\
        &\qquad \leq 2\zeta_n (\sqrt{2\log p} +1)/\sigma_V
    \end{align*}
    This implies
    \begin{align*}
        \Pr(Z_0 \leq c(\gamma+20\zeta_n \sqrt{\log p}) + \zeta_n) &\leq \Pr\left(Z_0 \leq c(\gamma+20\zeta_n \sqrt{\log p})\right) + 2\zeta_n (\sqrt{2\log p} +1)/\sigma_V\\
        &\leq \Pr\left(Z_0 \leq c(\gamma+20\zeta_n \sqrt{\log p})\right) + 8\zeta_n \sqrt{\log p} \cdot \frac{5}{2}\\
        &\leq 1-\gamma -20\zeta_n \sqrt{\log p} + 20\zeta_n \sqrt{\log p}\\
        &= 1-\gamma
    \end{align*}
    and the desired result in \eqref{eq: c ineq} follows.

	Next, we show \eqref{eq: chat vs c ineq 1}. For any $t\in\mathbb{R}$, by the union bound,
	\begin{align*}
		\Pr(\hat T^B\leq t|\mathbb{D}) &\leq \Pr(T_0^B \leq t + \zeta_n | \mathbb{D}) + \Pr(|\hat T^B - T_0^B|>\zeta_n | \mathbb{D})\\
		&\leq \Pr(Z_0 \leq t + \zeta_n ) + \rho_n^B + \Pr(|\hat T^B - T_0^B|>\zeta_n | \mathbb{D})
	\end{align*}
	where 
	$$\rho_n^B \coloneqq \sup_{t\in\mathbb{R}} \abs{\Pr\left(T_0^B\leq t\,\vert \mathbb{D} \right) - \Pr\left( Z_0 \leq t \right)}. $$
	Therefore, with probability at least $1-Cn^{-c}$,
	\begin{align*}
		\Pr(\hat T^B\leq c(\alpha + C n^{-c} + 20\zeta_n \sqrt{\log p})|\mathbb{D}) &\leq \Pr(Z_0 \leq c(\alpha + C n^{-c} + 20\zeta_n \sqrt{\log p}) + \zeta_n ) + Cn^{-c}\\
		&\leq \Pr(Z_0 \leq c(\alpha + C n^{-c} )) + Cn^{-c}\\
		&\leq 1- \alpha - C n^{-c} + Cn^{-c}\\
		&= 1-\alpha
	\end{align*}
	where the first inequality follows from Lemma~\ref{lem: CCKB2} and \eqref{eq: main_theorem_cond2}, and the second follows from \eqref{eq: c ineq}.
	
	Since $\Pr(\hat T^B\leq \hat c(\alpha)|\mathbb{D})=1-\alpha$, the inequality implies that with probability at least $1-Cn^{-c}$, $\hat c(\alpha) \geq c(\alpha + C n^{-c} + 20\zeta_n \sqrt{\log p})$, which establishes \eqref{eq: chat vs c ineq 1}. The second inequality \eqref{eq: chat vs c ineq 2} can be shown in an analogous fashion.

    \noindent{\bf Step 3:} This step combines results from Step 1, Step 2, and Lemmas~\ref{lem: CCKB1}-\ref{lem: CCKB2} to prove the statement of the theorem. Define
    $$\rho_n \coloneqq \sup_{t\in\mathbb{R}}\abs{\Pr\left(T_0\le t\right)-\Pr\left(Z_0\le t\right)}. $$
    Then:
    \begin{align*}
        \Pr\left(\hat T > \hat c(\alpha)\right) &\leq \Pr\left(T_0+ |\hat T - T_0| > \hat c(\alpha)\right)\\
            &\leq \Pr\left(\left\{T_0 + \zeta_n > \hat c(\alpha)\right\} \cup \left\{|\hat T -T_0| > \zeta_n\right\} \right)\\
            &\leq \Pr\left(T_0 + \zeta_n > \hat c(\alpha)\right) + \Pr\left(|\hat T -T_0| > \zeta_n\right)\\
            &= \Pr\left(T_0 + \zeta_n > \hat c(\alpha), \hat c(\alpha)\geq c(\alpha + C n^{-c} + 20\zeta_n \sqrt{\log p})\right)\\
                &\quad + \Pr\left(T_0 + \zeta_n > \hat c(\alpha)| \hat c(\alpha)< c(\alpha + C n^{-c} + 20\zeta_n \sqrt{\log p})\right)\times\\
                &\quad\quad \times \Pr\left( \hat c(\alpha) < c(\alpha + C n^{-c} + 20\zeta_n \sqrt{\log p})\right)\\
                &\quad+ \Pr\left(|\hat T -T_0| > \zeta_n\right)\\
            &\leq \Pr\left(T_0 + \zeta_n > c(\alpha + C n^{-c} + 20\zeta_n \sqrt{\log p})\right) + Cn^{-c}\\
            &\leq \Pr\left(T_0 > c(\alpha + C n^{-c} + 40\zeta_n \sqrt{\log p})\right) + Cn^{-c}\\
            &\leq \Pr\left(Z_0 > c(\alpha + C n^{-c} + 40\zeta_n \sqrt{\log p})\right) + \rho_n + Cn^{-c}\\
            &\leq \alpha + C n^{-c} + 40\zeta_n \sqrt{\log p} + \rho_n + Cn^{-c}\\
            &\leq \alpha + C n^{-c},
    \end{align*}
    where the third inequality follows from the union bound, the fourth inequality follows from \eqref{eq: main_theorem_cond1} and \eqref{eq: chat vs c ineq 1}, the fifth inequality follows from \eqref{eq: c ineq}, and the last inequality follows from Lemma~\ref{lem: CCKB1} and $40\zeta_n \sqrt{\log p}\leq C n^{-c}$.

    Similarly,
    \begin{align*}
        \Pr\left(\hat T > \hat c(\alpha)\right) &\geq \Pr\left(T_0 > \hat c(\alpha)+|\hat T - T_0|\right)\\
            &\geq \Pr\left(T_0  > \hat c(\alpha)+ \zeta_n\right) - \Pr\left(|\hat T -T_0| > \zeta_n\right)\\
            &\geq \Pr\left(T_0  > c(\alpha-Cn^{-c}-20\zeta_n\sqrt{\log p})+ \zeta_n\right) - Cn^{-c}\\
            &\geq \Pr\left(T_0  > c(\alpha-Cn^{-c}-40\zeta_n\sqrt{\log p})\right) - Cn^{-c}\\
            &\geq \Pr\left(Z_0  > c(\alpha-Cn^{-c}-40\zeta_n\sqrt{\log p})\right) - \rho_n - Cn^{-c}\\
            &\geq \alpha-Cn^{-c}-40\zeta_n\sqrt{\log p}  - Cn^{-c}\\
            &\geq \alpha-Cn^{-c},
    \end{align*}
    where the second inequality follows from the union bound, the third from \eqref{eq: main_theorem_cond1} and \eqref{eq: chat vs c ineq 2}, the fourth from \eqref{eq: c ineq}, the sixth from Lemma~\ref{lem: CCKB1}, and the last from $40\zeta_n \sqrt{\log p}\leq C n^{-c}$. This concludes the proof.
\end{proof}

\subsection{Proofs for Section~\ref{sec:consistency}}

\begin{proof}[Proof of Theorem~\ref{thm: consistency_theorem}]
    Let $v_n$ and $r_n$ be two deterministic sequences such that $v_n < r_n$. Using the union bound, we obtain that
    \begin{align*}
        \Pr\left( \hat T\leq \hat c(\alpha)\right) &= \Pr\left( \hat T- \hat c(\alpha)\leq r_n-v_n +v_n-r_n\right)\\
        &\leq \Pr\left( \hat T - \hat c(\alpha) \leq r_n - v_n\right) + \Pr\left(v_n\geq r_n \right)\\
        &\leq \Pr\left( \hat T \leq r_n \right) + \Pr\left(\hat c(\alpha) \geq v_n\right) + \Pr\left(v_n\geq r_n \right)~.
    \end{align*}
    To establish consistency, it suffices to show that
    \begin{align}
    \Pr\left(\hat c(\alpha) \ge v_n\right) &= o(1),\label{eq: cval rate}\\
    \Pr\left(\hat T\leq r_n \right) &= o(1)\label{eq: Tn rate}~.
    \end{align}

    We first establish~\eqref{eq: cval rate}. To ease notation, define, for $j=1,\dots,p$,
    \[
        \hat T^{B}_j=\frac{1}{\sqrt{mq}}\sum_{k=1}^{m}\varepsilon_k\hat{A}_{j,k}.
    \]
    Observe that the variance of $\hat T_j^B$ conditional on the data $\mathbb{D}$ is bounded from above by
    \begin{align*}
        \frac{1}{mq}\sum_{k=1}^m \hat{A}_{j,k}^2 &= \frac{1}{mq}\sum_{k=1}^m \left(\sum^{kq+(k-1)}_{l=(k-1)(q+1)+1} \hat{W}_{j,l}\right)^2\\
        &\leq \frac{1}{mq}\sum_{k=1}^m (2q)^2\\
        &= 4q,
    \end{align*}
    where the inequality follows from $|\hat W_{j,l}| \leq 2$. Then, by Proposition~1.1.3 in \cite{talagrand2003spin},
    \[
        \E\left[\hat T^B \vert \mathbb{D}\right] \leq 2\sqrt{2q\log p}~.
    \]
    A modification of Borell's Lemma (see Lemma~\ref{lem: Borell}) with $\sigma^2 = 4q $ and  $\lambda := \sqrt{8q\log(1/\alpha)}$ implies that
    \begin{equation*}
    \Pr\left\{\left.\max_{1\le j \le p} \hat T_j^B > \E\left[ \left.\max_{1\le j \le p} \hat T_j^B \;\right\vert\; \mathbb{D}\right] + \lambda\;\right\vert\; \mathbb{D}\right\} \leq \alpha~.
    \end{equation*}
    By the definition of $\hat c(\alpha)$, together with the anti-concentration inequality \citep[][Corollary 1]{chernozhukov2015comparison}, we have that
    \[
    \Pr\left(\left.\max_{1\le j \le p} \hat T_j^B > \hat c(\alpha) \;\right\vert\; \mathbb{D}\right) = \alpha~.
    \]
    Collecting these facts, we conclude that
    \begin{align*}
        \hat c(\alpha) &\leq \E\left[\left.\max_{1\le j \le p} \hat T_j^B \;\right\vert\; \mathbb{D}\right] + \sqrt{8q\log(1/\alpha)}\\
        &\leq 2 \sqrt{2q\log p} + 2\sqrt{2q\log(1/\alpha)}
    \end{align*}
    with probability one. This implies \eqref{eq: cval rate} with $v_n := 2 \sqrt{2q\log p} + 2\sqrt{2q\log(1/\alpha)}$.

    We now establish~\eqref{eq: Tn rate}. Note that under the alternative, at least one of the individual hypotheses $H_{0,j}:\,Y_j\bot X$, $1 \leq j \leq p$, fails. Let $j^*$ be the index of one of these hypotheses, so that $\xi_{j^*}>0$ (since $\xi_{j}$ is nonnegative for any $j$). Observe that $\hat T = \sqrt{n}\max_{1\le j\le p}\,\hat\xi_j \geq \sqrt{n}\, \hat{\xi}_{j^*}$. Since $r_n = o(n^{1/2})$ and $\xi_{j^*}>0$, we have that $r_n/\sqrt{n}-\xi_{j^*}<-\xi_{j^*}/2$ for sufficiently large $n$. Then, strong consistency of $\hat{\xi}_{j^*}$ \citep[][Theorem 1.1]{chatterjee2021new} implies
    \begin{align*}
      \Pr\left(\hat{T}>r_n\right)&\ge\Pr\left(\hat{\xi}_{j^*} -\xi_{j^*}>\frac{r_n}{\sqrt{n}}-\xi_{j^*}\right)\ge\Pr\left(\hat{\xi}_{j^*} -\xi_{j^*}>-\frac{\xi_{j^*}}{2}\right)=1-o(1)~.
    \end{align*}
    It remains to show that we can choose a sequence $r_n = o(n^{1/2})$ larger than $v_n$ defined above. Indeed, Assumption~\ref{ass:rates} implies that $\sqrt{q \log p}\leq C_1 n^{1/2-\gamma}$. Therefore, we can choose positive constants $c$ and $C$, such that $r_n \coloneqq C n^{1/2-c}$ satisfies $r_n = o(n^{1/2})$ and $r_n > v_n$, as required. This proves the claim in~\eqref{eq: Tn rate} and concludes the proof as a whole.
\end{proof}

\subsection{Proofs for Section~\ref{sec:choice of q}}

\begin{proof}[Proof of Lemma~\ref{lemma:VB}]
    Suppose that Assumption~\ref{ass:continuity} holds. In the subsequent derivations, we use the following values of moments of $W_{j,i}$ under the null $H_{0,j}$: 
    \begin{enumerate}[label=(\roman*)]
        \item $\E[W_{j,1}]=0$, 
        \item $\E[W_{j,1}^2]=1/2$, 
        \item $\E[W_{j,1}W_{j,2}] = -1/20$,
        \item $\E[W_{j,1}^4] = 3/5$,
        \item $\E[W_{j,1}^2W_{j,2}^2] = 23/70$,
        \item $\E[W_{j,1} W_{j,2} (W_{j,1}^2 + W_{j,2}^2)] = -3/28$,
        \item $\E[W_{j,1}W_{j,2}W_{j,3}(W_{j,1} + W_{j,2} + W_{j,3})] = -37/700$,
        \item $\E[W_{j,1}W_{j,2}W_{j,3}W_{j,4}] = 1/700$.
    \end{enumerate}
    Parts (i)--(iii) were shown by \citet{angus1995coupling}. Parts (iv)--(viii) were calculated using Mathematica.

    Note that
    \begin{align*}
        &\E[V_j^B] = \frac{1}{q}\E[A_{j,1}^2],\\
        &\E[A_{j,1}^2]  = q\Var(W_{j,1}) + 2(q-1)Cov(W_{j,1}, W_{j,2}) = \frac{2}{5}q + 1/10,
    \end{align*}
    which yields the first part of the lemma.

    Furthermore, by the independence of blocks, we have that
    \[
        \Var(V_j^B) = \frac{1}{m q^2}\Var(A_{j,1}^2) = \frac{1}{mq^2}\left(\E[A_{j,1}^4] - \E[A_{j,1}^2]^2 \right).
    \]
    For $q=1$, $\E[A_{j,1}^4] = \E[W_{j,1}^4] = 3/5$, and hence
    \[
        \E[A_{j,1}^4] - \E[A_{j,1}^2]^2 = 3/5 - 1/2^2 = 7/20.
    \]
    For $q=2$, 
    $\E[A_{j,1}^4] = 2 \E[W_{j,1}^4] + 4\E[W_{j,1} W_{j,2} (W_{j,1}^2 + W_{j,2}^2)] + 6\E[W_{j,1}^2 W_{j,2}^2] = 96/35$,
    and hence
    \begin{align*}
        \E[A_{j,1}^4] - \E[A_{j,1}^2]^2 = 96/35 - (9/10)^2.
    \end{align*}
    For $q\ge 3$, a tedious calculation shows that $\E[A_{j,1}^4] = \frac{48}{100} q^2 + \frac{102}{175} q - \frac{111}{350}$, which combined with the expression for $\E[A_{j,1}^2]$ then leads to the desired result.
\end{proof}

\subsection{Proofs for Section~\ref{sec:stepwise procedure}}

\begin{proof}[Proof of Theorem~\ref{thm: FWER control}]
    The result follows from \citet{romano2005exact} if
    \begin{align}
        \hat c(\alpha;I') \leq \hat c(\alpha;I'') \text{ whenever } I' \subset I'', \label{eq:RWcond1} \\
            \sup_{P \in \mathbf{P}_{\gamma,C_1}} P\left(\hat T(J(P)) > \hat c(\alpha; J(P)) \right) \leq \alpha + o(1). \label{eq:RWcond2}
    \end{align}
    Condition \eqref{eq:RWcond1} holds by construction. By inspecting the proof of Theorem~\ref{thm: size control}, we see that \eqref{eq:RWcond2} holds.
\end{proof}

\section{Auxiliary Results}
\label{sec: aux results}

\begin{lemma}[Corollary~1 in \citet{chernozhukov2015comparison}] \label{lemma:CCK2015cor1}
    Let $(X_1,\ldots,X_p)^T$ be a centred Gaussian random vector in $\mathbb{R}^p$ with $\sigma_j^2 \coloneq \E[X_j^2]>0$ for all $1\leq j \leq p$. Let $\underline{\sigma} \coloneqq \min_{1\leq j \leq p}\sigma_j $ and $\bar{\sigma} \coloneqq \max_{1\leq j \leq p}\sigma_j$. Then for every $\epsilon>0$,
    \[
    \sup_{x \in \mathbb{R}} \Pr\left(  \abs{ \max_{1\leq j \leq p}X_j  -x } \leq \epsilon \right) \leq C \epsilon \sqrt{\max\{1 , \log(p/\epsilon)\} },
    \]
    where $C >  0$ depends only on $\underline{\sigma}$ and $\bar\sigma$. When $\sigma_j$ are all equal, $\log(p/\epsilon)$ on the right side can be replaced by $\log p$.
\end{lemma}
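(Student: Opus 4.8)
The statement is a verbatim restatement of Corollary~1 in \citet{chernozhukov2015comparison}, included only for the reader's convenience, so in the paper the ``proof'' is just a pointer to that reference. Were one to prove it from scratch, the plan would be to first reduce the anti-concentration of $\max_{1\le j\le p}X_j$ to a statement about the Gaussian measure of thin rectangular slabs. Since $\{\max_{1\le j\le p}X_j\le t\}=\{X_j\le t\text{ for all }j\}=\{X\le t\mathbf{1}\}$ with $\mathbf{1}=(1,\dots,1)^T$, one has, for every $x\in\mathbb{R}$ and $\epsilon>0$,
\[
\Pr\!\left(\abs{\max_{1\le j\le p}X_j-x}\le\epsilon\right)\le\Pr\!\left(X\le(x+\epsilon)\mathbf{1}\right)-\Pr\!\left(X\le(x-\epsilon)\mathbf{1}\right),
\]
so it suffices to bound, uniformly in the level, the increment of the distribution function of $X$ along the diagonal direction.

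The key external ingredient I would invoke is \emph{Nazarov's inequality}: for a centered Gaussian vector $X$ in $\mathbb{R}^p$ with $\E[X_j^2]\ge\underline\sigma^2>0$ for all $j$, for every $y\in\mathbb{R}^p$ and $a>0$,
\[
\Pr\!\left(X\le y+a\mathbf{1}\right)-\Pr\!\left(X\le y\right)\le\frac{a}{\underline\sigma}\left(\sqrt{2\log p}+2\right).
\]
Applying this with $y=(x-\epsilon)\mathbf{1}$ and $a=2\epsilon$ immediately gives $\sup_{x\in\mathbb{R}}\Pr(|\max_j X_j-x|\le\epsilon)\le 2\epsilon\underline\sigma^{-1}(\sqrt{2\log p}+2)$, which already establishes the equal-variance refinement in the lemma and a slightly cruder form of the general bound. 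To sharpen $\log p$ to $\log(p/\epsilon)$ and to also cover the regime of large $\epsilon$, I would smooth the maximum by the log-sum-exp function $F_\beta(x)\coloneqq\beta^{-1}\log\left(\sum_{j=1}^p e^{\beta x_j}\right)$, which satisfies $\max_j x_j\le F_\beta(x)\le\max_j x_j+\beta^{-1}\log p$; bound the density of $F_\beta(X)$ via Gaussian integration by parts (Stein's identity), using $\partial_j F_\beta\ge0$, $\sum_j\partial_j F_\beta=1$, $\sum_{j,k}|\partial^2_{jk}F_\beta|\le 2\beta$, together with the lower bound $\nabla F_\beta^T\Sigma\nabla F_\beta\ge\underline\sigma^2$ coming from $\nabla F_\beta$ lying in the probability simplex; and then transfer the density bound back to $\max_j X_j$, optimizing over $\beta\asymp\epsilon^{-1}\log(p/\epsilon)$.

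The main obstacle is exactly this density-estimation step: $F_\beta(X)$ is a smooth function of a nondegenerate Gaussian vector and hence has a density, but the naive bound on that density scales like $\beta$, which is far too large. One must exploit the simplex structure of $\nabla F_\beta$ and the nondegeneracy of $\Sigma$ to keep the bound at order $1/\underline\sigma$ while paying only a $\sqrt{\log p}$ factor, and then balance $\beta$ against the approximation error $\beta^{-1}\log p$ so as to land precisely on $\epsilon\sqrt{\log(p/\epsilon)}$. A cleaner alternative is to avoid smoothing and prove Nazarov's inequality directly by an Ornstein--Uhlenbeck interpolation argument on the Gaussian measure of rectangles, but then the same analytic difficulty simply reappears inside that proof, which is why in practice I would ultimately just cite \citet{chernozhukov2015comparison}.
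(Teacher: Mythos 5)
You are right that the paper contains no proof of this lemma: it is stated verbatim as Corollary~1 of \citet{chernozhukov2015comparison} and used as an external ingredient, so your "proof by citation" matches the paper's treatment exactly. One minor remark on your optional sketch: the Nazarov bound $2\epsilon\,\underline{\sigma}^{-1}(\sqrt{2\log p}+2)$ is in fact \emph{sharper}, not cruder, than the stated general bound $C\epsilon\sqrt{\max\{1,\log(p/\epsilon)\}}$ in the only nontrivial regime of small $\epsilon$, so that single step already suffices for everything the paper needs, without the log-sum-exp smoothing argument.
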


\begin{lemma}[Lemma 8 in \cite{chernozhukov2015comparison}]\label{lemma:CCKA3}
    Let $X_1,\dots,X_n$ be independent random vectors in $\mathbb R^p$ with $p \ge 2$. Define $M\coloneqq \max_{1\leq i \leq n} \max_{1\leq j \leq p}\abs{X_{ij}}$ and $\sigma^2\coloneqq \max_{1 \leq j \leq p} \sum_{i=1}^n \E[X_{ij}^2] $. Then
    \[
        \E\left(\max_{1\leq j \leq p}\abs{ \sum_{i=1}^n\left(X_{ij}-\E[X_{ij}]\right) } \right) \leq K\left(\sigma \sqrt{\log p} + \sqrt{\E[M^2]}\log p\right),    
    \]
    where $K$ is a universal constant.
\end{lemma}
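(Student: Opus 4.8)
The core of the argument is a Bernstein-type maximal inequality obtained via a union bound. Throughout I may take $\E[X_{ij}]=0$ without loss of generality: replacing $X_{ij}$ by $X_{ij}-\E[X_{ij}]$ leaves the left-hand side unchanged, can only decrease $\sigma^2$, and replaces $M$ by a quantity bounded by $M+\E[M]$, whose second moment is at most $4\E[M^2]$, so that only the universal constant is affected. It is worth noting that in every application of this lemma in the present paper the $X_{ij}$ are uniformly bounded by a deterministic constant equal to $M$; in that case the result is immediate, since $\sum_i(X_{ij}-\E X_{ij})$ is then a sum of independent mean-zero terms bounded by $2M$ with variance at most $\sigma^2$, so Bernstein's inequality applied to each coordinate, a union bound over $j$, and integration of the resulting tail give $\E[\max_j|\sum_i X_{ij}|]\lesssim \sqrt{\sigma^2\log p}+M\log p$, which is exactly the claimed bound with $\sqrt{\E[M^2]}=M$.

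For the general (possibly unbounded) case I would first symmetrize: with independent Rademacher signs $\varepsilon_1,\dots,\varepsilon_n$ independent of $(X_i)_{i\le n}$, the standard symmetrization inequality for the max-norm gives $\E\big[\max_{j\le p}|\sum_i X_{ij}|\big]\le 2\,\E\big[\max_{j\le p}|\sum_i \varepsilon_i X_{ij}|\big]$. Conditionally on $(X_i)_{i\le n}$, for each fixed $j$ the sum $\sum_i\varepsilon_iX_{ij}$ is a sum of independent symmetric terms bounded by $\max_i|X_{ij}|\le M$ with conditional variance $\sum_iX_{ij}^2$; Bernstein's inequality, a union bound over the $2p$ events $\{\pm\sum_i\varepsilon_iX_{ij}>t\}$, and integration over $t$ yield
\[
\E_{\varepsilon}\Big[\max_{j\le p}\Big|\sum_i\varepsilon_iX_{ij}\Big|\Big]\;\le\;K_1\Big(\sqrt{\log(2p)}\,\max_{j\le p}\Big(\sum_iX_{ij}^2\Big)^{1/2}+\log(2p)\,M\Big)
\]
for a universal constant $K_1$. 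Taking expectations over $(X_i)_{i\le n}$, using $\E[M]\le(\E[M^2])^{1/2}$ and Jensen's inequality for $x\mapsto x^{1/2}$, the second term contributes $\lesssim \log p\cdot(\E[M^2])^{1/2}$, and the whole estimate reduces to bounding $\big(\E[\max_{j\le p}\sum_iX_{ij}^2]\big)^{1/2}$.

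The remaining task is therefore to show $\E\big[\max_{j\le p}\sum_iX_{ij}^2\big]\lesssim \sigma^2+\E[M^2]\log p$; granting this, the first term above is $\lesssim \sqrt{\log p}\,(\sigma^2+\E[M^2]\log p)^{1/2}\lesssim \sigma\sqrt{\log p}+(\E[M^2])^{1/2}\log p$, which combined with the previous bound completes the proof. I would write $\max_{j\le p}\sum_iX_{ij}^2\le \sigma^2+\max_{j\le p}\big|\sum_i(X_{ij}^2-\E[X_{ij}^2])\big|$, truncate the summands $X_{ij}^2$ at a level $u$ calibrated to the scale of $M$, bound the deviation of the truncated (now bounded) part by a Bernstein union bound as above, and bound the contribution of the excess $X_{ij}^2\mathbf 1\{|X_{ij}|>u\}$ crudely. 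I expect this last step to be the main obstacle: the truncation level must be chosen so as to balance the two contributions without generating an extra factor of $n$, an extra logarithmic factor, or moments of $M$ higher than $\E[M^2]$. Everything else — the two symmetrizations and the applications of Bernstein's and Jensen's inequalities — is routine.
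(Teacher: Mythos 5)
First, note that the paper itself does not prove this statement: it is imported verbatim as Lemma~8 of \cite{chernozhukov2015comparison} and used as a black box, so the relevant comparison is with the source's proof, not with anything in this paper. Your opening observation is fair as far as it goes: in the two places the lemma is actually invoked here (for the $B_{j,k}$, bounded by $2$, and for $A_{j,k}A_{j',k}$, bounded by $4q^2$), the envelope $M$ is a deterministic constant, and in that special case your Bernstein-plus-union-bound-plus-tail-integration argument is complete and correct. Your centering reduction and the symmetrization/conditional sub-Gaussian step for the general case are also fine (indeed, conditionally on the data the Rademacher sums are sub-Gaussian with variance proxy $\sum_i X_{ij}^2$, so you do not even need the $M\log p$ term at that stage).

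The genuine gap is exactly where you flag it: the bound $\E\bigl[\max_{1\le j\le p}\sum_{i=1}^n X_{ij}^2\bigr]\le C\bigl(\sigma^2+\E[M^2]\log p\bigr)$ is the entire content of the lemma in the unbounded case, and the truncation route you sketch does not close it with only $\E[M^2]$ in hand. With a deterministic truncation level $u$, the truncated part is fine (its Bernstein bound is $u\sigma\sqrt{\log p}+u^2\log p$, which is acceptable for $u^2\asymp\E[M^2]$), but the excess term $\E\bigl[\max_j\sum_i X_{ij}^2\,\ind\{|X_{ij}|>u\}\bigr]$ is not controlled: bounding each summand by $M^2$ leaves $\E\bigl[M^2\cdot\max_j\#\{i:|X_{ij}|>u\}\bigr]$, where the exceedance count is correlated with $M$ and can be of order $n$, while bounding the maximum by the sum over $i$ produces an extra factor of $n$; and a data-driven (random) $u$ destroys the independence across $i$ needed for Bernstein. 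The standard way to close this step --- and the way the cited proof does it --- is not truncation but: symmetrize $\sum_i(X_{ij}^2-\E X_{ij}^2)$, apply the Ledoux--Talagrand contraction principle (since $t\mapsto t^2/(2M)$ is a contraction on $[-M,M]$) to reduce to $\E\bigl[M\max_j|\sum_i\varepsilon_iX_{ij}|\bigr]$, split this by Cauchy--Schwarz into $\sqrt{\E[M^2]}$ times the $L^2$-norm of the maximum, pass from the $L^2$- to the $L^1$-norm via the Hoffmann--J{\o}rgensen inequality (at the cost of another $\sqrt{\E[M^2]}$ term), and then solve the resulting self-bounding (quadratic) inequality in $Z\coloneqq\E\bigl[\max_j|\sum_i\varepsilon_iX_{ij}|\bigr]$. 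Without the contraction and Hoffmann--J{\o}rgensen ingredients (or an equivalent device), your proposal proves the lemma only in the bounded-envelope case, not as stated.
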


\begin{lemma}[Multivariate DKW inequality, \citet{kiefer1958deviations} ]\label{lemma:DKW}
    Let $X_1,\dots,X_n$ be independent random variables in $\mathbb{R}^k$ with \cdf $F(\cdot)$, and let $\hat F$ be their empirical \cdf. There exist constants $c_k$ and $c_k'$ such that for all $\varepsilon>0$ and $n\in\mathbb{N}$ it holds that 
    \[
    \Pr\left(\sup_{t \in \mathbb{R}^k}\abs{F_n(t) - F(t)} \ge \varepsilon \right) \leq c_k\,\exp\left(-c_k'n\varepsilon^2\right)\quad \text{ for any }\varepsilon>0.
    \]
\end{lemma}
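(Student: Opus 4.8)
The plan is to derive the bound from two standard ingredients: a concentration inequality that pins $D_n \coloneqq \norm{\hat F - F}_\infty = \sup_{t\in\mathbb{R}^k}\abs{\hat F(t) - F(t)}$ to its mean with sub-Gaussian fluctuations of order $n^{-1/2}$, together with a bound $\E[D_n] \le C_k n^{-1/2}$ in which $C_k$ depends only on the dimension $k$. Once both are in hand, a case split on whether $\varepsilon$ is large or small relative to $n^{-1/2}$ delivers the claimed exponential tail, with the constants $c_k, c_k'$ depending only on $k$.

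Step 1 (bounded differences). For each fixed $t$, replacing one observation $X_j$ by an arbitrary $X_j'$ changes $n^{-1}\sum_{i=1}^n \ind\{X_i \le t\}$ by at most $1/n$, hence changes $D_n$ by at most $1/n$; this holds uniformly in $t$, so $D_n$ has the bounded differences property with every constant equal to $1/n$. McDiarmid's inequality then gives
\[
\Pr\bigl(D_n \ge \E[D_n] + s\bigr) \le \exp\bigl(-2 n s^2\bigr)\qquad\text{for all } s > 0.
\]

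Step 2 (expected supremum). By the symmetrization inequality, $\E[D_n] \le 2\,\E\bigl[\sup_t \abs{n^{-1}\sum_{i=1}^n \sigma_i \ind\{X_i \le t\}}\bigr]$ for independent Rademacher signs $\sigma_i$. The class of lower-left orthants $\{(-\infty,t_1]\times\cdots\times(-\infty,t_k] : t \in \mathbb{R}^k\}$ has finite VC dimension depending only on $k$ (in fact equal to $k$), so there are constants $A, B$ with $\log N(\delta)\le Bk\log(A/\delta)$ for the $L^2$-covering numbers of the associated indicator class with respect to any probability measure, and in particular $\int_0^1 \sqrt{\log N(\delta)}\,d\delta \le C_k < \infty$. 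Applying Dudley's entropy bound to the Rademacher process conditionally on $X_1,\dots,X_n$ (whose intrinsic metric is the empirical $L^2$ distance, bounded by $1$) yields $\E[\sup_t \abs{n^{-1}\sum_i \sigma_i \ind\{X_i \le t\}}] \le C_k n^{-1/2}$ after absorbing universal factors into $C_k$, and hence $\E[D_n] \le 2 C_k n^{-1/2}$.

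Step 3 (combine). If $\varepsilon \ge 4 C_k n^{-1/2}$, then $\varepsilon/2 \ge 2 C_k n^{-1/2} \ge \E[D_n]$, so Step 1 with $s = \varepsilon/2$ gives $\Pr(D_n \ge \varepsilon) \le \Pr(D_n \ge \E[D_n] + \varepsilon/2) \le \exp(-n\varepsilon^2/2)$. If instead $\varepsilon < 4 C_k n^{-1/2}$, then $n\varepsilon^2 < 16 C_k^2$, so $\exp(-n\varepsilon^2/2) > \exp(-8 C_k^2)$, and the trivial bound $\Pr(D_n \ge \varepsilon) \le 1$ gives $\Pr(D_n \ge \varepsilon) \le \exp(8 C_k^2)\exp(-n\varepsilon^2/2)$. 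Taking $c_k \coloneqq \exp(8 C_k^2) \ge 1$ and $c_k' \coloneqq 1/2$ covers both regimes simultaneously and proves the lemma. The crux is Step 2: obtaining the rate $n^{-1/2}$ with no extraneous logarithmic factor. A naive union bound over a grid of the $O(n^k)$ "distinct" orthants would only give $\E[D_n]\lesssim\sqrt{(\log n)/n}$, which is insufficient, since then in the small-$\varepsilon$ regime $n\varepsilon^2 \lesssim \log n$ the tail bound would behave like a negative power of $n$ rather than staying bounded below by a constant. The chaining/VC argument is exactly what eliminates the logarithm. One could instead invoke Massart's sharp one-dimensional DKW inequality and a reduction argument, but for general $k$ the VC route is cleaner, and in all approaches the constants inevitably depend on $k$.
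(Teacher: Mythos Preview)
The paper does not prove this lemma: it is stated as an auxiliary result with attribution to \citet{kiefer1958deviations} and no argument is supplied. There is therefore no ``paper's own proof'' to compare against.

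Your proof is correct and self-contained. The three ingredients---McDiarmid for concentration around the mean, symmetrization plus a VC/Dudley entropy bound to control $\E[D_n]$ at the rate $n^{-1/2}$ without logarithmic factors, and the case split on $\varepsilon$---are the standard modern route to DKW-type bounds in arbitrary dimension. Your remark about why a naive union bound over $O(n^k)$ orthants would fail (yielding an extra $\sqrt{\log n}$ in the mean and destroying the argument in the small-$\varepsilon$ regime) is accurate and identifies the genuine content of the result. One minor stylistic point: you could cite a textbook source (e.g., van der Vaart and Wellner) for the VC dimension of orthants and the resulting entropy bound rather than sketching the chaining inline, but the logic as written is sound.
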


The following DKW-type inequality for 1-dependent data is a straightforward corollary of Lemma~\ref{lemma:DKW}.

\begin{lemma}\label{lemma:DKW2}
    Let $X_1,\ldots,X_n$ be a sequence of 1-dependent, identically distributed random variables in $\mathbb R^k$. Then the conclusion of Lemma~\ref{lemma:DKW} holds.
\end{lemma}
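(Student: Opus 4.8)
The plan is to reduce the claim to the i.i.d.\ multivariate DKW inequality of Lemma~\ref{lemma:DKW} by exploiting the fact that, in a $1$-dependent sequence, any sub-collection of observations whose indices are pairwise at least two apart is mutually independent. In particular, the odd-indexed subsequence $X_1,X_3,X_5,\dots$ and the even-indexed subsequence $X_2,X_4,X_6,\dots$ are each i.i.d.\ with common c.d.f.\ $F$. This passage from the gap formulation of $1$-dependence to genuine mutual independence is a routine induction: by definition $\sigma(X_i : i\le k)$ and $\sigma(X_i : i\ge k+2)$ are independent for every $k$, so $X_1\perp\sigma(X_3,X_5,\dots)$, then $X_3\perp\sigma(X_5,X_7,\dots)$, and so on, and identical distribution then upgrades each subsequence to i.i.d.

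Next I would write $n_o\coloneqq\lceil n/2\rceil$ and $n_e\coloneqq\lfloor n/2\rfloor$ for the numbers of odd and even indices in $\{1,\dots,n\}$, and let $F_n^{o}$ and $F_n^{e}$ denote the empirical c.d.f.s formed from the odd- and even-indexed observations. The key identity is that $F_n(t)=\frac{n_o}{n}F_n^{o}(t)+\frac{n_e}{n}F_n^{e}(t)$ is a convex combination, so
\[
\sup_{t\in\mathbb{R}^k}\abs{F_n(t)-F(t)}\;\le\;\frac{n_o}{n}\sup_{t\in\mathbb{R}^k}\abs{F_n^{o}(t)-F(t)}\;+\;\frac{n_e}{n}\sup_{t\in\mathbb{R}^k}\abs{F_n^{e}(t)-F(t)},
\]
and therefore the event $\{\sup_t\abs{F_n-F}\ge\varepsilon\}$ implies that at least one of $\{\sup_t\abs{F_n^{o}-F}\ge\varepsilon\}$, $\{\sup_t\abs{F_n^{e}-F}\ge\varepsilon\}$ occurs. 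Applying the union bound and then Lemma~\ref{lemma:DKW} to each of the two i.i.d.\ subsequences gives
\[
\Pr\!\left(\sup_{t\in\mathbb{R}^k}\abs{F_n(t)-F(t)}\ge\varepsilon\right)\;\le\;c_k e^{-c_k' n_o\varepsilon^2}+c_k e^{-c_k' n_e\varepsilon^2}\;\le\;2c_k\exp\!\big(-c_k'\lfloor n/2\rfloor\varepsilon^2\big).
\]
Since $\lfloor n/2\rfloor\ge n/3$ for every $n\ge 2$, this is at most $2c_k\exp(-(c_k'/3)n\varepsilon^2)$; and for $n=1$ the even subsample is empty, so $F_n=F_n^{o}$ and the bound is immediate from Lemma~\ref{lemma:DKW}. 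Hence the conclusion holds with the constants $c_k$ and $c_k'$ replaced by $2c_k$ and $c_k'/3$.

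I do not expect any real obstacle here: the argument is just the odd/even splitting trick combined with the i.i.d.\ DKW bound. The only points that need a little care are (i) justifying the mutual independence of the odd (resp.\ even) subsequence from the $1$-dependence hypothesis, which is the measure-theoretic induction sketched above, and (ii) keeping track of the constants and the small-$n$ edge cases so that a single pair $(c_k,c_k')$ works uniformly in $n$. The same scheme, splitting into $m+1$ interleaved i.i.d.\ subsequences, would yield the analogous statement for $m$-dependent data.
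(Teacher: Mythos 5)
Your proposal is correct and takes essentially the same route as the paper: split the sample into the odd- and even-indexed subsequences (each i.i.d.\ by $1$-dependence), write $F_n$ as a convex combination of the two sub-empirical c.d.f.s, and apply the union bound together with Lemma~\ref{lemma:DKW} to each piece. The only difference is that you track the resulting constants and the odd-$n$/small-$n$ edge cases a bit more explicitly than the paper, which simply treats $n=2N$ and notes the odd case is analogous.
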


\begin{proof}
    Suppose that $n=2N$, and let $F_n'$ be the empirical \cdf calculated based on observations with odd indices $i$, and $F_n''$ based on the even indices.   Then for any $\varepsilon>0$, using the union bound, we obtain that
    \begin{align*}
        \Pr\left( \sup_{t\in\mathbb{R}^k}\abs{F_n(t) - F(t)} > \varepsilon \right) & = \Pr\left( \sup_{t\in\mathbb{R}^k}\abs{(F'_n(t) - F(t)) + (F''_n(t) - F(t))} > 2\varepsilon \right) \\
        & \leq \Pr\left( \sup_{t\in\mathbb{R}^k}\abs{F'_n(t)- F(t)} > \varepsilon \right) +  \Pr\left( \sup_{t\in\mathbb{R}^k}\abs{F''_n(t)- F(t)} > \varepsilon \right) \\
        & \leq 2 c_k e^{-c_k'n \varepsilon^2/2}.
    \end{align*}
An analogous argument holds for odd $n.$    
\end{proof}

\begin{lemma}[Modified Borell's Inequality]\label{lem: Borell}
    Let $X_1,\ldots,X_p$ be mean-zero normal random variables with $\sigma^2 := \max_{1\le i \le p}Var(X_i)$ finite and nonzero. Then, for any $\lambda>0$,
    \[
    \Pr\left(\max_{1\le i\le p} X_i > \E\left[ \max_{1\le i\le p} X_i\right] + \lambda\right) \leq e^{-\lambda^2/(2\sigma^2)}~.
    \]
\end{lemma}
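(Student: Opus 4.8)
The plan is to reduce the statement to the classical Gaussian concentration inequality for Lipschitz functions of a standard normal vector. First I would fix a convenient representation: writing $X \coloneqq (X_1,\dots,X_p)'$ and $\Sigma \coloneqq \E[X X']$, choose any matrix $A$ with $A A' = \Sigma$ (for instance the positive semidefinite square root $\Sigma^{1/2}$, which exists even when $\Sigma$ is singular), and let $a_i'$ denote its $i$-th row. Then $X \overset{d}{=} A g$ with $g \sim \mathcal N(0,I_p)$, and $\abs{a_i}^2 = (A A')_{ii} = \Var(X_i) \le \sigma^2$ for every $i$.

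Next I would introduce $F\colon \mathbb R^p \to \mathbb R$, $F(z) \coloneqq \max_{1\le i \le p} \langle a_i, z\rangle$, so that $F(g) \overset{d}{=} \max_{1\le i\le p} X_i$ and, since the maximum runs over finitely many square-integrable coordinates, $\E[F(g)] = \E[\max_{1 \le i \le p} X_i]$ is finite. Using $\abs{\max_i u_i - \max_i v_i} \le \max_i \abs{u_i - v_i}$ together with the Cauchy--Schwarz inequality, for all $z, z' \in \mathbb R^p$ one has
\[
\abs{F(z) - F(z')} \le \max_{1\le i\le p} \abs{\langle a_i, z - z'\rangle} \le \Big(\max_{1\le i\le p} \abs{a_i}\Big)\, \abs{z - z'} \le \sigma\, \abs{z - z'},
\]
so $F$ is $\sigma$-Lipschitz.

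Finally I would invoke the Gaussian concentration inequality for Lipschitz functions (Borell's inequality, also attributed to Tsirelson--Ibragimov--Sudakov): for a $\sigma$-Lipschitz $F$ and $g \sim \mathcal N(0,I_p)$, $\Pr\big(F(g) - \E[F(g)] > \lambda\big) \le e^{-\lambda^2/(2\sigma^2)}$ for all $\lambda > 0$; transporting this back through $F(g) \overset{d}{=} \max_{1\le i\le p} X_i$ gives exactly the claim. The main point is not a genuine obstacle but a matter of sourcing: the result is essentially a specialization of a standard theorem, so the care needed is in the representation step (handling a possibly degenerate $\Sigma$) and in citing the Lipschitz concentration bound. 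If a self-contained derivation is preferred, one can instead obtain the sub-Gaussian bound $\E\big[e^{t(F(g) - \E F(g))}\big] \le e^{t^2 \sigma^2/2}$ from the Gaussian logarithmic Sobolev inequality via the Herbst argument --- after mollifying $F$ to justify $\abs{\nabla F} \le \sigma$ almost everywhere --- and then conclude by the Chernoff bound.
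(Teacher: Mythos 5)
Your proposal is correct and takes essentially the same route as the paper: represent $\max_{1\le i\le p} X_i$ as a Lipschitz function of a standard Gaussian vector via a square root of the covariance matrix and invoke the Gaussian (Borell--Tsirelson--Ibragimov--Sudakov) concentration inequality, which the paper cites as Lemma~A.2.2 of van der Vaart and Wellner (1996). Your Lipschitz verification via Cauchy--Schwarz, giving constant $\sigma$ directly in the Euclidean norm, is if anything slightly cleaner than the paper's H\"older-based bound for the rescaled function $f(z)=\sigma^{-1}\max_i (Az)_i$.
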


\begin{proof}
    Let $Z:=(Z_1,\ldots,Z_p)$ be a vector independent standard normal random variables and let $A$ be the symmetric square-root of $Var(X)$, $X:=(X_1,\ldots,X_p)$. For any $z\in\mathbb{R}^p$, let
    $$f(z) := \frac{1}{\sigma} \max_{i=1,\ldots,p} (Az)_i $$
    and note that $f$ is a Lipschitz function with Lipschitz constant $1$: for any $z_1,z_2\in\mathbb{R}^p$
    \begin{align*}
        |f(z_1)-f(z_2)|&= \frac{1}{\sigma}\left|  \max_{1\le i \le p} (Az_1)_i -  \max_{i=1,\ldots,p} (Az_2)_i\right|\\
        &\leq \frac{1}{\sigma}  \max_{1\le i \le p} \left|(A(z_1 -  z_2))_i\right|\\
        &\leq \frac{1}{\sigma}  \max_{1\le i \le p} \|A_{i\cdot}\|_{\infty}\; \|z_1 -  z_2\|_1\\
        &\leq \frac{1}{\sigma}  \left(\max_{1\le i \le p} A_{ii}^2\right)^{1/2}\; \|z_1 -  z_2\|_1\\
        &= \|z_1 -  z_2\|_1
    \end{align*}
    where the second inequality follows from H\"older's inequality. We can therefore apply \citet[][Lemma~A.2.2]{vanderVaart1996weak} to obtain
    \begin{align*}
        \Pr\left(\max_{1\le i \le p} X_i > \E\left[ \max_{1\le i\le p} X_i\right] + \lambda \right) &= \Pr\left(f(Z) - \E[ f(Z)] >  \frac{\lambda}{\sigma} \right) \leq e^{-\lambda^2/(2\sigma^2)}~,
    \end{align*}
    as desired.
\end{proof}

\section{Additional Simulation Results}\label{sec:additional_simulation_results}

\subsection{Additional Results for Experiment 1}
Figure~\ref{fig:SimI_tau05} complements the results from the main text by considering the case where $Y_1,\ldots,Y_p$ have baseline correlation. The results are very similar to those presented in Figure~\ref{fig:SimI_tau0}, which shows that our procedure is not sensitive to dependence between the individual test statistics.

\begin{figure}
	\centering
	\includegraphics[height=0.95\textheight]{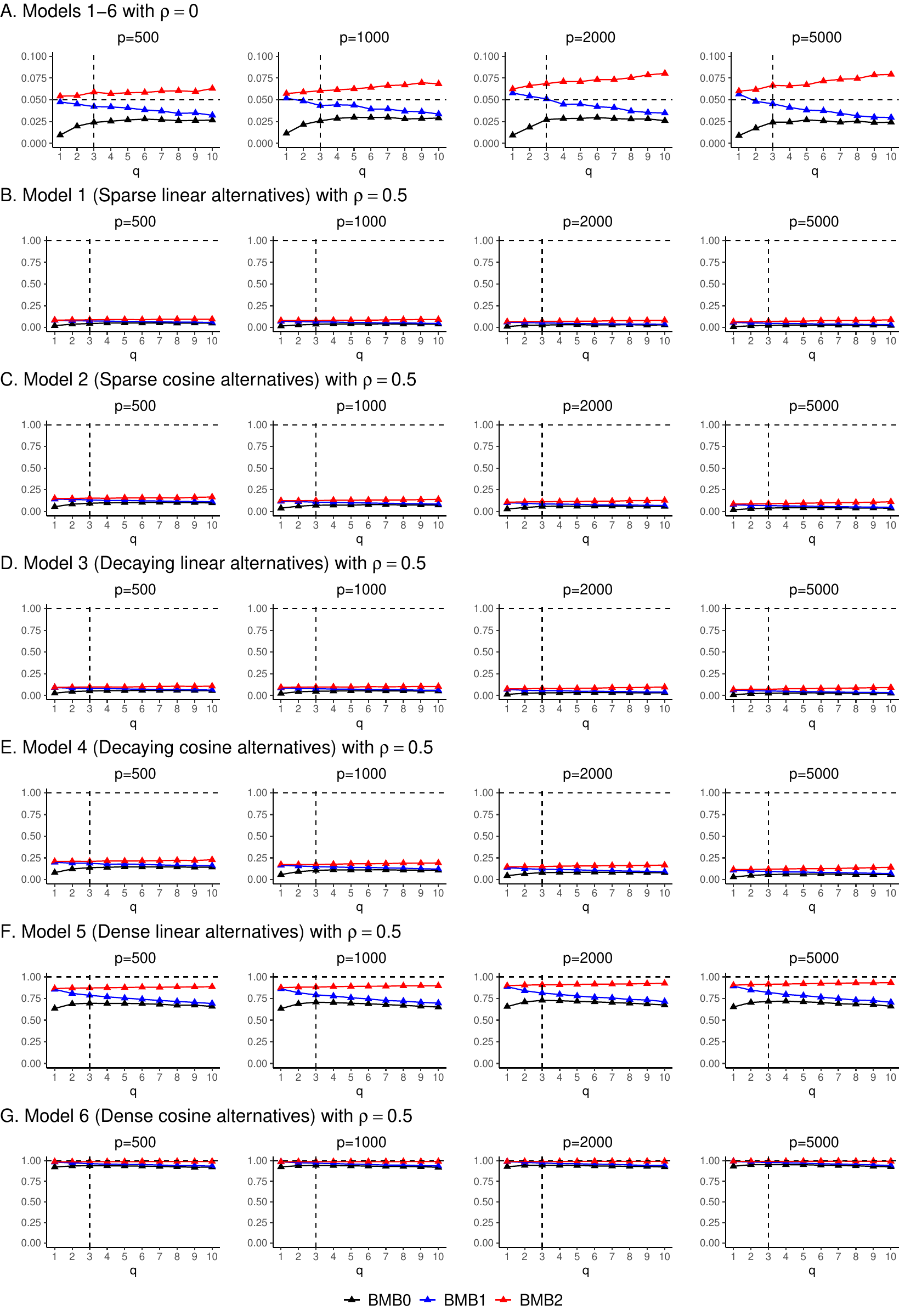}
	\caption{Rejection rates in Models 1--6 with $\tau=0.5$ under different choices of the big block size $q$.}
	\label{fig:SimI_tau05}
	\footnotesize{Notes: The tests have nominal level of 5\%. The null of independence holds if $\rho=0$, while $\tau$ denotes the correlation between variables $Y_1,\ldots, Y_p$ under the null. Results for sample size $n=500$, $B=499$ bootstrap replications, and $S=5,000$ Monte Carlo draws.}
\end{figure}

\subsection{Additional Results for Experiments 2 and 4}
\label{app:additional_results_for_simulation_ii}
In this section, we extend the results of Experiments 2 and 4 from the main text by including samples of size $n=200$ and $n=1000$. 
The results are presented in Figures~\ref{fig:SimII-start}--\ref{fig:SimII-end}. As expected, the rejection rates (weakly) increase with the sample size for any given combination of $p$ and $\rho$, but the qualitative patterns remain similar to those discussed in the main text.

\begin{figure}
	\centering
	\includegraphics[height=0.95\textheight]{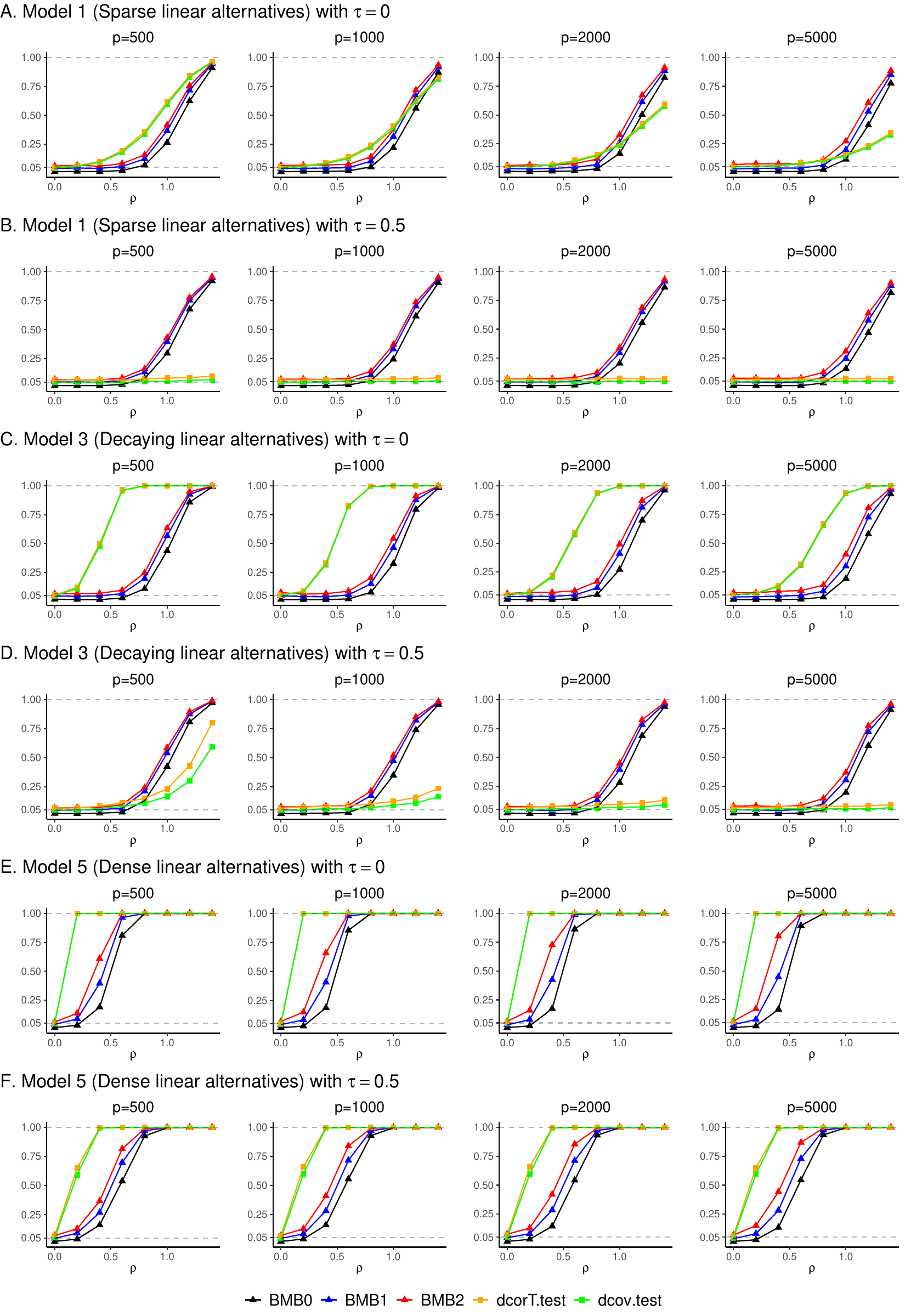}
	\caption{Experiments 2 and 4 for \textbf{linear alternatives} with $n=200$.}
	\footnotesize{Notes: The tests have nominal level of 5\%. Results are based on $S=5,000$ Monte Carlo draws. The BMB tests and the distance covariance test used $B=499$ bootstrap samples/replicates.}
	\label{fig:SimII-start}
\end{figure}

\begin{figure}
	\centering
	\includegraphics[height=0.95\textheight]{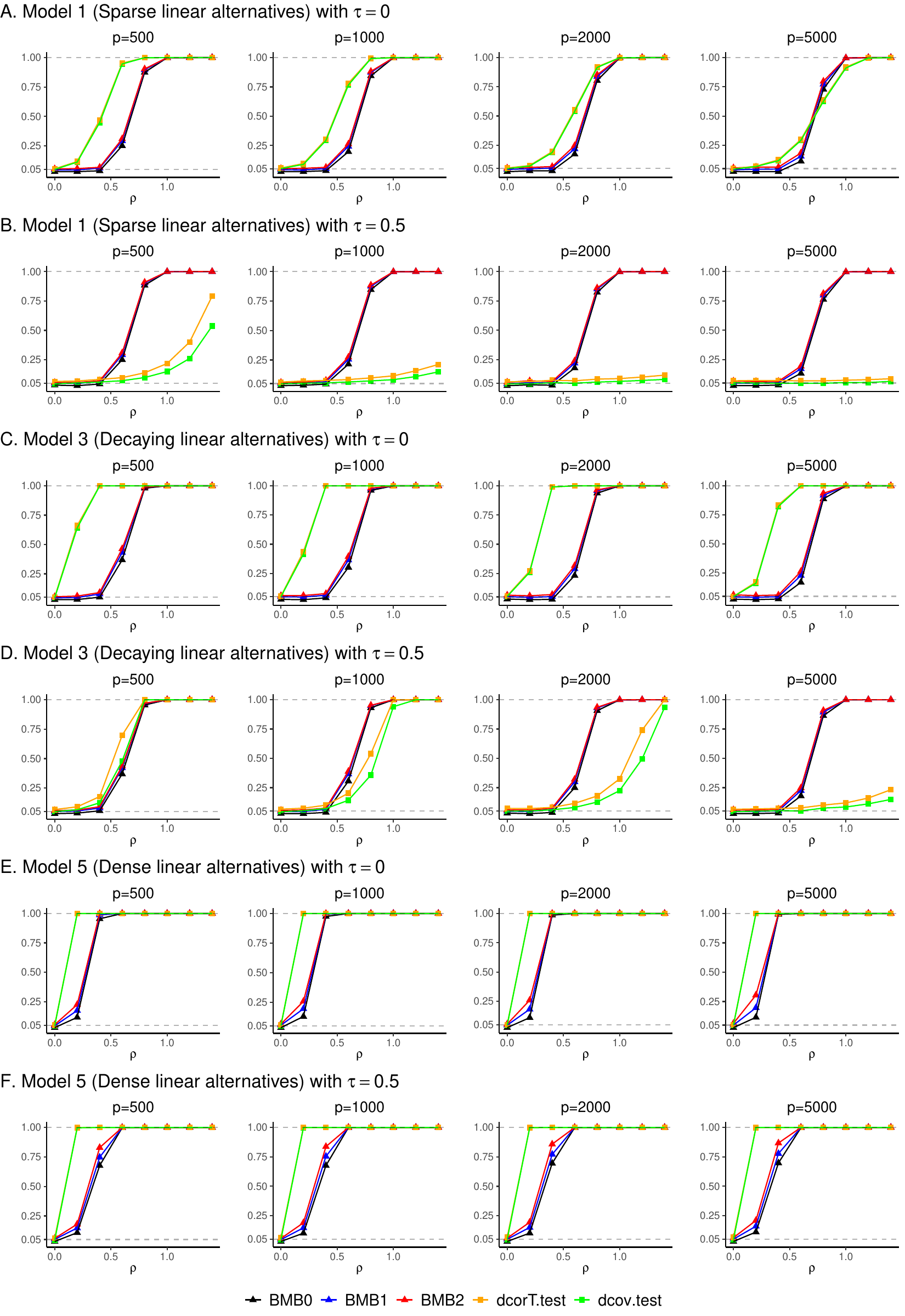}
	\caption{Experiments 2 and 4 for \textbf{linear alternatives} with $n=1000$.}
	\footnotesize{Notes: The tests have nominal level of 5\%. Results are based on $S=5,000$ Monte Carlo draws. The BMB tests and the distance covariance test used $B=499$ bootstrap samples/replicates.}
\end{figure}

\begin{figure}
	\centering
	\includegraphics[height=0.95\textheight]{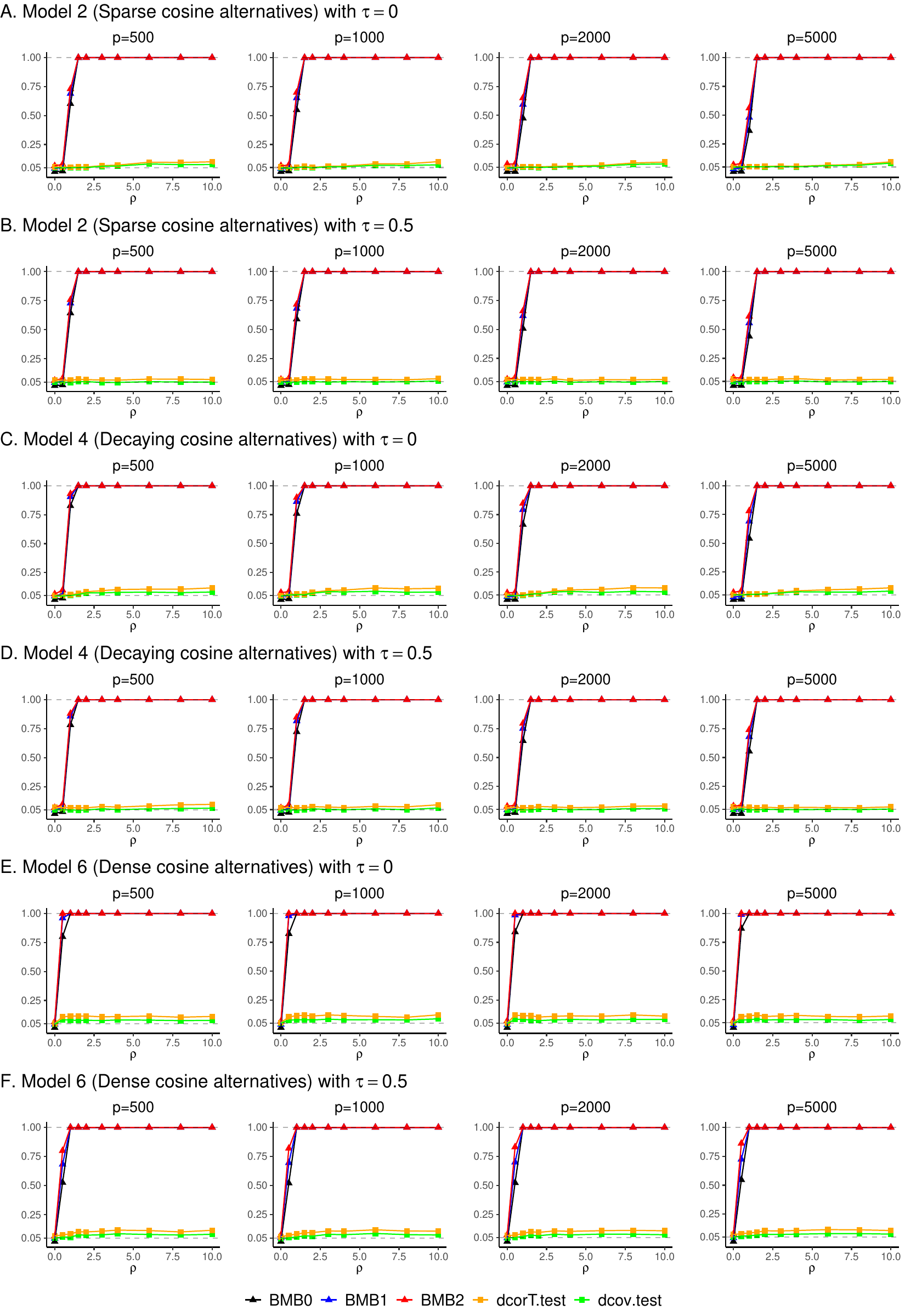}
	\caption{Experiments 2 and 4 for \textbf{cosine alternatives} with $n=200$.}
	\footnotesize{Notes: The tests have nominal level of 5\%. Results are based on $S=5,000$ Monte Carlo draws. The BMB tests and the distance covariance test used $B=499$ bootstrap samples/replicates.}
\end{figure}

\begin{figure}
	\centering
	\includegraphics[height=0.95\textheight]{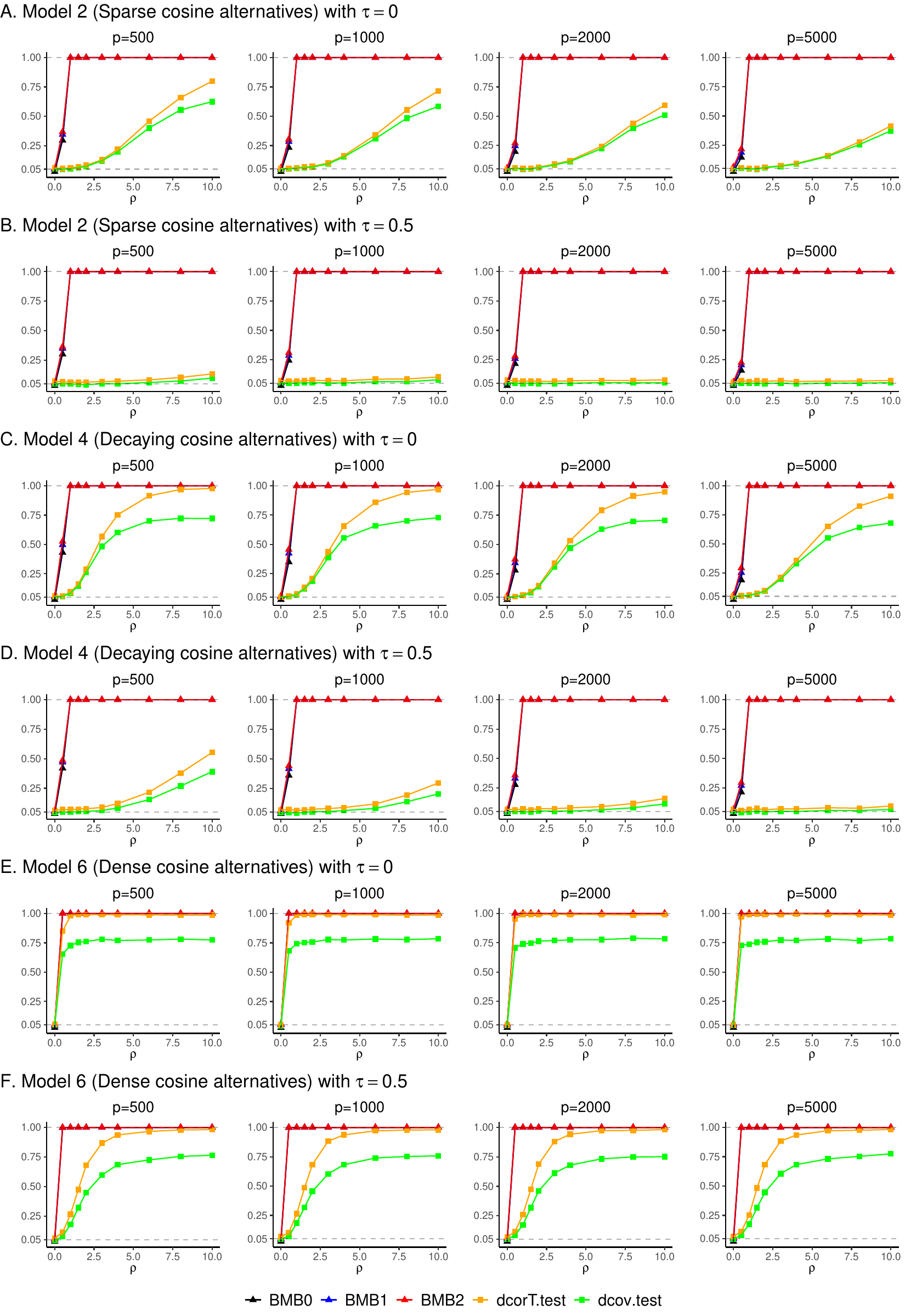}
	\caption{Experiments 2 and 4 for \textbf{cosine alternatives} with $n=1000$.}
	\footnotesize{Notes: The tests have nominal level of 5\%. Results are based on $S=5,000$ Monte Carlo draws. The BMB tests and the distance covariance test used $B=499$ bootstrap samples/replicates.}
	\label{fig:SimII-end}
\end{figure}

}{
}




\clearpage
\bibliographystyle{economet}
\bibliography{References}




\end{document}